\documentclass[12pt]{article}

\addtolength{\evensidemargin}{-0.07\textwidth}
\addtolength{\oddsidemargin}{-0.07\textwidth}
\addtolength{\textwidth}{0.14\textwidth}
\addtolength{\topmargin}{-0.06\textheight}
\addtolength{\textheight}{0.12\textheight}
\setlength\parindent{0pt}
\usepackage{amsmath}
\usepackage{amsthm}
\usepackage{amsfonts}
\usepackage[parfill]{parskip}
\usepackage{braket}
\usepackage{mathabx}
\usepackage{dsfont}
\usepackage{graphicx}
\usepackage{float}
\usepackage{color}

\DeclareMathOperator{\Tr}{Tr}
\DeclareMathOperator{\spec}{spec}
\DeclareMathOperator{\supp}{supp}
\DeclareMathOperator{\diam}{diam}
\DeclareMathOperator{\dd}{\mathbf{d}}

\newtheoremstyle{definitionnew}
{15pt} 
{15pt} 
{} 
{} 
{\bfseries} 
{.} 
{.5em} 
{} 

\newtheoremstyle{plainnew}
{15pt} 
{15pt} 
{\itshape} 
{} 
{\bfseries} 
{.} 
{.5em} 
{} 

\newtheoremstyle{remarknew}
{15pt} 
{15pt} 
{} 
{} 
{\bfseries} 
{.} 
{.5em} 
{} 

\theoremstyle{definitionnew}
\newtheorem{definition}{Definition}
\newtheorem{conjecture}{Conjecture}

\theoremstyle{plainnew}
\newtheorem{theorem}{Theorem}
\newtheorem{corollary}{Corollary}
\newtheorem{lemma}{Lemma}
\newtheorem{proposition}{Proposition}

\theoremstyle{remarknew}
\newtheorem{remark}{Remark}

\begin{document}
\centerline{\LARGE {\bf Entanglement Rates and the Stability of the Area Law}}
\centerline{\LARGE {\bf  for the Entanglement Entropy}}
\bigskip

\centerline{\large
Micha\"el Mari\"en$^{1}$,
Koenraad M.R.~Audenaert$^{1,2}$,
Karel Van Acoleyen$^{1}$,
Frank Verstraete$^{1,3}$
}
\medskip

\begin{center}
$^1$\,Department of Physics and Astronomy, Ghent University\\ Krijgslaan 281, S9, 9000 Ghent, Belgium
\end{center}
\begin{center}
$^2$\,Mathematics Department, Royal Holloway University of London\\
Egham TW20 0EX, United Kingdom
\end{center}
\begin{center}
$^3$\,Fakult\"at f\"ur Physik, Universit\"at Wien\\
Boltzmanngasse 5, A-1090 Wien, Austria
\end{center}

\begin{abstract}
\noindent
We prove a conjecture by Bravyi on an upper bound on entanglement
rates of local Hamiltonians. We then use this bound to prove
the stability of the area law for the entanglement entropy of quantum
spin systems under adiabatic and quasi-adiabatic evolutions. 

\end{abstract}

\tableofcontents

\section{Introduction}
\label{sec:intro}
Since its first appearance in  a series of ground breaking papers, 
entanglement has become one of the defining trademarks of quantum mechanics, appearing ubiquitously, both at the theoretical and experimental level.
The main goal of these first efforts to describe and understand this phenomenon was to show the incompleteness of the theory of quantum mechanics. However, entanglement has been experimentally verified and is nowadays considered an important feature of the theory and a valuable resource in quantum information and computation protocols. Because of its importance, it is not surprising that the dynamical properties of entanglement are also of great interest. A very important aspect of a physical system is the rate at which entanglement is, or can be created. 
The first step in many applications, notably in quantum optics, nuclear magnetic resonance and condensed matter physics, is the creation of entanglement, and much experimental effort has been devoted to optimise this process.

The entanglement rate is very important from a theoretical viewpoint as well and can be used in a wide variety of problems, since it is the dynamical version of one of the most fundamental concepts in quantum information theory, the von Neumann entropy. Some applications are straightforward \cite{Vidal,Bennett,cubitt2005entanglement,childs2004reversible,childs2002asymptotic}. Other applications are more surprising. For example in quantum computing one is interested in establishing bounds on the running time and quantum complexity of algorithms. Since these quantum algorithms are based on the phenomenon entanglement, they often generate it themselves as a by-product. Hence bounding how fast this generation can occur can also establish lower bounds on the running times of these algorithms \cite{ambainis2000quantum}.

On the other hand, the entanglement generated in a system can also have deleterious effects. Decoherence \cite{schlosshauer2005decoherence} is a result of the entanglement between  a system and its environment. This effect is very  undesirable in quantum computing, since it destroys the information stored in a coherent superposition of several qubits. It severely shortens the time a quantum computer can use to do reliable calculations and makes it hard to construct robust quantum memories \cite{shor1995scheme}. Bounds on the entanglement generation therefore also yield bounds on the decoherence time.

In the first part of this paper we answer the following fundamental question about entanglement dynamics. Given a Hamiltonian interaction between two subsystems $A$ and $B$, what is the maximal rate at which the corresponding unitary evolution can generate entanglement between these subsystems \cite{Vidal,Bennett,Bravyi}? We provide a tight upper bound on the maximal entanglement rate in the most general setting. 

In the second part of this paper we show that surprisingly, these dynamic properties also have important consequences for the static entanglement properties of quantum many-body systems. Indeed, as an application we prove a stability result for the area law of entanglement entropy. Over the last two decades it has been realized that ground states of gapped local Hamiltonians have very specific properties. One finds that the entanglement entropy of the reduced density matrix of such a ground state of a certain subregion scales with the size of the boundary of this region, instead of the volume law scaling of a typical quantum many body state \cite{page1993average,PhysRevLett.72.1148,hayden2006aspects}. It is exactly this boundary scaling that is referred to as the area law \cite{eisert2010colloquium}. This feature of the entanglement of ground states provides a new window on these systems that allows for a better understanding. Indeed, guided by the area law, efficient tensor network representations of the ground states of quantum many body systems have been proposed. The link between the area law and an efficient representation of the ground state was proven rigorously for one dimensional systems \cite{hastings2007area,verstraete2006matrix,landau2013polynomial}. These insights led to the development of several efficient numerical tensor network methods \cite{verstraete2008matrix}. Moreover, important theoretical advances, for instance the classification of different quantum phases of matter \cite{chen2011complete,schuch2011classifying}, were made through the use of these methods.

However, despite its importance, the area law has only been proven for gapped one-dimensional systems. Our results provide a step in the direction of a general proof for higher dimensions. The formalism of quasi-adiabatic continuation  induces an equivalence relation on the set of ground states of gapped Hamiltonians, these equivalence classes are commonly referred to as gapped quantum phases. Our bound on the entanglement rate allows us to show that for states in the same gapped quantum phase, a subsystem's entropy obeys the same scaling law in both states. Hence, the area law is a property of an entire quantum phase and it suffices to consider one single representative of a phase and show that it satisfies the area law.

Moreover, the stability of the area law of the entanglement entropy validates the use of this entropy as a good measure for quantum many body systems. Indeed, the strongest continuity bound on the entropy is given by the Fannes-Audenaert inequality \cite{fannes1973continuity,audenaert2007sharp}, which has a volume law scaling in the dimensions of the underlying Hilbert space. Unfortunately, we seldom know the exact ground state of a quantum many body system, but instead we have to rely on an approximation, often in the form of a tensor network state. As the Fannes-Audenaert inequality suggests that the entropy of quantum many body states is fragile against perturbations, this inequality is too weak to infer features of the entanglement of the true ground state from the approximation. In contrast, our result implies that such an inference is possible, the scaling of two states in the same phase differs at most by the area instead of the volume, hence giving a quantitative notion of the robustness of the entanglement entropy.

\section{Mixing and Entanglement Rates}\label{sec:twoconjectures}
\subsection{Small Incremental Mixing}\label{sec:SIM}
We start our discussion with a property of ensembles of quantum states known as \textit{small incremental mixing} (SIM). This property was first conjectured by Bravyi in \cite{Bravyi}. Part of the physical relevance of this property lies in its relation to the \textit{small entangling rate property} (SIE), which we introduce in subsection \ref{sec:SIE}. We discuss the connection between SIE and SIM in more detail in subsection \ref{sec:RelatingSIMSIE}.

The situation we consider is the following. Suppose we have a probabilistic ensemble of states $\mathcal{E}$. We immediately restrict ourselves for now to the case with only two different states, $\mathcal{E}= \{(p,\rho_1),(1-p,\rho_2)\}$. The expected state of the system is the convex combination of both, $\rho = p\rho_1+(1-p)\rho_2$. Let each state of this system evolve according to a different Hamiltonian, $H_0$ and $H_1$. Without loss of generality we can assume that $H_1 = 0$ and write $H_0=H$. The evolved expected density operator is then given by
\begin{equation}
\rho(t) = p\rho_1 + (1-p)e^{-iHt}\rho_2e^{iHt}.
\end{equation}
We are interested in the von Neumann entropy of this mixture. It is well known that the entropy of this state remains bounded from below and from above for any time $t$. The precise result is as follows.

\begin{proposition}\label{totalmix}
Let $\mathcal{E} = \{\rho_i,p_i\}$ be a probabilistic ensemble of states and let each state evolve according to a different Hamiltonian $H_i$. Then the entropy of the expected state satisfies 
\begin{equation}
\overline{S} \leq S(\rho(t)) \leq \overline{S} + \mathsf{h}(\{p_i\}).
\end{equation}
Here, $\overline{S} = \sum_i p_iS(\rho_i)$ is the average entropy of the ensemble, which is constant in time, and $\mathsf{h}(\{p_i\})$ is the Shannon entropy of the distribution $\{p_i\}_i$. This property is called \textbf{small total mixing}. 
\end{proposition}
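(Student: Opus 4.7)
The proposition packages two standard bounds on the von Neumann entropy of a mixture, applied to the ensemble $\{\sigma_i(t),p_i\}$ with $\sigma_i(t)=e^{-iH_it}\rho_ie^{iH_it}$. The first observation is that unitary evolution preserves spectrum, so $S(\sigma_i(t))=S(\rho_i)$ for every $i$ and every $t$; hence the ``average entropy'' $\overline{S}=\sum_i p_i S(\sigma_i(t))$ is indeed $t$-independent. Both inequalities then reduce to purely static statements about the convex combination $\rho(t)=\sum_i p_i\sigma_i(t)$.

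For the lower bound, I would just invoke concavity of the von Neumann entropy, which gives $S(\rho(t))\ge\sum_i p_iS(\sigma_i(t))=\overline{S}$. For the upper bound, the cleanest approach is the classical register trick. Adjoin an auxiliary system $C$ with an orthonormal basis $\{\ket{i}\}$ and consider the classical-quantum state
\begin{equation*}
\rho_{CS}(t)\;=\;\sum_i p_i\,\ket{i}\!\bra{i}\otimes\sigma_i(t).
\end{equation*}
Because the blocks are orthogonal, its spectrum is the disjoint union of the spectra of the $p_i\sigma_i(t)$, so a direct calculation of $-\Tr\rho_{CS}(t)\log\rho_{CS}(t)$ gives the exact identity $S(\rho_{CS}(t))=\mathsf{h}(\{p_i\})+\overline{S}$. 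The partial traces are $\rho_C=\sum_i p_i\ket{i}\!\bra{i}$ (with entropy $\mathsf{h}(\{p_i\})$) and $\rho_S=\rho(t)$. Since $\rho_{CS}(t)$ is manifestly separable, the conditional entropy $S(C|S)=S(\rho_{CS})-S(\rho_S)$ is non-negative, which yields $S(\rho(t))\le S(\rho_{CS}(t))=\overline{S}+\mathsf{h}(\{p_i\})$.

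If one prefers not to invoke non-negativity of conditional entropy for separable states, the same upper bound can be obtained by subadditivity applied in the other direction, combined with the fact that the quantum mutual information of a CQ state is bounded by the classical entropy $\mathsf{h}(\{p_i\})$ of its register; this is essentially the Holevo bound and the argument is textbook. There is no genuine obstacle here: the only small subtlety is making sure that the time-independence of $\overline{S}$ is stated cleanly and that the classical register is introduced so that its orthogonal block structure yields an exact entropy identity rather than just an inequality. Once those points are in place, the two inequalities drop out immediately.
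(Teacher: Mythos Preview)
Your proof is correct. The lower bound argument (unitary invariance plus concavity) matches the paper exactly, and for the upper bound you and the paper both introduce the classical-quantum state $\rho_{CQ}(t)=\sum_i p_i\ket{i}\bra{i}\otimes\sigma_i(t)$ and use the exact identity $S(\rho_{CQ}(t))=\mathsf{h}(\{p_i\})+\overline{S}$.

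The divergence is in how the inequality $S(\rho(t))\le S(\rho_{CQ}(t))$ is obtained. You invoke non-negativity of the conditional entropy $S(C|Q)$ for separable states (equivalently, the Holevo-type bound), which is a clean high-level fact but whose standard proof goes through concavity of conditional entropy, i.e.\ strong subadditivity. The paper instead gives a self-contained, more elementary argument: it forms the block column $R(t)$ with entries $\sqrt{p_i\rho_i(t)}$, observes that $\rho(t)=R(t)^{\dagger}R(t)$ while $\rho_{CQ}(t)$ is the pinching (block-diagonal part) of $R(t)R(t)^{\dagger}$, and concludes via $S(R^{\dagger}R)=S(RR^{\dagger})$ together with the fact that pinching can only increase entropy (plain concavity). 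Your route is shorter to state and perfectly acceptable if one is willing to quote SSA or the Holevo bound as textbook input; the paper's route trades that citation for a small linear-algebra trick that keeps the proof at the level of ordinary concavity.
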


The proof of this proposition relies on well-known properties of the von Neumann entropy. Observe that for 2 states, this proposition implies that the variation of the entropy goes to zero if $p_1$ or $p_2$ goes to zero, as one expects.
\begin{proof}
To prove the lower bound we use the following inequalities,
\begin{equation}
\overline{S} = \sum_ip_iS(\rho_i)
= \sum_i p_iS(\rho_i(t)) 
\leq S\left(\sum_i p_i \rho_i(t)\right) = S(\rho(t))
\end{equation}
where we first used the unitary invariance of the von Neumann entropy and then the concavity.\\
\\
To prove the upper bound we introduce a classical system $C$ and the joint state
\begin{equation}
\rho_{CQ}(t) = \sum_i p_i \ket{i}\bra{i} \otimes \rho_i(t).
\end{equation}
This state is a density operator on the tensor product of a classical system $C$ and the original Hilbert space. The entropy of this state is given by
\begin{equation}
S(\rho_{CQ}(t)) = \mathsf{h}(\{p_i\}) + \overline{S}.
\end{equation}
We now define the following matrix 
\begin{equation}
R(t)=\begin{pmatrix} \sqrt{p_1\rho_1(t)}\\
  \sqrt{p_2\rho_2(t)} \\
  \vdots \end{pmatrix}.
\end{equation}
Both density matrices $\rho_{CQ}(t)$ and $\rho(t)$ are related to the matrix $R(t)$. First, we note that $\rho_{CQ}(t)$ is a pinching of
\begin{equation}
R(t)R(t)^{\dagger} = \sum_{ij}\sqrt{p_ip_i}\ket{i}\bra{j}\otimes\sqrt{\rho_i(t)}\sqrt{\rho_j(t)}.
\end{equation}
From the concavity of the von Neumann entropy it follows that
\begin{equation}\label{prop1rel1}
S(R(t)R^{\dagger}(t)) \leq S(\rho_{CQ}(t)).
\end{equation}
Second, we have that $\rho(t) = R(t)^{\dagger}R(t)$. Moreover, the non zero part of the spectrum of $R(t)R^{\dagger}(t)$ and $R(t)^{\dagger}R(t)$ are equal. This implies that also the von Neumann entropy of both operators is equal,
\begin{equation}\label{prop1rel2}
S(R(t)R^{\dagger}(t)) = S(R(t)^{\dagger}R(t)) = S(\rho(t)).
\end{equation}
Combining the relations \eqref{prop1rel1} and \eqref{prop1rel2}, we obtain the desired upper bound.
\end{proof}
Given Proposition \ref{totalmix} it is natural to consider the immediate change of the entropy instead of the total change. Bravyi \cite{Bravyi} conjectured the following upper bound on this quantity.
\begin{theorem}[SIM]\label{theoremSIM}
Let $\mathcal{E}$ be a probabilistic ensemble of two states, let $\rho(t):=p_1\rho_1 + p_2e^{-iHt}\rho_2e^{iHt}$ be the expected state through time. Then there exists a  constant $c$ such that
\begin{equation}
\Lambda(\mathcal{E},H) := \left.\frac{dS(\rho(t))}{dt}\right|_{t=0} \leq c\|H\|\mathsf{h}(\{p_1,p_2\})
\end{equation}
independent of the ensemble $\mathcal{E}$ and of the details of the Hamiltonian $H$. This property is called \textbf{small incremental mixing (SIM)}.
\end{theorem}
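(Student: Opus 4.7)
The plan is to reduce the theorem to an $H$-independent operator inequality via Hölder's inequality, and then to establish that inequality by combining the integral representation of the matrix logarithm with the positive-semidefinite structure of the ensemble.

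First, I differentiate: since $\rho(t) = p_1\rho_1 + p_2 e^{-iHt}\rho_2 e^{iHt}$, we have $\dot\rho(0) = -ip_2[H,\rho_2]$, and the identity $dS(\rho)/dt = -\Tr(\dot\rho\log\rho)$ yields
\[
\Lambda = ip_2\Tr(H[\rho_2,\log\rho]).
\]
Because $[\rho,\log\rho] = 0$ and $\rho = p_1\rho_1 + p_2\rho_2$, one obtains the symmetric form $p_2[\rho_2,\log\rho] = -p_1[\rho_1,\log\rho]$. Setting $X := ip_2[\rho_2,\log\rho]$, which is Hermitian, one has $\Lambda = \Tr(HX)$ and hence $|\Lambda| \leq \|H\|\,\|X\|_1$ by Hölder. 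By the symmetry we may assume $p := p_2 \leq 1/2$, and the theorem reduces to the $H$-independent operator inequality $\|X\|_1 \leq c\,\mathsf{h}(\{p_1,p_2\})$.

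Next, I would use the integral representation $\log\rho = \int_0^\infty\!\bigl((1+\lambda)^{-1}I - (\rho+\lambda)^{-1}\bigr)d\lambda$ together with the identity $[A,B^{-1}] = -B^{-1}[A,B]B^{-1}$ to obtain
\[
[\rho_2,\log\rho] = p_1\int_0^\infty (\rho+\lambda)^{-1}[\rho_2,\rho_1](\rho+\lambda)^{-1}\,d\lambda.
\]
Equivalently, in the eigenbasis $\rho = \sum_k\lambda_k|k\rangle\langle k|$, the entries of this operator are $(\log\lambda_l - \log\lambda_k)(\rho_2)_{kl}$. The positive-semidefinite constraints $p_1\rho_1,\,p_2\rho_2\leq\rho$ restrict the diagonal entries to $(\rho_2)_{kk}\leq\min(1,\lambda_k/p)$ and, via the PSD of $\rho_2$ itself, the off-diagonal entries to $|(\rho_2)_{kl}|^2\leq (\rho_2)_{kk}(\rho_2)_{ll}$. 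A dyadic partition of the spectrum into scales $B_j := \{k:\lambda_k\in(2^{-j-1},2^{-j}]\}$, with the logarithmic factor contributing a growth linear in $|j-j'|$ that is offset by the decay encoded in the PSD bounds, should yield $\|X\|_1 \leq c\,p|\log p| \leq c\,\mathsf{h}(\{p_1,p_2\})$.

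The main obstacle is the final spectral summation: producing a constant $c$ that is truly independent of the Hilbert-space dimension. A naive use of only $p_2\rho_2\leq\rho$ (which gives $p_2|(\rho_2)_{kl}|\leq\sqrt{\lambda_k\lambda_l}$) yields only the weaker estimate $\|X\|_1 \leq c'\sqrt{p}\,|\log p|$, missing the SIM scaling by a factor of $\sqrt p$. Achieving the sharp bound requires using both PSD constraints simultaneously, together with the trace normalizations $\Tr\rho_i = 1$, to tighten the estimate on off-diagonal matrix elements in a coordinated way across scales. An alternative route, which may ultimately be cleaner, is to dilate the ensemble to the pure tripartite state $|\Phi\rangle := \sqrt{p_1}|1\rangle_C|\phi_1\rangle_{QR} + \sqrt{p_2}|2\rangle_C|\phi_2\rangle_{QR}$ and invoke a refined small-incremental-entangling bound across the $Q|CR$ cut for the Hamiltonian $|2\rangle\langle 2|_C\otimes H_Q$, whereby the usual ancilla-dimension factor $\log d_C$ is improved to the ancilla entropy $\mathsf{h}(\{p_1,p_2\})$.
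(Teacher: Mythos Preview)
Your reduction to the trace-norm inequality $\|p_2[\rho_2,\log\rho]\|_1 \leq c\,\mathsf{h}(p)$ and the spectral-partition strategy match the paper's approach exactly: the paper also writes $W=\Tr\bigl(H\cdot ip_2[\rho_2,\log\rho]\bigr)$ for a variational $H$ with $\|H\|\leq 1$, partitions the spectrum of $\rho$ into logarithmic scales, and splits $W=W'+W''$ into near-diagonal and far-off-diagonal blocks, handling the latter by Cauchy--Schwarz. However, your proposal has a genuine gap at precisely the point you yourself flag as the ``main obstacle.''

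The entrywise PSD bounds you list, $(\rho_2)_{kk}\leq\min(1,\lambda_k/p)$ and $|(\rho_2)_{kl}|^2\leq(\rho_2)_{kk}(\rho_2)_{ll}$, are not strong enough to close the far-scale sum with a dimension-free constant; as you note, they yield only $\sqrt{p}\,|\log p|$. The paper's key extra ingredient is the operator substitution $A:=p_2\rho_2 = \rho^{1/2}X\rho^{1/2}$ with $0\leq X\leq \mathds{1}$, whose crucial consequence is $X^2\leq X$ and hence $\sum_{l\in L}X_{kl}X_{kl}^{\dagger}\leq X_{kk}$ for any block-index set $L$. The Cauchy--Schwarz is then arranged so that one factor of $\rho^{1/2}$ is peeled off onto the variational $H$: one Hilbert--Schmidt factor sums to $\Tr A=p$ (via $X^2\leq X$), the other to $\Tr\rho=1$ (via $H^2\leq\mathds{1}$), giving $\sqrt{p}$; combined with a uniform $\sqrt{p}\,\log(1/p)$ coming from the $\log\rho$ commutator across far scales (Lemma~\ref{CSKit}), this yields $W''\leq 6p\log(1/p)$. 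Pointwise control of individual entries of $\rho_2$ cannot substitute for the collective operator estimate $X^2\leq X$. Note also that the paper partitions in base $p$ rather than base $2$, which is what makes the near-scale contribution $W'$ produce $p\log(1/p)$ directly via Lemma~\ref{lemmastrong}.

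Your alternative dilation route is circular in this paper's logic: SIE is derived \emph{from} SIM (Section~\ref{sec:RelatingSIMSIE}), not the reverse, and the ``refined SIE bound'' you would need---replacing $\log d_C$ by the ancilla entropy $\mathsf{h}(\{p_1,p_2\})$---is itself at least as hard as SIM.
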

In \cite{Bravyi} this conjecture was proven under certain restrictions. The first full proof for finite Hilbert spaces was given in \cite{Karel}, obtaining $c=9$. A better constant, $c=2$ was proven in \cite{Koenraad}, by a completely different method. Numerical analysis suggests that $c=1$ might be the sharpest constant possible. In this paper we prove Theorem \ref{theoremSIM} in the more general scenario of infinite dimensional separable Hilbert spaces. We employ the same method as in \cite{Karel}, with small adaptations that result in an increase of the constant to $c=11$.

\subsection{SIM for Larger Ensembles}\label{sec:SIMlargerensembles}
Until now we have focused on ensembles consisting of only two states. There are two reasons for this restriction. First of all, our primary interest is not the mixing rate but the entangling rate. The latter typically deals with a bipartition of the system, which corresponds to the mixing rate of just two states. We refer to subsection \ref{sec:RelatingSIMSIE} for a detailed explanation of the relation between SIM and SIE. Second, the general case can easily be deduced from the case with only two states, and the arguments are much clearer presented for a small ensemble. For completeness, we discuss the case of general ensembles in this subsection. We use some results we obtain further on in this paper. This subsection may be skipped on first reading.

In \cite{LiebVer} the following was conjectured with a constant $c=1$ for general probabilistic ensembles that consist of more than two states.
\begin{theorem}
For any probabilistic ensemble $\{(p_i,\rho_i)\}$, an upper bound on the mixing rate is given by a constant times the Shannon entropy of the distribution $\{p_i\}$,
\begin{equation}\label{liebvers}
\Lambda(\mathcal{E}) := \max\left\{|\Lambda(\mathcal{E},H)|: -\mathds{1} \leq H_i \leq \mathds{1}, \forall i\right\} \leq -\mathsf{h}(\{p_i\}) = -c\sum_i p_i\log p_i.
\end{equation}
Here $c$ is a constant independent of the ensemble.
\end{theorem}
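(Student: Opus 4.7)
The plan is to reduce the general $n$-state mixing problem to the two-state SIM of Theorem \ref{theoremSIM} via a linearity decomposition, and then to control the resulting sum of binary Shannon entropies by the full Shannon entropy of the distribution.

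The crucial structural observation is that $\Lambda(\mathcal{E},H)$ depends on the Hamiltonians only through the time derivative $\dot{\sigma}(0) = -i\sum_k p_k[H_k,\rho_k]$, because $\Tr\dot{\sigma}(0) = 0$ implies $\Lambda(\mathcal{E},H) = -\Tr\bigl(\dot{\sigma}(0)\log\sigma(0)\bigr)$. For each $k\in\{1,\dots,n\}$, introduce the auxiliary mixture
$$\sigma_k(t) := p_k\, e^{-iH_k t}\rho_k\, e^{iH_k t} + \sum_{j\neq k} p_j\,\rho_j,$$
in which only the $k$-th state evolves. Then $\sigma_k(0) = \sigma(0)$ for every $k$ and $\dot{\sigma}_k(0) = -i p_k[H_k,\rho_k]$, so $\sum_k \dot{\sigma}_k(0) = \dot{\sigma}(0)$. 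Because the logarithm is evaluated at the common point $\sigma(0)$, linearity gives the clean identity
$$\Lambda(\mathcal{E},H) \;=\; \sum_{k=1}^{n} \left.\frac{d\,S(\sigma_k(t))}{dt}\right|_{t=0}.$$

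Each $\sigma_k(t)$ is literally a two-state SIM ensemble: the state $\rho_k$ with weight $p_k$ evolves under $H_k$, and the complementary state $\tau_k := (1-p_k)^{-1}\sum_{j\neq k} p_j\rho_j$ with weight $1-p_k$ is stationary. Applying Theorem \ref{theoremSIM} with $\|H_k\|\le 1$ yields $|\dot{S}(\sigma_k)(0)| \le c\,\mathsf{h}(p_k,1-p_k)$, and the triangle inequality therefore produces
$$|\Lambda(\mathcal{E},H)| \;\le\; c \sum_{k=1}^{n} h(p_k),$$
where $h(p_k):=\mathsf{h}(p_k,1-p_k)$ is the binary Shannon entropy.

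The remaining task—and the main obstacle—is to replace the sum of binary entropies by a multiple of $\mathsf{h}(\{p_i\})$. I would prove the elementary inequality $\sum_k h(p_k) \le 2\,\mathsf{h}(\{p_i\})$, equivalently $\sum_k -(1-p_k)\log(1-p_k) \le -\sum_k p_k\log p_k$, by introducing the function $f(x):=x\log x - (1-x)\log(1-x)$, which is antisymmetric about $x=1/2$ (so $f(1-x)=-f(x)$), and observing that $g:=-f$ is nonnegative and concave on $[0,1/2]$ with $g(0)=0$, hence subadditive there. Because at most one probability can exceed $1/2$, the bound $\sum_k f(p_k)\le 0$ follows from applying subadditivity of $g$ to the remaining $p_k\in[0,1/2]$ (whose total mass is $\le 1/2$). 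This delivers the theorem with constant $2c$ in place of the conjectured sharp constant $c=1$; closing the factor of two would require a more careful argument than the term-by-term application of two-state SIM used here.
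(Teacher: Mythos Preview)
Your proof is correct and follows the same decomposition as the paper: both write $\Lambda(\mathcal{E},H)=-i\sum_k p_k\Tr(H_k[\rho_k,\log\rho])$ and bound each summand by the two-state result. The only difference is that the paper invokes the two-state bound in its sharper form \eqref{upperboundlog} (equivalently \eqref{SIMonlylog}), $\bigl|p_k\Tr(H_k[\rho_k,\log\rho])\bigr|\le c\,p_k\log(1/p_k)$, which sums directly to $c\,\mathsf{h}(\{p_i\})$ and thereby avoids both your auxiliary inequality $\sum_k \mathsf{h}(p_k,1-p_k)\le 2\,\mathsf{h}(\{p_i\})$ and the accompanying factor-of-two loss.
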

Lieb and Vershynina provided an upper bound 
\begin{equation}
\Lambda(\mathcal{E}) \leq 2\sum_i\sum_{j\neq i} \sqrt{p_ip_j}
\end{equation}
that is rather sharp for large values of the $p_i$ but unfortunately fails to capture the behaviour of the conjectured bound \eqref{liebvers} for small values of $p_i$. We give a proof of this bound with a larger constant that nevertheless does capture the small $p_i$ behaviour.
\begin{proof}
We can obtain an explicit expression of the mixing rate by calculating the derivative. We find that
\begin{equation}
\Lambda(\mathcal{E},H) = -i\sum_ip_i\Tr\left(H_i[\rho_i,\log\rho]\right).
\end{equation}
Assuming that SIM for an ensemble of two states holds, which we prove later on, we can apply the upper bound given in equation \eqref{upperboundlog} to each term separately. We restate the bound \eqref{upperboundlog} here for convenience of the reader,
\begin{equation}
\left|\Tr(\log(B)[A,H])\right|\leq 11 p\log(1/p).
\end{equation}
The conditions on the operators $A,B$ can be found in the statement of Theorem \ref{theoremcomm}.
We now have that
\begin{equation}
\Lambda(\mathcal{E},H) \leq 11\sum_i p_i\log\frac{1}{p_i} = 11 \mathsf{h}(\{p_i\}).
\end{equation}
Using the tighter bound obtained by one of the authors \cite{Koenraad}, we obtain a constant 4 instead of 11. The  constant $c=4$ is only valid for finite dimensional Hilbert spaces.
\end{proof}

\subsection{Small Incremental Entangling}\label{sec:SIE}
In this subsection we discuss another property, also first conjectured in \cite{Bravyi}, and attributed by its author to Kitaev. This property is called \textit{small incremental entangling} (SIE) and is physically more relevant than SIM.
We consider two parties, Alice and Bob who both have a Hilbert space, $A$ respectively $B$, at their disposal. 
Clearly, an interaction Hamiltonian $H_{AB}$ can create or destroy entanglement between the two parties. We are interested in the maximal rate at which an interaction can change the bipartite entanglement through time.\\
\\
Without ancillas, a tight upper bound can easily be proven as was done in \cite{Bravyi}, or using different methods in \cite{Hutter}. We consider the more general and interesting case of ancilla assisted entangling. Apart from the systems $A,B$, Alice and Bob each have an ancilla $a,b$ respectively, but the interaction Hamiltonian $H_{AB}$ only acts on the systems $A,B$. The initial state $\ket{\psi_0}$ of the system $aABb$ is assumed to be pure, hence it stays pure throughout its evolution. The state $\ket{\psi_0}$ can have initial entanglement that can dramatically change the rate at which entanglement can be created \cite{Vidal}. Given that we are interested in an upper bound on this rate, independently of the initial state and its entanglement,  we only assume that the initial state is pure. Mathematically, the quantity we study is given by
\begin{equation}
\Gamma(H,\psi) = \left.\frac{dS(\rho_{aA}(t))}{dt}\right|_{t=0}
\end{equation}
where $\rho_{aA}$ is the reduced density matrix of the total state on subsystem $aA$ and
\begin{equation}
\rho_{aA}(t) = \Tr_{Bb} \left(e^{-iHt}\ket{\psi_0}\bra{\psi_0}e^{iHt}\right)  \text{ with } H= \mathds{1}_a \otimes H_{AB} \otimes \mathds{1}_b.
\end{equation}
In the absence of ancillas, Bravyi proved that
\begin{equation}\label{finalbound}
\Gamma(H) \leq c\|H\| \log(d)
\end{equation}
with $c$ a constant and $d$ the smallest of the dimensions of systems $A,B$. Moreover, he conjectured that this bound also holds in the presence of ancillas but could not prove it.\\
\\
Since this problem has significant importance in the optimal creation of entanglement, it has been studied by several authors. In \cite{Vidal} the optimal initial state to create entanglement in the absence of ancillas was identified. Moreover, the authors found that in general, ancillas can increase the entanglement rate, showing the relevance of the presence of ancillas. In \cite{childs2002asymptotic}, the asymptotic entanglement rate of interaction Hamiltonians between two qubit systems was studied in detail. In \cite{wang2002entanglement}, arbitrary dimensions and ancillas were considered with the interaction given by a self-inverse product Hamiltonian. Under these restrictions, the bound $\Gamma(H) \leq \beta$, with $\beta \approx 1.9123$, was obtained. This result was generalized to arbitrary bipartite product Hamiltonians in \cite{childs2004reversible}. The first general bound independent of the dimension of the ancillas was proven in \cite[bound 4]{Bennett}. For a general Hamiltonian $H_{AB}$ the authors argued that $\Gamma(H)\leq cd^4\|H\|$ with $d = \min(A,B)$ and $c$ a constant, independent of the ancillas $a,b$. The work of Bravyi \cite{Bravyi} implies a refinement, of the form $\Gamma(H)\leq 2\|H\|d^2$. Finally, the results obtained in \cite{LiebVer} imply a bound of the form $\Gamma(H) \leq 4/\ln(2)\|H\|d$, which is still exponentially weaker than the conjectured bound \eqref{finalbound} for large values of $d$. In this paper we prove the logarithmic bound \eqref{finalbound}, see Theorem \ref{SIE}, which is known to be tight.\\
\\
To motivate the conjectured bound \eqref{finalbound} we first prove a simple upper bound on the total change of the entropy.
As in the case of SIM, we can bound the total change of entanglement throughout time.
\begin{proposition}\label{totalent}
For the system described above, the total change of entanglement through time is bounded from above by 
\begin{equation}\label{eqtotalent}
\Delta S(\rho_{aA}(t)) \leq 2\log d 
\end{equation}
with $d = \min\{\dim A, \dim B\}$. This property is called \textbf{small total entangling}.
\end{proposition}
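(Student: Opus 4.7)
The plan is to exploit two features that are preserved by the evolution: the local state on each ancilla, and the Schmidt symmetry of a pure global state. Combined with the Araki–Lieb triangle inequality, these reduce the problem to a one-line estimate.

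First I would observe that since $H = \mathds{1}_a \otimes H_{AB} \otimes \mathds{1}_b$ acts trivially on the ancillary systems $a$ and $b$, tracing out $ABb$ (respectively $aAB$) gives
\begin{equation}
\rho_a(t) = \rho_a(0), \qquad \rho_b(t) = \rho_b(0).
\end{equation}
Indeed, for any observable $O_a$ supported on $a$, $\Tr(\rho_a(t)\,O_a) = \langle \psi_0 | U^\dagger (O_a \otimes \mathds{1}_{ABb}) U | \psi_0\rangle$, and $U$ commutes with $O_a \otimes \mathds{1}_{ABb}$ because its nontrivial action lies entirely in $AB$. Hence $S(\rho_a)$ and $S(\rho_b)$ are constants of motion.

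Next I would use purity of the global state, which is preserved by unitary evolution, to obtain the Schmidt identity $S(\rho_{aA}(t)) = S(\rho_{Bb}(t))$ at all times. Applying the Araki–Lieb triangle inequality to the bipartition $aA$ versus the subsystem $a$ gives
\begin{equation}
|S(\rho_{aA}(t)) - S(\rho_a(t))| \leq S(\rho_A(t)) \leq \log \dim A,
\end{equation}
so $S(\rho_{aA}(t))$ and $S(\rho_{aA}(0))$ each lie within $\log \dim A$ of the time‑independent value $S(\rho_a)$. By the triangle inequality in $\mathbb{R}$,
\begin{equation}
|S(\rho_{aA}(t)) - S(\rho_{aA}(0))| \leq 2\log \dim A.
\end{equation}
Repeating the same argument on the $Bb$ side, using $S(\rho_{aA}(t)) = S(\rho_{Bb}(t))$ and the invariance of $\rho_b$, yields the analogous bound $2\log \dim B$. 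Taking the smaller of the two produces the claimed estimate $\Delta S(\rho_{aA}(t)) \leq 2 \log d$.

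There is really no hard step here; the only point that might warrant emphasis is why the ancilla dimensions do not enter the bound even though $aA$ and $Bb$ can be arbitrarily large. The reason is precisely that the Araki–Lieb inequality is applied with the local system ($A$ or $B$) on the small side, while the ancilla contribution is absorbed into the constant-in-time quantity $S(\rho_a)$ (respectively $S(\rho_b)$). This is what makes the bound uniform in $\dim a$ and $\dim b$ and mirrors the strategy that will be needed for the sharper rate bound \eqref{finalbound}.
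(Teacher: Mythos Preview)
Your proof is correct and follows essentially the same route as the paper's second argument: both pick a time-invariant entropy as a pivot and then use the Araki--Lieb/subadditivity pair to bound the deviation of $S(\rho_{aA})$ from that pivot by the entropy of the small interacting subsystem. The only cosmetic difference is the choice of pivot---you use $S(\rho_a)$ (constant because $H$ does not touch $a$), whereas the paper uses $S(\rho_{aAB})=S(\rho_b)$ (constant because $H$ does not touch $b$); the remaining steps are identical.
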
 
\begin{proof}
The proof is based on the following observation \cite{Bennett}. Every non-local unitary gate $U_{AB}$ can be simulated by first teleporting system $A$ to system $B$, perform the gate and teleport $A$ back. The amount of entanglement consumed in such a double teleportation is exactly $2\log A$. 

Alternatively, we can give a proof based on the following inequalities. Suppose $d=\dim(B)\leq \dim(A)$. Denote by $\rho_{aA}$ and $\tilde{\rho}_{aA}$ the reduced density matrix of the system before and after applying the unitary $U_{AB}$ respectively. We have that
\begin{align}
\left|S(\rho_{aA}) - S(\tilde{\rho}_{aA})\right| &=
\left|S(\rho_{aA})-S(\rho_{aAB})+S(\tilde{\rho}_{aAB}) - S(\tilde{\rho}_{aA})\right|\\
&\leq \left|S(\rho_{aA})-S(\rho_{aAB})\right| + \left|S(\tilde{\rho}_{aAB}) - S(\tilde{\rho}_{aA})\right|\\
& \leq S(\rho_B) + S(\tilde{\rho}_B)\\
&\leq 2\log(d).
\end{align}
In the second line we used the fact that $S(\rho_{aAB})=S(\tilde{\rho}_{aAB})$ since $U_{AB}$ does not act on system $b$. The last inequality follows from the subadditivity of the von Neumann entropy. The bound \eqref{eqtotalent} is tight, as setting $U_{AB}$ equal to the swap gate shows. 
\end{proof}
Kitaev proposed a related upper bound on the maximal rate at which the entanglement can change.
\begin{theorem}[SIE]\label{SIE}
Denote $d = \min\{\dim A, \dim B\}$, then there is a constant $c$ such that
\begin{equation}\label{SIEbound}
\Gamma(H,\psi) \leq c\|H\|\log d
\end{equation}
independently of the dimensions of the ancillas $a,b$, the initial state $\ket{\psi}$ of $aABb$ and the details of the interaction Hamiltonian $H$. We call this property \textbf{small incremental entangling (SIE)}.
\end{theorem}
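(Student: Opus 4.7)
The plan is to derive SIE from SIM (Theorem \ref{theoremSIM} in its multi-state extension proved in Section \ref{sec:SIMlargerensembles}), using the elementary inequality $\mathsf{h}(\{p_i\}) \leq \log d$ valid for any distribution supported on at most $d$ outcomes. Since $|\psi\rangle$ is pure we have $S(\rho_{aA}) = S(\rho_{Bb})$, so the two entropy rates coincide; accordingly we may assume without loss of generality that $d = \dim B \leq \dim A$.

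The reduction begins with the Schmidt decomposition of the initial state across the cut $B\,|\,aAb$,
\begin{equation*}
|\psi\rangle = \sum_{i=1}^{d} \sqrt{q_i}\,|i\rangle_B \otimes |\chi_i\rangle_{aAb},
\end{equation*}
where $\{|i\rangle_B\}$ diagonalises $\rho_B$ and $\{|\chi_i\rangle_{aAb}\}$ is orthonormal. This realises $\rho_{aAb}$ as a probabilistic ensemble $\mathcal{E} = \{(q_i,|\chi_i\rangle\langle\chi_i|)\}_{i=1}^d$ of at most $d$ pure states on $aAb$. I would then argue that $\Gamma(H,\psi)$ agrees, at first order in $t$, with the mixing rate $\Lambda(\mathcal{E},\{\tilde H_i\})$ of this ensemble under the effective Hamiltonians $\tilde H_i = \langle i|_B\, H_{AB}\, |i\rangle_B$ acting on $A$. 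The point is that the off-diagonal blocks $\langle i|_B H_{AB}|j\rangle_B$ for $i\neq j$ only produce coherences between distinct $|\chi_i\rangle$ and $|\chi_j\rangle$; once $Bb$ is traced out to recover $\rho_{aA}$, these contributions vanish at first order, so only the block-diagonal part of $H_{AB}$ in the Schmidt basis of $B$ survives in $\dot S(\rho_{aA})|_{t=0}$.

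Once the reduction is in place, the multi-state SIM bound of Section \ref{sec:SIMlargerensembles} yields
\begin{equation*}
\Gamma(H,\psi) \;=\; \Lambda(\mathcal{E},\{\tilde H_i\}) \;\leq\; c\,\max_i \|\tilde H_i\|\,\mathsf{h}(\{q_i\}) \;\leq\; c\,\|H\|\,\log d,
\end{equation*}
using $\|\tilde H_i\| \leq \|H_{AB}\| = \|H\|$ together with $\mathsf{h}(\{q_i\}) \leq \log d$. Since $d = \min(\dim A,\dim B)$ by the initial reduction, this is exactly \eqref{SIEbound}. Crucially, the bound is independent of $\dim a$ and $\dim b$, because the ancilla dimensions enter the construction only through the space in which the $|\chi_i\rangle$ live, not through the number of terms in the ensemble.

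The hardest step is making the reduction in the second paragraph fully rigorous. Concretely, one must (i) verify that the off-block-diagonal part of $H_{AB}$ in the Schmidt basis of $B$ contributes nothing to the first-order derivative of $S(\rho_{aA}(t))$; (ii) check that the induced effective Hamiltonians on the ensemble members can be chosen with operator norms bounded by $\|H\|$; and (iii) extend the whole construction to the infinite-dimensional separable Hilbert space setting under which the paper proves SIM, which typically requires a truncation-and-continuity argument analogous to the one used to lift SIM itself from finite to infinite dimensions.
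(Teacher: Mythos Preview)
Your reduction to multi-state SIM does not work; the claimed identity $\Gamma(H,\psi)=\Lambda(\mathcal{E},\{\tilde H_i\})$ is false. There are really two problems. First, the ensemble $\mathcal{E}=\{(q_i,|\chi_i\rangle\langle\chi_i|)\}$ lives on $aAb$, so its mixing rate tracks $\dot S(\rho_{aAb})=\dot S(\rho_B)$, whereas $\Gamma(H,\psi)=\dot S(\rho_{aA})=\dot S(\rho_{Bb})$; these are different quantities as soon as the ancilla $b$ is present. Second, and more fatally, the assertion that the off-diagonal blocks $H_{ji}=\langle j|_B H_{AB}|i\rangle_B$ do not contribute at first order is the opposite of what actually happens. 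Tracing over $B$ does \emph{not} kill the contribution of $H_{ji}$, because $|\psi\rangle\langle\psi|$ already carries the $B$-coherence $|i\rangle\langle j|_B$, and $H_{ji}$ converts it into a diagonal $|j\rangle\langle j|_B$ piece that survives the partial trace. A short computation in the ancilla-free case makes this explicit: with $|\psi\rangle=\sum_i\sqrt{q_i}\,|i\rangle_B|\chi_i\rangle_A$ one finds
\[
\Gamma(H,\psi)=i\sum_{i\neq j}\sqrt{q_iq_j}\,(\log q_j-\log q_i)\,\langle\chi_j|H_{ji}|\chi_i\rangle,
\]
so the diagonal blocks $H_{ii}=\tilde H_i$ contribute nothing and the entire rate comes from the terms you discarded. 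Correspondingly, your $\Lambda(\mathcal{E},\{\tilde H_i\})$ vanishes identically here (the $|\chi_i\rangle$ are orthonormal eigenvectors of $\rho_A$), while $\Gamma$ is generically nonzero; e.g.\ take $q_1\neq q_2$ and $H_{AB}=X_A\otimes Y_B$.

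The paper's route is structurally different and avoids any Schmidt decomposition across $B$. After absorbing $a$ into $A$ it rewrites $\Gamma(H,\psi)=-i\Tr(H[\rho_{AB},\log\tau_{AB}])$ with $\tau_{AB}=\rho_A\otimes\mathds{1}_B/d$, then uses Lemma~\ref{extens} to realise $\tau_{AB}$ as a \emph{two}-state mixture $\tfrac{1}{d^2}\rho_{AB}+(1-\tfrac{1}{d^2})\mu_{AB}$. This matches the two-state SIM expression \eqref{mixingrate} exactly with $p=1/d^2$, whence $-p\log p=\tfrac{2}{d^2}\log d$ yields the bound. The $\log d$ does not come from a Shannon entropy over $d$ outcomes, but from the choice $p=1/d^2$ in a two-state ensemble; this is the step your argument is missing.
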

As it is implied that $H$ is an interaction Hamiltonian only acting on $A,B$, we dropped the explicit subscript for clarity. Bravyi proved this upper bound under certain restrictions. Using his method, a proof for this conjecture was first obtained in \cite{Karel}. As we explain in subsection \ref{sec:RelatingSIMSIE}, our method gives a constant $c$ that is twice the constant obtained for the SIM Theorem \ref{theoremSIM}. In this paper we obtain a constant $c=22$. Based on numerical evidence, we expect $c=2$ to be the optimal value.

Bravyi already showed that the logarithmic scaling in the bound \eqref{SIEbound} is tight; surprisingly this is already true for systems without ancillas \cite{Bravyi}.

One can wonder about the importance of using ancillas. As already noted, examples are known where the use of ancillas allows for a larger entangling rate. Furthermore, several phenomena have been studied in the literature where local ancillas and some initial entanglement between a part of the system and an ancilla can have unexpected results. The well known example of the swap operator illustrates this for the entanglement rate \cite{Bennett}. If no ancillas are present this operator cannot change the entanglement between $A$ and $B$. However, if both $A,B$ have an identical copy as ancilla and we start from the state $\ket{M_{aA}}\otimes\ket{M_{Bb}}$ with $\ket{M}$ a maximally entangled state, it is clear that the swap operator creates the maximal amount of entanglement as given in Proposition \ref{totalent}. 

There are other manifestations of the importance of ancillas. For example in \cite{locking} it was shown that the mutual information can violate a property called incremental proportionality. In the presence of ancillas it is possible to increase the classical mutual information between two parties by an arbitrary amount just by sending a single qubit. In contrast, the mutual information does satisfy a property called total proportionality, which can be considered the analogue of Proposition \ref{totalmix} and \ref{totalent}. At first sight, there is no reason to believe the same cannot happen for the von Neumann entropy. Is it possible to lock entanglement in a state, such that in a short time an interaction can free this entanglement and hence allow for an arbitrary change in the entropy? As we shall prove, this is not the case.

\subsection{Relating SIM and SIE}\label{sec:RelatingSIMSIE}
At first sight, SIM and SIE look like two rather unrelated dynamical properties of the entanglement entropy. In this subsection we show that SIM is actually a stronger version of SIE. This connection was made by Bravyi in \cite{Bravyi}, but for completeness we repeat his argument here in detail.\\
\\
To make the connection between the quantities $\Lambda$ and $\Gamma$, we first give explicit expressions for both. An easy calculation shows that the mixing rate is given by
\begin{equation}\label{mixingrate}
\Lambda(\mathcal{E},H) = -ip_2 \Tr\left(H[\rho_2,\log(\rho)]\right) \text{ with } \rho := p_1\rho_1+p_2\rho_2.
\end{equation}
Similarly we can work out the derivative and obtain an expression for the entangling rate,
\begin{equation}\label{entrate}
\Gamma(H,\psi) = -i\Tr\left(\mathds{1}_a \otimes H_{AB} [\rho_{AaB},\log(\rho_{Aa})\otimes \mathds{1}_B]\right).
\end{equation}
Without loss of generality we assume that $d = \dim B \leq \dim A$. Since the bound in Theorem \ref{SIE} only depends on the smallest dimension, we can extend $A$ to $a \otimes A$, hence we can assume that $\dim a = 1$ and reduce the total system to $ABb$. We now define the state 
\begin{equation}
\tau_{AB} = \rho_A \otimes \frac{\mathds{1}_B}{d}
\end{equation}
and rewrite
\begin{equation}\label{SIErewritten}
\Gamma(H,\psi) = -i\Tr\left(H[\rho_{AB},\log\tau_{AB}]\right).
\end{equation}
This last expression already starts to look like the mixing rate \eqref{mixingrate}. We continue by defining a well suited probabilistic ensemble $\mathcal{E}$ to complete the reduction of SIE to SIM. By comparing the equations \eqref{mixingrate} and \eqref{SIErewritten} it clear that we want $\tau_{AB}$ to be the expected state of an ensemble of which $\rho_{AB}$ is one of the constituents. The following simple lemma, taken from \cite{Bravyi}, shows that such an ensemble exists.
\begin{lemma}[\cite{Bravyi}]\label{extens}
Let $\rho_{AB}$ be a mixed state. Then there exists another mixed state $\mu_{AB}$ such that
\begin{equation}
\rho_A \otimes \frac{\mathds{1}_B}{d} = \frac{1}{d^2} \rho_{AB} + \left(1-\frac{1}{d^2}\right)\mu_{AB}.
\end{equation}
\end{lemma}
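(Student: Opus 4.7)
The plan is to exhibit $\mu_{AB}$ explicitly via a twirl of $\rho_{AB}$ over a unitary 1-design on system $B$, and then observe that one term of the average can be peeled off to give the desired convex decomposition.

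Concretely, let $\{W_k\}_{k=1}^{d^2}$ be the collection of Heisenberg--Weyl (generalized Pauli) operators on $\mathbb{C}^d$, i.e.\ the set $\{X^a Z^b : 0 \le a,b \le d-1\}$ where $X$ is the cyclic shift and $Z$ the clock operator. I would choose the indexing so that $W_1 = \mathds{1}_B$. A standard computation (diagonal matrix units are sent to $\mathds{1}/d$ by conjugation-averaging over $\{Z^b\}$, and off-diagonal matrix units are annihilated by conjugation-averaging over $\{X^a\}$) shows that the map $M \mapsto \frac{1}{d^2}\sum_k W_k M W_k^\dagger$ is the completely depolarizing channel on $B$, so that
\begin{equation}
\frac{1}{d^2}\sum_{k=1}^{d^2} (\mathds{1}_A \otimes W_k)\,\rho_{AB}\,(\mathds{1}_A \otimes W_k^\dagger) \;=\; \rho_A \otimes \frac{\mathds{1}_B}{d}.
\end{equation}

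Next I would define
\begin{equation}
\mu_{AB} := \frac{1}{d^2-1}\sum_{k=2}^{d^2} (\mathds{1}_A \otimes W_k)\,\rho_{AB}\,(\mathds{1}_A \otimes W_k^\dagger),
\end{equation}
which is a convex combination of unitary conjugates of $\rho_{AB}$, hence a legitimate density matrix on $AB$. Splitting off the $k=1$ term of the twirl above yields
\begin{equation}
\rho_A \otimes \frac{\mathds{1}_B}{d} \;=\; \frac{1}{d^2}\rho_{AB} + \left(1-\frac{1}{d^2}\right)\mu_{AB},
\end{equation}
which is the claimed identity.

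There is no real obstacle here; the only substantive ingredient is the 1-design property of the Heisenberg--Weyl operators, which is elementary. A conceptually equivalent route, if one prefers to avoid explicit designs, is to invoke Haar-averaging directly: set $\mu_{AB} = (d^2-1)^{-1}\bigl(d^2 \int dU\,(\mathds{1}_A \otimes U)\rho_{AB}(\mathds{1}_A\otimes U^\dagger) - \rho_{AB}\bigr)$ and check positivity, but this requires justifying that the subtraction still leaves a positive operator, which is precisely what the 1-design argument gives for free.
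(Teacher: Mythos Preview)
Your argument is correct and complete: the Heisenberg--Weyl operators form a unitary 1-design, so the local twirl on $B$ produces $\rho_A\otimes\mathds{1}_B/d$, and peeling off the identity term gives the desired convex decomposition with $\mu_{AB}$ manifestly a density operator.

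The paper takes a different route. It observes that the existence of $\mu_{AB}$ is equivalent to the operator inequality $\rho_{AB}\le d(\rho_A\otimes\mathds{1}_B)$, reduces by linearity to pure $\rho_{AB}=\ket{\psi}\bra{\psi}$, and then writes $\ket{\psi}=(X\otimes U)\ket{I}$ with $\ket{I}$ the unnormalised maximally entangled state; conjugating $\ket{I}\bra{I}\le d\,\mathds{1}_{AB}$ by $X\otimes U$ yields the inequality. Your approach is more explicit --- it actually exhibits $\mu_{AB}$ as a uniform mixture of local unitary rotations of $\rho_{AB}$ --- whereas the paper's proof is non-constructive but highlights the underlying operator inequality, which is a statement of independent interest. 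Both are short and elementary; the trade-off is constructiveness versus an explicit operator bound.
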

\begin{proof}
Clearly the existence of the state $\mu_{AB}$ is equivalent to the condition
$$
\rho_{AB} \leq  d(\rho_A \otimes \mathds{1}_B).
$$
Furthermore, since the partial trace is a linear operation, if suffices to consider the case that $\rho_{AB} = \ket{\psi}\bra{\psi}$, a pure state. Define the maximally entangled unnormalized state
\begin{equation}
\ket{I} = \sum_{j=1}^d\ket{j}_A\ket{j}_B
\end{equation}
with $\{\ket{j}_A\}$ a basis of the Hilbert space $A$ and $\{\ket{j_B}\}$ of the Hilbert space $B$. By the Schmidt decomposition there exist an operator $X$ and a unitary $U$ such that 
\begin{equation}
\ket{\psi}=X\otimes U\ket{I}.
\end{equation}
Since $\braket{I,I}=d$, we have that $\ket{I}\bra{I} \leq d\mathds{1}_{AB}$. Conjugating this inequality with $X\otimes U$ we immediately get 
\begin{equation}
\ket{\psi}\bra{\psi} \leq d(X^{\dagger}X \otimes \mathds{1}_B) = d(\rho_A \otimes \mathds{1}_B)
\end{equation}
which finishes the proof.
\end{proof}
We can now continue our strategy to show that SIM implies SIE. We define the ensemble 
\begin{equation}
\mathcal{E} = \left\{\left(\frac{1}{d^2},\mu_{AB}\right),\left(1-\frac{1}{d^2}, \rho_{AB}\right)\right\}
\end{equation}
whose existence is assured by Lemma \ref{extens}. Indeed, $\mu_{AB}$ is precisely the state appearing in that lemma. Moreover, the average state of this ensemble is exactly $\rho_A \otimes \mathds{1}_B/d$.
\begin{proposition}
The small incremental mixing theorem implies the small incremental entangling theorem.
\end{proposition}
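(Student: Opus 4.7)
The plan is to invoke Theorem \ref{theoremSIM} (SIM) on the two-element ensemble $\mathcal{E} = \{(1-1/d^2,\,\mu_{AB}),\,(1/d^2,\,\rho_{AB})\}$ that has just been constructed. This ensemble is tailor-made for the job: by Lemma \ref{extens} its expected state $p_1\mu_{AB} + p_2\rho_{AB}$ equals $\tau_{AB} = \rho_A \otimes \mathds{1}_B/d$, which is precisely the operator appearing inside the logarithm in the rewritten entangling-rate expression \eqref{SIErewritten}.

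First I would match the derivative expressions. Plugging $\rho_2 = \rho_{AB}$, $\rho = \tau_{AB}$, and $p_2 = 1/d^2$ into the mixing-rate formula \eqref{mixingrate} gives
\begin{equation*}
\Lambda(\mathcal{E},H_{AB}) \;=\; -\frac{i}{d^2}\,\Tr\bigl(H_{AB}[\rho_{AB},\log\tau_{AB}]\bigr) \;=\; \frac{1}{d^2}\,\Gamma(H,\psi),
\end{equation*}
where the second equality is \eqref{SIErewritten} and we use $\|H\| = \|H_{AB}\|$ (since $H$ is $H_{AB}$ padded by identities on the ancillas). Applying SIM then immediately yields $\Gamma(H,\psi) \leq c\,\|H\|\,d^2\,\mathsf{h}\bigl(\{1/d^2,\,1-1/d^2\}\bigr)$ with $c$ the constant of Theorem \ref{theoremSIM}.

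The only remaining piece is an elementary estimate of the binary Shannon entropy. I would expand
\begin{equation*}
d^2\,\mathsf{h}\bigl(\{1/d^2,\,1-1/d^2\}\bigr) \;=\; 2\log d \,+\, (d^2-1)\log\frac{d^2}{d^2-1},
\end{equation*}
and bound the second summand by a universal constant (of order $\log e$) via $\log(1+x) \leq x\log e$ applied with $x = 1/(d^2-1)$. For $d\geq 2$ this constant is subdominant with respect to $2\log d$, so it can be absorbed at the price of at most doubling the overall prefactor, producing the desired bound $\Gamma(H,\psi) \leq c'\|H\|\log d$ with $c'$ roughly twice the SIM constant (as previewed in the text after Theorem \ref{SIE}).

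No step here is genuinely an obstacle: all the real ingenuity sits in Lemma \ref{extens}, whose construction of $\mu_{AB}$ via the Schmidt decomposition is what makes $\rho_{AB}$ itself a (small-weight) member of an ensemble whose average equals $\tau_{AB}$. The only point requiring care is the bookkeeping of the factor $1/p_2 = d^2$: it has to arise on the mixing side so that when combined with the Shannon-entropy prefactor from SIM it leaves precisely a $\log d$ scaling on the entangling side, which, as the expansion above confirms, it does.
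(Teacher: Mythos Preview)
Your proof is correct and follows essentially the same route as the paper: identify $\Lambda(\mathcal{E},H_{AB})=\tfrac{1}{d^2}\Gamma(H,\psi)$ via \eqref{mixingrate} and \eqref{SIErewritten}, then apply SIM with $p=1/d^2$. The only cosmetic difference is that the paper invokes SIM in the equivalent form $\Lambda\le -cp\log p\,\|H\|$ (equation \eqref{SIMonlylog}), which directly yields $\Gamma\le 2c\|H\|\log d$, whereas you keep the full $\mathsf{h}(p)$ and bound the residual $(d^2-1)\log\frac{d^2}{d^2-1}\le \log e$ afterwards; this only affects the constant, not the argument.
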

\begin{proof}
Let us assume that SIM is true. We are clearly only interested in the case where $p\leq 1/2$. To sharpen the constant we assume the conjecture in the following slightly adapted form, 
\begin{equation}\label{SIMonlylog}
\Lambda(\{p,1-p\},H) \leq -cp\log(p)\|H\| .
\end{equation}
This inequality is clearly equivalent with the SIM conjecture, apart from a possible modification of the constant. We use inequality \eqref{SIMonlylog} because it allows for the smallest constant prefactor.

Inequality \eqref{SIMonlylog} gives for the ensemble $\mathcal{E}$ that
\begin{equation}\label{Lambdashannon}
\Lambda(\mathcal{E},H) \leq \frac{2}{d^2}\log(d)\|H\|
\end{equation}
with $H:= H_{AB}$ the Hamiltonian from \eqref{SIErewritten}. Using \eqref{mixingrate} we can write the expression for $\Lambda$ as
\begin{equation}\label{Lambdaensemble}
\Lambda(\mathcal{E},H) =  -i\frac{1}{d^2}\Tr\left(H[\rho_{AB},\log\tau_{AB}]\right).
\end{equation}
We combine equation \eqref{Lambdashannon} and \eqref{Lambdaensemble} and find that
\begin{equation}
-i\Tr\left(H[\rho_{AB},\log\tau_{AB}]\right) \leq 2c\log(d)\|H\|.
\end{equation}
Comparing the expression on the left hand side with \eqref{SIErewritten} we find that
\begin{equation}
\Gamma(H,\psi) \leq 2c\log(d)\|H\|.
\end{equation}
Hence, we can conclude that SIM with a constant $c$ implies SIE with a constant $2c$. 
\end{proof}
It remains to show the validity of the SIM conjecture, in the form of the inequality \eqref{mixingrate} or  \eqref{SIMonlylog}.

\section{A Trace Norm Inequality for Commutators}\label{sec:prooftraceinequality}
In this section we discuss Theorem \ref{theoremcomm}, which was first conjectured in \cite{Bravyi}. From expression \eqref{mixingrate} it is clear that the statement of this theorem is equivalent with the small incremental mixing property, Theorem \ref{theoremSIM}. For the sake of clarity, we state it as an independent result that may be of interest in matrix analysis. Therefore, we prove this result in the more general case of separable Hilbert spaces, although for the physically relevant spin systems, it suffices to deal with finite dimensional Hilbert spaces.
\begin{theorem}\label{theoremcomm}
Let $A,B$ be positive trace class operators on a separable Hilbert space $\mathcal{H}$, such that $A \leq B$, $\Tr(A) = p \in [0,1]$, $\Tr(B) = 1$. Then there is a constant $c$ such that
\begin{equation}\label{matin}
\left\|[A,\log B]\right\|_1 \leq c\: \mathsf{h}(p).
\end{equation}
\end{theorem}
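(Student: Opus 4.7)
My plan is to reduce the theorem to a finite-dimensional statement by spectral truncation and then use a layer-cake representation of the logarithm to convert the commutator bound into an integral estimate driven by the hypothesis $A \le B$.

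For the reduction, since $B$ is positive trace class it is compact with pure point spectrum $\{\lambda_i\}$ accumulating only at $0$, so letting $Q_n$ be the spectral projector of $B$ onto its top $n$ eigenspaces and setting $A_n := Q_n A Q_n$, $B_n := Q_n B Q_n/\Tr(Q_n B)$ yields finite-rank operators preserving the hypotheses with $\Tr A_n \to p$; a continuity argument for the trace norm of the commutator should pass a uniform finite-dimensional bound to the limit. In finite dimensions, $B \ge 0$ together with $\Tr B = 1$ implies $B \le I$, so the layer-cake identity reads
\[
-\log B \;=\; \int_0^1 \frac{P_t}{t}\, dt, \qquad P_t := \mathds{1}\{B \le t\},
\]
yielding
\[
\|[A,\log B]\|_1 \;\le\; \int_0^1 \frac{2\,\|P_t^\perp A P_t\|_1}{t}\, dt.
\]

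For each fixed $t$, the off-diagonal block of $A$ relative to $P_t \oplus P_t^\perp$ can be bounded via the positivity of both $A$ and $B - A$ (the latter commuting with $P_t$, since $B$ does): Schur-complement factorisations combined with Schatten Cauchy--Schwarz give the twin estimates
\[
\|P_t^\perp A P_t\|_1 \le \sqrt{a(t)\,(p-a(t))}, \qquad \|P_t^\perp A P_t\|_1 \le \sqrt{(b(t)-a(t))\,(1-b(t)-p+a(t))},
\]
where $a(t) := \Tr(P_t A)$ and $b(t) := \Tr(P_t B)$; the hypothesis $A \le B$ furthermore yields $a(t) \le b(t)$ and $p-a(t) \le 1-b(t)$. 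I would then split the $t$-integral into three regimes according to whether $b(t)$ lies below $p$, between $p$ and $1-p$, or above $1-p$. On the middle range the AM--GM inequality $\sqrt{a(p-a)} \le p/2$ gives a contribution $p\log(t_2/t_1)$; on the small-$b$ range, combining the first bound with $a \le b$ gives $\sqrt{p\,b(t)}$; on the large-$b$ range, combining the second bound with $p-a \le 1-b$ gives $\sqrt{p(1-b(t))}$. Using the monotonicity of $b$ together with the identity $\int_0^1 b(t)\,dt/t = \sum_i \lambda_i\log(1/\lambda_i)$ and its dual version for $1-b$, these three contributions should combine into a universal bound of the form $c\,\mathsf{h}(p)$.

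The main obstacle is that each of the two Cauchy--Schwarz bounds is individually loose in the regime the other covers --- for $B\propto I$ the commutator vanishes identically, yet neither single bound sees the cancellation --- so a naive single-bound approach picks up a factor of $S(B) = \sum_i \lambda_i\log(1/\lambda_i)$ that is not controlled by $\mathsf{h}(p)$. Making the two bounds cancel the spectral data of $B$ in the appropriate regimes with a universal constant is the technical heart of the argument, and the passage to separable Hilbert spaces requires verifying that this constant survives the truncation.
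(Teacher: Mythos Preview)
Your proposal contains a genuine gap in the middle-range estimate, and it is not a matter of bookkeeping: the two pointwise bounds you extract from $A\ge0$ and $B-A\ge0$ are structurally too weak to close the argument. Take $B$ with a single eigenvalue $\tfrac12$ and $n$ eigenvalues $\tfrac{1}{2n}$, and set $p=\tfrac14$. For every $t\in(\tfrac{1}{2n},\tfrac12)$ the projector $P_t$ is constant and $b(t)=\tfrac12\in[p,1-p]$, so your AM--GM bound $\|P_t^\perp AP_t\|_1\le p/2$ gives a middle-range contribution
\[
\int_{1/(2n)}^{1/2}\frac{p}{t}\,dt=\tfrac14\log n,
\]
which is unbounded in $n$ while $\mathsf h(\tfrac14)$ is fixed. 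Worse, this overestimate is unavoidable within your framework: both of your bounds depend only on the scalars $a(t),b(t),p$, and one can choose a \emph{diagonal} $A$ (hence $[A,\log B]=0$) with $a(t)=p/2$, so that $\sqrt{a(p-a)}=p/2$ exactly. Thus the ``cancellation of the spectral data of $B$'' that you correctly identify as the technical heart simply cannot be effected by any combination of these two pointwise estimates. (Incidentally, in the large-$b$ regime your second bound gives $\sqrt{(1-p)(1-b)}$, not $\sqrt{p(1-b)}$, since $A\le B$ yields $b-a\le 1-p$ rather than $b-a\le p$; this is a minor slip compared to the middle-range issue.)

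What is missing is the multiplicative form of the constraint $A\le B$: writing $A=B^{1/2}XB^{1/2}$ with $0\le X\le\mathds 1$, the off-diagonal block becomes $B_k^{1/2}X_{kl}B_l^{1/2}$ and automatically carries a factor $\sqrt{\lambda_k\lambda_l}$ that your additive bounds cannot see. In the example above this is what forces $\|P_t^\perp AP_t\|_1\lesssim n^{-1/2}$ for \emph{every} admissible $A$, making the true layer-cake integral small even though your pointwise majorants are large. The paper's proof is built around this factorisation: it partitions $\spec(B)$ into geometric shells $I_k=[p^{k+1},p^k)$, so that within adjacent shells the spread of $\log B$ is $O(\log(1/p))$ and Lemma~\ref{lemmastrong} applies directly, while for blocks with $|k-l|\ge2$ the $B^{1/2}$-weights supply a geometric damping $\sqrt{p^{|k-l|-1}}$ that, combined with the global constraint $\sum_lX_{kl}X_{kl}^*\le X_{kk}$ from $X^2\le X$ and a Cauchy--Schwarz argument, bounds the far-off-diagonal contribution uniformly. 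Your layer-cake integral is a valid starting point, but to recover the theorem you would need to feed the factorisation $A=B^{1/2}XB^{1/2}$ into the integrand rather than rely on bounds that forget the size of the eigenvalues of $B$ on either side of the cut.
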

As noted in \cite{AudKitt}, the more general case without extra restrictions on $\Tr(A),\Tr(B)$ can easily be reduced to the inequality \eqref{matin}. Recently a proof yielding $c=9$ was obtained in \cite{Karel} for $\mathcal{H}$ finite dimensional and a completely different proof, based on the continuity properties of the quantum skew divergence, was given in \cite{Koenraad}. The latter proof gives a constant $c=2$. The proof given here is a generalisation of the method in \cite{Karel} and gives a constant $c = 11$. The increase of the constant from 9 to 11 is entirely due to the use of infinite dimensional Hilbert spaces. We remark that numerical evidence suggests that $c=1$ is in fact the best possible constant.\\
\\
To be complete we mention the following conjecture, which is a natural generalisation of Theorem \ref{theoremcomm}.
\begin{conjecture}[\cite{AudKitt}]
Let $A$ and $B$ be positive semidefinite $d\times d$ matrices with $\Tr A = a$ and $\Tr B = b$. For certain functions $f: \mathbb{R} \rightarrow \mathbb{R}$ (still to be determined), there exists a constant $c$, independent of $d$ such that
\begin{equation}
\|[B,f(A+B)]\|_1 \leq c\left(F(a+b)-F(a)-F(b)\right),
\end{equation}
with $F(x) = \int_0^xf(y)dy$.
\end{conjecture}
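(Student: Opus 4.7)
The first task is to identify a sensible class of admissible functions $f$. The prototype from Theorem \ref{theoremcomm} is $f(x)=\log x$, where $F(x)=x\log x-x$ is operator convex on $[0,\infty)$, vanishes at the origin, and the mixing deficit $F(a+b)-F(a)-F(b)$ reduces (after normalisation) to the Shannon-type expression appearing in that theorem. A natural candidate class is the set of operator monotone functions on $(0,\infty)$ whose antiderivative $F$ is operator convex with $F(0)=0$; this includes $f(x)=\log x$ and $f(x)=x^{\alpha}$ for $\alpha\in(0,1)$. On this class the right-hand side is automatically nonnegative by superadditivity of convex functions vanishing at the origin, so the inequality has a chance of being tight.

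The plan is then to mirror the strategy of the proof of Theorem \ref{theoremcomm}. Any operator monotone $f$ on $(0,\infty)$ admits a L\"owner integral representation in terms of resolvents $(x+t)^{-1}$, and the elementary identity
\begin{equation*}
[B,(A+B+t)^{-1}]=(A+B+t)^{-1}[A,B](A+B+t)^{-1}
\end{equation*}
(which uses $[A+B,B]=[A,B]$) turns the commutator $[B,f(A+B)]$ into an integral of resolvent-sandwiches of $[A,B]$. The triangle inequality in trace norm then reduces the conjecture to bounding
\begin{equation*}
\int_0^\infty \|(A+B+t)^{-1}[A,B](A+B+t)^{-1}\|_1\,d\mu(t)
\end{equation*}
by a constant multiple of $F(a+b)-F(a)-F(b)$, where $\mu$ is the representing measure of $f$.

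From here I would carry over the dyadic spectral decomposition used in the logarithmic case. Partition the spectrum of $A+B$ into dyadic shells $I_k=[2^{-k-1},2^{-k}]$, denote by $P_k$ the associated spectral projectors, and set $a_k=\Tr(P_k A P_k)$ and $b_k=\Tr(P_k B P_k)$. In each shell the resolvent $(A+B+t)^{-1}$ has norm of order $\min(2^k,t^{-1})$, so the shell contribution to the resolvent integral should be controlled by a quantity proportional to $|f'(2^{-k})|\cdot\min(a_k,b_k)$. Summing the shell contributions ought then to yield a telescoping bound of the form $c\sum_k \bigl(F(a_k+b_k)-F(a_k)-F(b_k)\bigr)$, which by superadditivity of $F$ is bounded by $c\bigl(F(a+b)-F(a)-F(b)\bigr)$, as desired.

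The main obstacle will be pinning down the precise class of $f$ for which the dyadic summation converges with a dimension-independent constant. For $f$ with $|f'(x)|\le C/x$ (the logarithmic regime) the shells contribute geometrically, and the summation is controlled exactly as in the proof of Theorem \ref{theoremcomm}. For faster-growing $f$, or for $f$ whose derivative is non-monotone, either the shell-by-shell estimate degrades or a different partition of the spectrum is needed. A secondary, more technical difficulty lies at the bottom of the spectrum of $A+B$, where the resolvents blow up; handling the small-eigenvalue contributions typically requires finiteness of $F$ at $0$, so the admissibility of $f$ and the integrability of $f$ near the origin are intertwined conditions that the proof will have to reconcile.
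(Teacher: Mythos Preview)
The statement you are trying to prove is presented in the paper explicitly as an \emph{open conjecture} (attributed to \cite{AudKitt}), not as a theorem; the paper does not prove it and offers no argument for it beyond noting that it is a natural generalisation of Theorem~\ref{theoremcomm}. There is therefore nothing in the paper to compare your proposal against.

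As for the proposal itself, it is a reasonable strategic outline but not a proof. You correctly isolate the logarithmic case as the prototype and correctly note that the L\"owner representation together with the resolvent commutator identity reduces the problem to controlling a weighted integral of resolvent-sandwiched commutators. However, the crucial step---``summing the shell contributions ought then to yield a telescoping bound of the form $c\sum_k(F(a_k+b_k)-F(a_k)-F(b_k))$''---is asserted, not established, and it is precisely here that the difficulty of the conjecture lies. In the paper's proof of Theorem~\ref{theoremcomm} the analogous step is handled by a delicate Cauchy--Schwarz argument (Lemmas~\ref{CSKit} and~\ref{CSmat}) that exploits the specific structure of $\log$; it is not clear that this survives for general operator monotone $f$, and you do not indicate how it would. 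The subsequent claim that $\sum_k(F(a_k+b_k)-F(a_k)-F(b_k))\le F(a+b)-F(a)-F(b)$ by ``superadditivity of $F$'' is also not immediate: convexity of $F$ with $F(0)=0$ gives superadditivity of $F$ itself, but the mixing deficit $G(x,y)=F(x+y)-F(x)-F(y)$ is not automatically subadditive under partitions of $(a,b)$ into $(a_k,b_k)$. Your proposal thus identifies a plausible line of attack and honestly flags several obstacles, but it does not close the conjecture, and neither does the paper.
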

Theorem \ref{theoremcomm} concludes the proof of Theorem \ref{theoremSIM} (SIM). It also finishes the argument given in section \ref{sec:RelatingSIMSIE}, hence the proof of Theorem \ref{SIE} (SIE). 

We now give the proof of Theorem \ref{theoremcomm} in the next two subsections. Both can be skipped entirely by readers only interested in the result and the applications.
For clarity, we start with some minor lemmas that allow us to deal with a very simplified version of the problem. The idea of the full proof is to reduce the problem to this simplified case. 
\subsection{Some Technical Lemmas}\label{sec:technicallemmas}
We start with some lemmas that can be used to treat the case $\Tr(A) \leq \lambda_{min}(B)$. This makes the extra constraint $A \leq B$ redundant.

\begin{lemma}\label{lemmaweak}
Suppose $A$ is a positive trace class operator and $B$ is a positive operator such that $\spec{B} \subset [b_L,b_U]$. We have that
\begin{equation}\label{matrixinequality}
\|[A,\log B]\|_1\leq 2\Tr A \log\left(\frac{b_U}{b_L}\right).
\end{equation}
\end{lemma}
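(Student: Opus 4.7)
The plan is to exploit the fact that commutators are invariant under shifting one of their arguments by a scalar multiple of the identity: $[A, \log B] = [A, \log B - \alpha\,\mathds{1}]$ for any $\alpha \in \mathbb{R}$. This gauge freedom lets us bound the commutator against the \emph{spread} of the spectrum of $\log B$, rather than its maximum absolute value, and the spread is exactly what the quantity $\log(b_U/b_L)$ measures.

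The key analytic input is the H\"older-type bound for Schatten norms: for any trace class operator $A$ and any bounded operator $Y$ on a separable Hilbert space, the products $AY$ and $YA$ are trace class with $\|AY\|_1, \|YA\|_1 \le \|A\|_1 \|Y\|_\infty$. Applied to the commutator via the triangle inequality this gives the standard estimate
\begin{equation*}
\|[A,Y]\|_1 \;\le\; \|AY\|_1 + \|YA\|_1 \;\le\; 2\|A\|_1 \|Y\|_\infty.
\end{equation*}

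I would then choose $Y := \log B - (\log b_L)\,\mathds{1}$. Since $B$ is a bounded positive operator with $\spec(B) \subset [b_L, b_U]$ and $b_L > 0$, the functional calculus produces $\log B$ as a bounded self-adjoint operator whose spectrum lies in $[\log b_L, \log b_U]$. Hence $Y$ is a bounded positive operator with $\|Y\|_\infty = \log(b_U/b_L)$. Combining this with the previous display, noting that $[A,\log B] = [A,Y]$, and using $\|A\|_1 = \Tr A$ (valid since $A \ge 0$ is trace class), yields
\begin{equation*}
\|[A,\log B]\|_1 \;=\; \|[A,Y]\|_1 \;\le\; 2\,\Tr A\,\log(b_U/b_L),
\end{equation*}
which is the claimed inequality.

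I do not anticipate any substantive obstacle. The only points that require a moment of care in the infinite-dimensional setting are that $\log B$ is well defined as a bounded operator (which holds precisely because $b_L > 0$ keeps $\log$ finite on $\spec(B)$) and that the H\"older bound $\|AY\|_1 \le \|A\|_1 \|Y\|_\infty$ extends from matrices to Schatten classes on a separable Hilbert space, which is a standard fact. Note also that the constant $2$ in the statement is a convenient slack: by instead picking $\alpha = \tfrac{1}{2}(\log b_L + \log b_U)$ one can centre the spectrum of $Y$ symmetrically around $0$ and shave the factor $2$ down to $1$, though the weaker constant is all that is needed in what follows.
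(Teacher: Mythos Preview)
Your proof is correct and follows essentially the same route as the paper: shift $\log B$ by $(\log b_L)\mathds{1}$, apply the triangle inequality to the commutator, then use the H\"older bound $\|AY\|_1\le\|A\|_1\|Y\|_\infty$ together with $\|A\|_1=\Tr A$ for $A\ge 0$. Your closing remark that centring at $\alpha=\tfrac12(\log b_L+\log b_U)$ already yields the constant $1$ is a nice bonus---the paper obtains that sharpening in a separate lemma by invoking a commutator inequality of Kittaneh, whereas your observation shows the elementary argument suffices.
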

\begin{proof}
We have that
\begin{align}
\|[A,\log B]\|_1 &= \|[A,\log B-\log b_L\mathds{1}]\|_1\\
&\leq 2\|A(\log B - \log b_L\mathds{1})\|_1\\
&\leq 2 \|A\|_1\|\log B - \log b_L\mathds{1} \|_{op}\\
&= 2\Tr(A)(\log b_U - \log b_L).
\end{align}
In the first line we use the invariance of a commutator under adding a scalar operator to one of the arguments. The second and third line follow from the triangle and H\"older's inequality. In the last line we use the positivity of $A$ and the restriction on $\spec(B)$.
\end{proof}
Since we want to obtain a proof for a constant $c$ as small as possible, we give a stronger statement that removes the factor 2. The proof is analogous but uses a commutator inequality for positive operators by Kittaneh \cite{Kittpos} instead of the triangle inequality.
\begin{lemma}\label{lemmastrong}
Suppose $A$ is a positive trace class operator and $B$ is a positive operator such that $\spec{B} \subset [b_L,b_U]$. We have that
$$
\|[A,\log B]\|_1\leq  \log\left(\frac{b_U}{b_L}\right)\Tr A.
$$
\end{lemma}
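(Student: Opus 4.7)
The plan is to follow the same outline as the proof of Lemma \ref{lemmaweak}, but to replace the triangle inequality step --- which contributes the stray factor of $2$ --- by Kittaneh's sharper commutator inequality for pairs of positive operators \cite{Kittpos}.

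First I would exploit the invariance of the commutator under scalar shifts to write
$$
[A,\log B] = \bigl[A,\; \log B - (\log b_L)\,\mathds{1}\bigr] = [A,C],
$$
where $C := \log B - (\log b_L)\,\mathds{1}$. The assumption $\spec(B)\subset[b_L,b_U]$, together with the monotonicity of $\log$ applied through the Borel functional calculus, gives $0 \le C$ and $\|C\|_{\infty} = \log(b_U/b_L)$. Hence both arguments of $[A,C]$ are positive, with $A$ trace class and $C$ bounded. I would then invoke Kittaneh's commutator inequality in the form
$$
\|XY - YX\|_1 \;\leq\; \|X\|_1\, \|Y\|_{\infty} \qquad (X, Y \geq 0),
$$
applied with $X = A$ and $Y = C$, to conclude
$$
\|[A,\log B]\|_1 \;=\; \|[A,C]\|_1 \;\leq\; \Tr(A)\, \log(b_U/b_L),
$$
which is exactly the stated bound.

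The main obstacle is ensuring that Kittaneh's inequality --- typically stated for matrices --- carries over verbatim to the separable Hilbert space setting employed throughout this section. This can be settled by a standard truncation argument: letting $P_n$ denote the spectral projection of $A$ onto its $n$ largest eigenvalues and setting $A_n := P_n A P_n$, one has $A_n \to A$ in trace norm, so $[A_n, C] \to [A, C]$ in trace norm (using boundedness of $C$), and $\Tr A_n \to \Tr A$. Applying the finite-dimensional inequality to each $A_n$ and passing to the limit gives the infinite-dimensional statement. A minor auxiliary point is that $\log B$ is well defined and self-adjoint via the Borel functional calculus, since $\spec(B) \subset [b_L, b_U] \subset (0, \infty)$; and the positivity $C \geq 0$ is just operator monotonicity of $\log$ on the relevant interval, applied to $B \geq b_L \mathds{1}$.
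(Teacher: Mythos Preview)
Your proposal is correct and follows precisely the route the paper indicates: the paper's own proof is merely the one-line remark that the argument of Lemma~\ref{lemmaweak} goes through verbatim once the triangle inequality is replaced by Kittaneh's commutator inequality for positive operators~\cite{Kittpos}, which is exactly what you do. Your additional care about the infinite-dimensional setting (functional calculus for $\log B$, truncation of $A$) only makes the argument more complete than the paper's sketch.
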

Our general strategy is to reduce the general case to the case where we can use Lemma \ref{lemmastrong} and bound the remaining cases with a Cauchy-Schwarz inequality. We use similar ideas as in \cite{Karel}. The main difference with the finite dimensional case is that we need the following theorem by Kittaneh \cite{Kittstrong}. 
\begin{lemma}[\cite{Kittstrong}]\label{CSKit}
Let $A,B$ be bounded self-adjoint operators such that  $\spec(A) \subset [a_L,a_U]$ and $\spec(B) \subset [b_L,b_U]$. Then, for every operator $X$ and every unitarily invariant norm $\vvvert.\vvvert$ we have that
\begin{equation}
\vvvert AX-XB \vvvert \leq \max(a_U-b_L,b_U-a_L)\vvvert X\vvvert.
\end{equation}
\end{lemma}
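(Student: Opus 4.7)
The plan is to reduce the inequality to a pair of operator-norm bounds via a suitable scalar shift. Since commutator-like expressions $AX-XB$ are invariant under simultaneous shifts of the form $A\mapsto A-\alpha I$, $B\mapsto B-\alpha I$, only the difference of shifts leaves a residual term. This suggests centering each spectrum around its midpoint independently and then paying a scalar cost for the mismatch of centers.

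Concretely, I would set $\bar a := (a_L+a_U)/2$, $\bar b := (b_L+b_U)/2$, $r_A := (a_U-a_L)/2$, $r_B := (b_U-b_L)/2$, and introduce the centered operators $\tilde A := A-\bar a\,\mathds{1}$, $\tilde B := B-\bar b\,\mathds{1}$. By the spectral assumptions, $\|\tilde A\|_{op}\le r_A$ and $\|\tilde B\|_{op}\le r_B$. The key algebraic identity is
\begin{equation*}
AX-XB \;=\; \tilde A X - X\tilde B + (\bar a-\bar b)\,X.
\end{equation*}
Applying the triangle inequality for the unitarily invariant norm together with the standard domination property $\vvvert MY\vvvert\le \|M\|_{op}\vvvert Y\vvvert$ and $\vvvert YM\vvvert\le \|M\|_{op}\vvvert Y\vvvert$ (which holds for every unitarily invariant norm on a separable Hilbert space, since such norms are symmetric gauge functions of singular values and singular values of $MY$ are weakly majorized by $\|M\|_{op}$ times those of $Y$), one obtains
\begin{equation*}
\vvvert AX-XB\vvvert \;\le\; \bigl(r_A+r_B+|\bar a-\bar b|\bigr)\,\vvvert X\vvvert.
\end{equation*}

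The last step is the bookkeeping identity $r_A+r_B+|\bar a-\bar b|=\max(a_U-b_L,\,b_U-a_L)$. This is a one-line calculation: split on the sign of $\bar a-\bar b$; if $\bar a\ge\bar b$, the sum collapses to $\tfrac{1}{2}\bigl((a_U-a_L)+(b_U-b_L)+(a_U+a_L-b_U-b_L)\bigr)=a_U-b_L$, and symmetrically $b_U-a_L$ in the other case, which together give the claimed maximum.

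I do not anticipate a serious obstacle here; the only non-trivial input is the operator-norm domination property for unitarily invariant norms on $B(\mathcal{H})$, which is classical and needs only to be cited. The shifting trick is what makes the bound tight (as opposed to the weaker $r_A+r_B$ one would get by naive triangle inequality without recentering against $\bar a-\bar b$), and it automatically handles the case where the two spectral intervals overlap.
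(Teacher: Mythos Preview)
Your proof is correct and follows essentially the same route as the paper: center $A$ and $B$ at their spectral midpoints, apply the triangle inequality together with the operator-norm domination property for unitarily invariant norms, and then simplify $r_A+r_B+|\bar a-\bar b|$ to $\max(a_U-b_L,\,b_U-a_L)$. The paper's argument is line-for-line the same, just with the notation $a_M,b_M$ in place of your $\bar a,\bar b$.
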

For completeness we reproduce the short proof.
\begin{proof}
We introduce the notation 
\begin{equation}
a_M = \frac{a_L+a_U}{2} \text{ and } b_M = \frac{b_L+b_U}{2}.
\end{equation}
Then we have that
\begin{align}
\vvvert AX-XB\vvvert &= \vvvert(A-a_M)X-X(B-b_M)+(a_M-b_M)X\vvvert\\
&\leq \left(\|A-a_M\|+\|B-b_M\|+|a_M-b_M|\right)\vvvert X\vvvert\\
&=\left((a_U-a_M)+(b_U-b_M)+|a_M-b_M|\right)\vvvert X\vvvert\\
&=\max(a_U-b_L,b_U-a_L)\vvvert X\vvvert.
\end{align}
\end{proof}
We need a last technical lemma that is a matrix version of the Cauchy-Schwarz inequality. 
\begin{lemma}\label{CSmat}
Let $\bigoplus_k S_k, \bigoplus_k T_k$ be block diagonal Hilbert Schmidt operators. Then, the following holds,
\begin{equation}
\sum_k \left|\Tr(S_k T_k) \right| \leq \left(\sum_k \left\|S_k\right\|_2^2\right)^{1/2}\left(\sum_k \left\|T_k\right\|_2^2\right)^{1/2}.
\end{equation}
\end{lemma}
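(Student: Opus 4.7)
The plan is to apply the Cauchy–Schwarz inequality in two stages: first on each block using the Hilbert–Schmidt inner product, and then on the resulting sequence of numbers using the classical Cauchy–Schwarz inequality for $\ell^2$.

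For the first stage, recall that on the space of Hilbert–Schmidt operators, $\langle X,Y\rangle := \Tr(X^\dagger Y)$ defines an inner product whose induced norm is exactly $\|\cdot\|_2$. Writing $\Tr(S_k T_k) = \langle S_k^\dagger, T_k\rangle$ and applying Cauchy–Schwarz in this inner product space gives
\begin{equation}
|\Tr(S_k T_k)| \leq \|S_k^\dagger\|_2\,\|T_k\|_2 = \|S_k\|_2\,\|T_k\|_2,
\end{equation}
where the last equality uses the fact that $\|\cdot\|_2$ is invariant under the adjoint. This reduces the matrix problem to a scalar one involving the non-negative sequences $(\|S_k\|_2)_k$ and $(\|T_k\|_2)_k$.

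For the second stage, I would simply apply the standard Cauchy–Schwarz inequality for (possibly infinite) sums of non-negative numbers:
\begin{equation}
\sum_k \|S_k\|_2\,\|T_k\|_2 \leq \Bigl(\sum_k \|S_k\|_2^2\Bigr)^{1/2}\Bigl(\sum_k \|T_k\|_2^2\Bigr)^{1/2}.
\end{equation}
Chaining this with the bound from the first stage yields the claim. The hypothesis that $\bigoplus_k S_k$ and $\bigoplus_k T_k$ are Hilbert–Schmidt ensures that $\sum_k \|S_k\|_2^2$ and $\sum_k \|T_k\|_2^2$ are both finite (they equal $\|\bigoplus_k S_k\|_2^2$ and $\|\bigoplus_k T_k\|_2^2$ respectively), so the right-hand side is well defined and all manipulations with the infinite sums are justified.

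There is no real obstacle here: the statement is essentially two iterations of Cauchy–Schwarz, and the only point requiring a modicum of care in the separable (possibly infinite-dimensional) setting is the justification that $\Tr(S_k T_k)$ makes sense blockwise and that interchange of sums is permitted, both of which follow from the Hilbert–Schmidt assumption on the direct sums.
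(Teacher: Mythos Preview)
Your proof is correct. The paper's argument applies the Hilbert--Schmidt Cauchy--Schwarz inequality in a single step to the full block-diagonal operators $\bigoplus_k S_k$ and $\bigoplus_k T_k$ (implicitly absorbing unimodular phases into the $S_k$ so that each $\Tr(S_kT_k)$ becomes non-negative and the sum of absolute values equals the absolute value of the total trace), whereas you split the argument into two stages: Cauchy--Schwarz on each block, then the scalar Cauchy--Schwarz on the resulting sequence of norms. The two routes are minor variants of the same idea; yours is a bit more explicit and sidesteps the phase-adjustment step that the paper leaves tacit.
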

\begin{proof}
The proof is immediate,
\begin{align}
\sum_k \left|\Tr(S_k T_k) \right| &= \left|\Tr\bigoplus_k S_k \bigoplus_k T_k\right|\\
&\leq \left\|\bigoplus_k S_k\right\|_2 \left\|\bigoplus_k T_k\right\|_2 \\
&= \left(\sum_k \left\|S_k\right\|_2^2\right)^{1/2}\left(\sum_k \left\|T_k\right\|_2^2\right)^{1/2}.
\end{align}
\end{proof}
We use Lemma \ref{CSmat} to replace the Cauchy-Schwarz argument used on the matrix elements in the finite dimensional case \cite{Karel}.
\subsection{Proof of Theorem \ref{theoremcomm}}\label{sec:prooftheorem}
After the previous technical intermission, we return to the main subject of this section, the proof of the matrix inequality \eqref{matin}.
\begin{proof}
To prove Theorem \ref{theoremcomm} we first fix $\Tr(A) = p \in (0,1)$ and partition the spectrum of $B$ in countably many subsets related to this value $p$. To be specific, consider the intervals $I_k = \left[p^{k+1},p^k\right)$  for all $k \in \mathbb{N}$. Notice that we can always restrict the Hilbert space to the support of $B$. Furthermore, since $B$ is positive and has trace equal to 1, 1 itself cannot be in the spectrum of $B$. Hence the union of these intervals $I_k$ ultimately contains the entire spectrum of $B$. Of course, some $I_k$ may be empty; let $K \subset \mathbb{N}$ be the set of integers $k$ for which $\spec(B) \cap I_k \neq \emptyset$.\\
\\
We now use the orthonormal basis consisting of eigenvectors of $B$ and the spectral partitioning $\{I_k\}_{k\in K}$ to decompose the Hilbert space $\mathcal{H}$. Let $\mathcal{H}_k$ be the subspace of $\mathcal{H}$ spanned by the eigenvectors of $B$ that correspond to eigenvalues in $I_k$. This induces a direct sum decomposition
\begin{equation}
\mathcal{H} = \bigoplus_{k\in K} \mathcal{H}_{k}.
\end{equation}
By definition of the direct summands, the operator $B$ also decomposes  as a block diagonal operator
\begin{equation}
B = \bigoplus_{k\in K} B_k
\end{equation} 
where each of the operators $B_k$ only acts on $\mathcal{H}_{k}$. We now introduce the resolution of the identity related to this decomposition. Let $\{P_k\}_{k\in K}$ be the complete set of mutually orthogonal projectors such that $P_k: \mathcal{H} \rightarrow \mathcal{H}_{k}$ is the projector onto $\mathcal{H}_{k}$ and $\sum_{k \in K}P_k = \mathds{1}$ with $\mathds{1}$ the identity operator on the full Hilbert space $\mathcal{H}$. By definition of these projectors we have that $P_kBP_l = B_k\delta_{kl}$. The reason for this decomposition is that the spectrum of the restricted operators $B_k$ is bounded from below and above as $\spec\left(B_k\right) \subset I_k$.

We now use the following variational characterization of the trace norm for a self-adjoint trace class operator $O$, 
\begin{equation}
\|O\|_1 = \max_{-\mathds{1} \leq H \leq \mathds{1}} \Tr(HO).
\end{equation}
We have that
\begin{align}
\|[A,\log B]\|_1 &= \|i[A,\log B]\|_1\\
&= \max_{-\mathds{1} \leq H \leq \mathds{1}} i\Tr([A,\log B]H)\\
&= \max_{-\mathds{1} \leq H \leq \mathds{1}} i\Tr(\log B[A,H]).
\end{align}
Take a random $H$, it suffices to prove that
\begin{equation}
W := i\Tr\left(\log(B)[A,H]\right) \leq c \mathsf{h}(p).
\end{equation}
We now write $W$ as a sum of several terms, where each term has contributions based on the partitioning of $\mathcal{H}$ introduced above. Let us define
\begin{equation}
A_{kl} := P_kAP_l, \quad H_{lk} := P_lHP_k
\end{equation}
and 
\begin{equation}
W_{kl} := i\Tr\left(\log(B_k)A_{kl}H_{lk}-H_{lk}A_{kl}\log(B_l)\right).
\end{equation}
This notation allows us to write
\begin{equation}
\begin{aligned}
W &= i\Tr\left(\log B\left(\sum_k P_k\right)A\left(\sum_l P_l\right)H-\log B\left(\sum_l P_l\right)H\left(\sum_k P_k\right)A\right)\\
&=\sum_{k,l \in K} W_{kl}.
\end{aligned}
\end{equation}
We continue the strategy of expressing everything in the basis of $B$. Since $0\leq A \leq B$ there exists an $0 \leq X \leq \mathds{1}$ such that 
\begin{equation}
A = B^{1/2}XB^{1/2}. 
\end{equation}
This implies that  $A_{kl} = B_k^{1/2}X_{kl}B_l^{1/2}$  with $X_{kl} = P_kXP_l$.\\
\\
We now introduce the central idea to bound $W$. We make a distinction between couples of parts of the spectrum $k,l$ which are close together, i.e. in the same or neighbouring intervals $I_k,I_l$, and those which are far from each other. More specifically, we split the sum as
\begin{equation}\label{rearrangement}
W = \underbrace{\sum_{k,l \in K, |k-l|<2} W_{kl}}_{W'} + \underbrace{\sum_{k,l\in K, |k-l|\geq 2} W_{kl}}_{W''}.
\end{equation} 
The first sum contains the contributions of pairs of eigenvalues close to each other, while the second contains those of further separated pairs. The logic behind the particular rearrangement \eqref{rearrangement} is that the terms in the first sum are the terms that can be bounded using Lemma \ref{lemmastrong}. The terms in the second sum, $W''$, are precisely those that can be bounded using the Cauchy-Schwarz inequality. We proceed by bounding the latter terms, $W''$. The general case considered here requires a bit more care than the finite dimensional case \cite{Karel}. This causes an increase of the constant $c$ from $9$ to $11$.

We introduce some extra operators to lighten the notation. Let 
\begin{equation}
Z_{kl} = B^{1/2}_kX_{kl}, \quad Y_{kl} = X_{kl}B_l^{1/2} = B^{-1/2}_kZ_{kl}B_l^{1/2}.
\end{equation}
We only consider the indices $(k,l)$ such that $l > k+1$, call this set $I$. We now have that
\begin{equation}
W'' := \sum_{(k,l) \in I} (W_{kl} + W_{lk}).
\end{equation} 
From Lemma \ref{CSmat} it follows that
\begin{align}
W'' &\leq 2 \sum_{(k,l) \in I}|W_{kl}|\\
&= 2 \sum_{(k,l) \in I} \left|\Tr\left(\log(B_k)B_k^{1/2}X_{kl}B_l^{1/2}H_{lk}-H_{lk}B_k^{1/2}X_{kl}B_l^{1/2}\log(B_l)\right)\right|\\
&=2 \sum_{(k,l) \in I} \left|\Tr\left[\left(\log(B_k)Y_{kl}-Y_{kl}\log(B_l)\right)\left(H_{lk}B_k^{1/2}\right)\right]\right|\\
&\leq 2\left(\sum_{(k,l) \in I}\left\|\log(B_k)Y_{kl}-Y_{kl}\log(B_l)\right\|_2^2\right)^{1/2}\left(\sum_{(k,l)\in I}\left\|H_{lk}B_k^{1/2}\right\|_2^2\right)^{1/2}.
\end{align}
Recall that by definition, $p^{k+1} \leq B_k \leq p^k$ for all $k \in K$. We find the following inequality
\begin{equation}
\|Y_{kl}\|_2 \leq \|B_k^{-1/2}\|\|Z_{kl}\|_2\|B_l^{1/2}\| \leq \sqrt{p^l/p^{k+1}}\|Z_{kl}\|_2.
\end{equation}
By Lemma \ref{CSKit} we have that,
\begin{align}
\|\log(B_k)Y_{kl}-Y_{kl}\log(B_l)\|_2 &\leq \max\left(\log(p^k/p^{l+1}), \log(p^l/p^{k+1})\right)\|Y_{kl}\|_2\\
&\leq \max\left(\log(1/p^{l-k+1}),\log(p^{l-k-1})\right)\sqrt{p^{l-k-1}}\|Z_{kl}\|_2.
\end{align}
Since we only sum over indices $(k,l) \in I$ for which $l > k+1$, we have that $p^l<p^{k+1}, 1\leq 1/p^{l-k+1}$ and $p^{l-k-1} \leq 1$. Thus, we can bound the prefactor in the previous inequality as
\begin{align}
\max\left(|\log(1/p^{l-k+1})|,|\log(p^{l-k-1}|\right))\sqrt{p^{l-k-1}} &= \log(1/p^{l-k+1})\sqrt{p^{l-k-1}} \\
&= \frac{l-k+1}{l-k-1}\log(1/p^{l-k-1})\sqrt{p^{l-k-1}}\\
&\leq 3\log(1/p^{l-k-1})\sqrt{p^{l-k-1}}.
\end{align}
Due to the slightly different Cauchy-Schwarz argument, here a factor 3 appears, which differs from the finite dimensional case and results in a bigger constant $c=11$.\\
\\
To bound the contributions of the form $\log(1/x)\sqrt{x}$ we consider the function $$x\mapsto \log(1/x)\sqrt(x),$$ which is monotonously increasing on the interval $[0,e^{-2}]$ and attains its maximum value, $2e^{-1}$ at $x_{\text{max}} = e^{-2}$. Since $l-k-1 \geq 1$, we have that
\begin{equation}
\log(1/p^{l-k-1})\sqrt{p^{l-k-1}} \leq f(p)
\end{equation}
with the function $f$ defined as
\begin{equation}
f(p) := 
\begin{cases}
   \log(1/p)\sqrt(p) & \text{if } 0<p\leq e^{-2} \\
  2e^{-1}      &  \text{otherwise}.
  \end{cases}
\end{equation}
For applications, we are mainly interested in the regime $p \ll 1$, since $1/p$ corresponds to $\dim(B)^2$, the dimension of the Hilbert space. Moreover for large $p$, better bounds can be established \cite{LiebVer} that do not suffer from a relatively large constant prefactor. Clearly, it is the first case in the definition of $f$ that is important.\\
\\
We now bound the contribution of $W''$. By the previous observations, we have that
\begin{equation}
W'' \leq 6f(p)\left(\sum_{(k,l)\in I}\|Z_{kl}\|^2_2\right)^{1/2}\left(\sum_{(k,l)\in I}\|H_{lk}B_k^{1/2}\|_2^2\right)^{1/2}.
\end{equation}
Now the initial condition $0\leq A \leq B$ gives that $0\leq X\leq \mathds{1}$, which immediately implies that $0 \leq X^2 \leq X$. Therefore, we have that
$$
\sum_{l \in L}X_{kl}(X_{kl})^{\dagger} \leq X_{kk}
$$
for any possible index set $L \subset K$. We now have that
\begin{align}
\sum_{(k,l)\in I} \|Z_{kl}\|_2^2 &= \sum_{(k,l) \in I} \Tr\left(X_{kl}^{\dagger}B_kX_{kl}\right)\\
&\leq \sum_{k\in K} \Tr(B_k X_{kk})\\
&= \sum_{k\in K}\Tr A_{kk} = \Tr A = p.
\end{align}
We continue in the same fashion to bound the final factor. As we considered normalised interactions, we have that $\|H\| = 1$ and $0 \leq H^2 \leq \mathds{1}$. Therefore, 
\begin{equation}
\sum_{l\in L} H_{lk}^{\dagger}H_{lk} \leq \mathds{1}
\end{equation}
for every set $L \subset K$. Hence we find that
\begin{align}
\sum_{(k,l)\in I}\|H_{lk}B_k^{1/2}\|_2^2 &= \sum_{(k,l)\in I} \Tr\left(H_{lk}^{\dagger}H_{lk}B_k\right)\\
&\leq \sum_{k\in K}\Tr B_k = \Tr B = 1.
\end{align}
Combining these estimates, we find that
\begin{equation}\label{bound1}
|W''| \leq 6f(p)\sqrt{p}.
\end{equation}
We now bound the first part of the sum  \eqref{rearrangement}, $W'$. These are actually the easy terms that can be treated as the case in Lemmas \ref{lemmaweak},\ref{lemmastrong}.
We first need to split up the first term $W'$ even more. Define the set
\begin{equation}
K' = \{k \left|\right. k,k+1 \in K\}
\end{equation}
and
\begin{equation}
V_k := 
\begin{cases}
   W_{k,k}+W_{k,k+1}+W_{k+1,k}+W_{k+1,k+1} & \text{if } k \in K' \\
   W_{k,k}       & \text{if } k \in K \setminus K'.
  \end{cases}
\end{equation}
We can now rewrite the first term as 
\begin{equation}
W' = V - V' = \sum_{k\in K}V_k - \sum_{k \in K'} W_{k+1,k+1}.
\end{equation}
Here we introduce an extra term $V' = \sum_{k \in K'} W_{k+1,k+1}$ to compensate the double counting of some of the diagonal elements $W_{k,k}$. We have obtained the finale decomposition of $W = V-V'+W''$. By the triangle inequality,
\begin{equation}\label{bound0}
W \leq |V| + |V'| + |W''|
\end{equation}
and since we already obtained a bound on $W''$, it suffices to bound the first two terms separately.\\
\\
We first deal with the term $|V|$. Once again, we first introduce some notation. Let us define the projector 
\begin{equation}
Q_k := 
\begin{cases}
   P_k\oplus P_{k+1} & \text{if } k \in K' \\
   P_k      & \text{if } k \in K \setminus K'.
  \end{cases}
\end{equation}
Now we define
\begin{equation}
\tilde{B}_k = Q_kBQ_k, \quad \tilde{A}_{kl} = Q_kAQ_l, \quad \tilde{H}_{kl} = Q_kHQ_l.
\end{equation}
Since $p^{k+2} \leq \tilde{B}_k \leq p^k$, we still have good bounds on the spectrum of $\tilde{B}_{kk}$, although slightly weaker than in the case of the operator $B_{kk}$. We now write the contribution of $V_k$ as
\begin{equation}
V_k = i\Tr\left(\log\tilde{B}_k\tilde{A}_{k,k}\tilde{H}_{k,k}-\tilde{H}_{k,k}\tilde{A}_{k,k}\log\tilde{B}_k\right)
\end{equation}
independently of $k \in K'$ or not, which was the motivation behind the introduction of the projectors $Q_k$. Since $\|H\| \leq 1$, we have for all $k$ that $\|\tilde{H}_{k,k}\| \leq 1$. Hence,
\begin{equation}
V_k \leq \|[\log\tilde{B}_k,\tilde{A}_{k,k}]\|_1.
\end{equation} 
We now apply Lemma \ref{lemmastrong} and $p^{k+2} \leq \tilde{B}_k \leq p^k$ to conclude that
\begin{equation}
V_k \leq \Tr \left(\tilde{A}_{k,k}\right)\log\left(\frac{1}{p^2}\right).
\end{equation}
To bound the contribution of $|V|$ we sum over all $k$ and since all terms are positive, we find that
\begin{align}
V &\leq \log\left(\frac{1}{p^2}\right)\sum_{k\in K} \Tr\tilde{A}_{k,k}\\
&\leq  \log\left(\frac{1}{p^2}\right) 2 \sum_{k\in K} \Tr A_{k,k}\\
&= 4p\log\left(\frac{1}{p}\right).\label{bound2}
\end{align}
The extra factor 2 in the second line appears because of the double counting of some of the diagonal elements, i.e. when $k \in K'$.\\
\\
With a similar reasoning we can bound the contribution of $|V'|$. We have that
\begin{equation}
W_{k,k} \leq \|[\log B_k,A_{k,k}]\|_1 \leq \Tr\left(A_{k,k}\right) \log\left(\frac{1}{p}\right).
\end{equation}
We can now sum over $k \in K'$ and obtain
\begin{align}
V' &\leq \log\left(\frac{1}{p}\right)\sum_{k\in K'} \Tr A_{k+1,k+1}\\
&\leq \log\left(\frac{1}{p}\right)\sum_{k\in K} \Tr A_{k,k}\\
& = \log\left(\frac{1}{p}\right)\Tr A\\
&= p\log\left(\frac{1}{p}\right).\label{bound3}
\end{align}
We obtained the final upper bound on $|W|$. Indeed, putting all obtained upper bounds \eqref{bound0}, \eqref{bound1}, \eqref{bound2}, \eqref{bound3} together, we find that
\begin{equation}\label{boundmetf}
|W| \leq 6\sqrt{p}f(p) + 5p\log(1/p).
\end{equation}
As claimed, for $p \leq e^{-2}$, which is the regime of interest, we find that
\begin{equation}\label{upperboundlog}
|W| \leq 11p\log(1/p).
\end{equation}
More generally, for $p\leq 1/2$ one can easily show that the bound \eqref{boundmetf} is itself smaller than $11\mathsf{h}(p)$ which proves Theorem \ref{theoremcomm} for $p \leq 1/2$ and with $c=11$.\\
\\
We can transform the case $1/2\leq p < 1$ to the discussed case $0<p\leq 1/2$ by  using the substitution $A \mapsto B-A$ and using the fact that $\Tr(B-A) = 1-p$ and $[A,\log(B)] = -[B-A,\log(B)]$. Hence, the claim holds for all $p \in [0,1]$.
\end{proof}

\section{The Stability of the Area Law in a Gapped Phase}\label{sec:Stability}
In this section we deal with quantum spin systems. Let us introduce these systems in more detail. A quantum spin system is defined on an underlying set of vertices $\mathcal{L}$, commonly referred to as sites. The set of vertices $\mathcal{L}$ is referred to as lattice. For simplicity we restrict ourselves in this paper mainly to finite subsets $\mathcal{L} = \mathbb{Z}_L^{\nu}$ of the $\nu$-dimensional integer lattice $ \mathbb{Z}^{\nu}$, $\nu \in \mathbb{N}$. The sites $v$ can be denoted by their coordinates $v=(v_1,\ldots, v_{\nu})$. As the notation $\mathbb{Z}_L^{\nu}$ suggests, we assume periodic boundary conditions, hence $\mathcal{L}$ has the structure of a $\nu$-dimensional torus. Given such a subset, we obtain a quantum spin system by attaching to every vertex $v \in \mathcal{L}$ a $d$-dimensional Hilbert space $\mathcal{H}_v \cong \mathbb{C}^d$. The restriction to isomorphic Hilbert spaces can be removed, although a uniform upper bound on the local dimension is required for some of the arguments. The Hilbert space $\mathcal{H}_{\mathcal{L}}$ of the lattice $\mathcal{L}$ is defined as 
\begin{equation}
\mathcal{H}_{\mathcal{L}} = \bigotimes_{v\in \mathcal{L}}\mathcal{H}_v.
\end{equation}

We need a metric on the set $\mathcal{L}$. There are a few natural and equivalent metrics one can equip $\mathcal{L} \subset \mathbb{Z}^{\nu}$ with, like the Manhattan metric or the shortest path metric. In this paper, we use the shortest path metric. We denote the shortest path distance between two points $x,y \in \mathcal{L}$ as $\dd(x,y)$. Other metrics are denoted by $d(x,y)$.

\begin{definition}\label{metric}
Let $x,y \in \mathcal{L}$ and denote the coordinates of $x$ by $(x_1,\ldots, x_{\nu})$ and of $y$ by $(y_1,\ldots, y_{\nu})$. Then, the \textbf{shortest path distance} $\dd(x,y)$ is defined as
\begin{equation}
\dd(x,y)=\sum_{i=1}^{\nu} \min_{n\in\mathbb{N}} \left|x_i-y_i+nL\right|.
\end{equation}
The distance between two subsets $X,Y \subset \mathcal{L}$ is defined as $\dd(X,Y) =\min_{x\in X, y\in Y} \dd(x,y)$. The \textbf{diameter} of a subset $X\subset \mathcal{L}$ is defined as $\text{diam}(X)=\max_{x_1,x_2 \in X}\dd(x_1,x_2)$. The \textbf{ball} centred at $v_0\in \mathcal{L}$ with radius $r$ is defined as $B_r(v_0) = \{w\in \mathcal{L} \:|\: \dd(v_0,w) \leq r\}$.
\end{definition}
Given a metric, we can define the boundary of a given subset $\mathcal{V} \subset \mathcal{L}$ and the area of the boundary between two subsets.
\begin{definition}\label{defarea}
The \textbf{neighbourhood} $N(v)$ of a site $v \in \mathcal{L}$ is defined as 
\begin{equation}
N(v) = \{w\in \mathcal{L} \:|\: \dd(v,w) = 1\}.
\end{equation}
The \textbf{boundary} $\partial\mathcal{V}$ of a subset $\mathcal{V} \subset \mathcal{L}$ is defined as the points of $\mathcal{V}$ that have a neighbour not in $\mathcal{V}$,
\begin{equation}
\partial\mathcal{V} = \{v\in \mathcal{V} \:|\: N(v) \cap (\mathcal{L} \setminus \mathcal{V}) \neq \emptyset \}.
\end{equation}
Let $\mathcal{B}_1,\mathcal{B}_2$ be a bipartition of $\mathcal{L}$, $\mathcal{L} = \mathcal{B}_1 \bigsqcup \mathcal{B}_2$. The size $A$ of the \textbf{area} of the boundary of this bipartition is defined as $A=\max(|\partial \mathcal{B}_1|,|\partial \mathcal{B}_2|)$. 
\end{definition}

\begin{remark}\label{remark1}
We can consider more general lattices and different metrics.
The most important property we need for the lattice and the metric is that the volume of a ball with radius $r$ does not increase too fast as $r$ increases. More precisely, we require the existence of a polynomial $P(r)$ such that
\begin{equation}\label{polyball}
\max_{v\in\mathcal{L}}|B_r(v)| \leq P(r).
\end{equation}
Clearly, $\mathcal{L} = \mathbb{Z}_L^{\nu}$ equipped with the metric of Definition \ref{metric} satisfies \eqref{polyball} with $P(r) = (2r)^{\nu}$.
\end{remark}

Although we only consider finite sets, quantum spin systems can be rigorously defined and used in the thermodynamic limit. For infinite systems non-trivial conditions on the lattice and the metric are needed to use a similar formalism as in finite Hilbert spaces; see \cite{bratteli1981operator} for more details. These conditions are all satisfied for $\mathbb{Z}^{\nu}$ equipped with the metric $d$. To work in the thermodynamic limit, one considers the relevant algebra of observables. For finite lattices the approach based on Hilbert spaces and on the algebra of observables are equivalent. The algebra of observables associated to a given site $v \in \mathcal{L}$ is given by $\mathcal{A}_v := B(\mathcal{H}_v) \cong M_d(\mathcal{C})$. The algebra of observables of the entire lattice $\mathcal{L}$ is given by
\begin{equation}
\mathcal{A}_{\mathcal{L}} = \bigotimes_{v\in\mathcal{L}} \mathcal{A}_v.
\end{equation}
The \textbf{support} $\text{supp}(A)$ of an operator $A \in \mathcal{A}_{\mathcal{L}}$ is defined as the smallest set of sites on which $A$ acts non trivially. If $\text{supp}(A) = \mathcal{V} \subset \mathcal{L}$, then $A \in \mathcal{A}_{\mathcal{V}}:=\bigotimes_{v\in\mathcal{V}}\mathcal{A}_v$.

It is often useful to define an potential $\Phi$ that generates the Hamiltonian $H$ of a quantum spin system. This is especially convenient if we want to consider the same type of interaction on lattices $\mathcal{L}$ of different sizes, or to rigorously study quantum spin systems in the thermodynamic limit \cite{bratteli1981operator}. Given a lattice, a potential is a map $\Phi$ from the finite subsets $\mathcal{V}$ of this lattice to the operator algebra, $
\Phi: \mathcal{V} \mapsto \Phi(\mathcal{V}) \in \mathcal{A}_{\mathcal{V}}$
such that $\Phi(\mathcal{V})$ is Hermitian for all finite  $\mathcal{V}$. The Hamiltonian of $H$ on $\mathcal{L}$ is then defined as
\begin{equation}
H=\sum_{\mathcal{V} \subset \mathcal{L}}\Phi(\mathcal{V}).
\end{equation}
If we define a potential $\Phi$ on $\mathbb{Z}^{\nu}$, we can use it to generate Hamiltonians $H_{\mathcal{L}}$ for all lattices $\mathcal{L} \subset \mathbb{Z}^{\nu}$.

Since $\mathcal{L}$ has periodic boundary conditions, there exist well defined shift operators 
$T_k$ that map $\mathcal{A}_{(v_1,\ldots,v_k,\ldots,v_{\nu})}$ to $\mathcal{A}_{(v_1,\ldots,v_k+1,\ldots,v_{\nu})}$ for every direction $k \in \{1,\ldots,\nu\}$. We use these elementary shift operators to define the operator $T_{\vec{e}}$ for every direction $\vec{e} \in \mathbb{Z}^{\nu}$.

We now turn our attention to the interactions on quantum spin systems. We need several restrictions on the type of interactions we consider. Most importantly, we require interactions to be local. For several applications a gap between the lowest eigenvalues and the rest of the spectrum is also required. We now define these notions rigorously.
\begin{definition} 
Suppose we have a quantum spin system defined on a lattice $\mathcal{L}$. A \textbf{strictly local}, bounded Hamiltonian $H$ with range $R$ is a Hamiltonian that can be written as a sum of terms $h_v$ with $v \in \mathcal{L}$, where each term $h_v$ only acts non-trivially on sites $w \in B_{R}(v)$ for a fixed, finite $R\geq 0$. Moreover, we require that the norm of the local terms $h_v$ is uniformly bounded by a constant $C$:
\begin{equation}
H= \sum_{v \in \mathcal{L}} h_v, \quad \text{supp}(h_v) \subset B_R(v), \quad \|h_v\| \leq C.
\end{equation}
A Hamiltonian is \textbf{quasi-local} with decay function $f$ if it can be written as
\begin{equation}
H=\sum_{v\in \mathcal{L}}\sum_{r \in \mathbb{N}} h_v(r), \quad \text{supp}(h_v(r)) \subset B_r(v), \quad \|h_v(r)\| \leq f(r).
\end{equation}
If we do not specify the decay function $f$, we assume that $f$ decreases super-polynomially in $r$.
We call a unitary local or quasi-local if it is generated by a local or quasi-local Hamiltonian respectively. A local or quasi-local potential is one that generates local or quasi-local Hamiltonians.
A potential $\Phi$ is called \textbf{translation invariant} if $T_{\vec{e}}\Phi(\mathcal{N})T_{\vec{e}}^{\dagger} = \Phi(\mathcal{V}+\vec{e})$ for all subsets $\mathcal{V}\subset \mathcal{L}$ and all directions $\vec{e} \in \mathbb{Z}^{\nu}$. Here we use the notation $\mathcal{V}+\vec{e}=\{v \in \mathcal{L}\:|\: v - \vec{e} \in \mathcal{V}\}$. 
\end{definition}

Given a translation invariant potential $\Phi$ on the lattice $\mathbb{Z}^{\nu}$, we can use it to obtain Hamiltonians $H_L$ for all lattices $\mathcal{L} = \mathbb{Z}_L^{\nu}$ for all values of $L$.  
These Hamiltonians are itself translation invariant and can be decomposed as $H_L=\sum_{v \in \mathcal{L}} h_v$ with $T_{\vec{e}}h_vT_{\vec{e}}^{\dagger}=h_{v+\vec{e}}$ for all sites $v\in \mathcal{L}$ and all directions $\vec{e} \in \mathbb{Z}^{\nu}$.

Next, we define the notion of a gapped Hamiltonian. Thereto, we need to consider the same interaction on lattices $\mathcal{L}$ of increasing size. Hence, it is natural to consider translation invariant Hamiltonians. Indeed, translation invariant Hamiltonians can naturally be defined on lattices $\mathcal{L} = \mathbb{Z}_L^{\nu}$ for all sizes $L$. Indeed, we can define a translation invariant potential $\Phi$ on the infinite lattice $ \mathbb{Z}^{\nu}$.  We can now look at the behaviour of the sequence of  Hamiltonians $H_L:= \sum_{\mathcal{V} \subset \mathcal{L}}\Phi(\mathcal{V})$ defined on lattices $\mathcal{L} = \mathbb{Z}_L^{\nu}$ of increasing size $L$.

\begin{definition}
Let $\Phi$ be a potential on $\mathbb{Z}^{\nu}$ and denote by $H_L$  the translation invariant Hamiltonians generated by $\Phi$ defined on the Hilbert space associated with the lattice $\mathbb{Z}_L^{\nu}$ for all values of $L$. Then we call the Hamiltonians $H_L$ and the interaction $\Phi$  \textbf{gapped} with ground state degeneracy $q$ if the following two conditions are satisfied. First, the ground state of $H_L$ is $q$-fold degenerate if there is a constant $q\in \mathbb{N}$ such that the  $q$ lowest eigenvalues $E_{0,1}(L),\ldots, E_{0,q}(L)$ of $H_L$  satisfy
\begin{equation}
\delta E= \max_{k,k'}{|E_{0,k}(L)-E_{0,k'}(L)|} \rightarrow 0 \text{ as } L \rightarrow \infty.
\end{equation}
Second, the distance between the ground state sector ${E_{0,1},\ldots, E_{0,q}}$ and the rest of the spectrum is larger than a positive constant $\Delta$ which is independent of $L$. The constant $\Delta$ is called the \textbf{spectral gap}.
\end{definition}

Given the lattice $\mathcal{L}$ equipped with the metric $\dd$ and a strictly local Hamiltonian $H$, one can prove that the time evolution of a strictly local observable $A$ under the evolution generated by $H$ is still approximately local after a finite time $t$. We need this property in subsection \ref{sec:exactQAC}. The property is reminiscent of the concept of strict light cones in relativistic theories. The precise statement is given below in Theorem \ref{theoremliebrobinson}. We discuss it in a general setting since this allows us to apply our results to other quantum spin systems than $\mathcal{L}$ equipped with the metric $\dd$.

For local or quasi-local interactions with exponential decay function on graphs $\mathcal{L} \subset \mathbb{Z}^{\nu}$ the following important theorem holds \cite{liebrobinson}.
\begin{theorem}[Lieb-Robinson]\label{theoremliebrobinson}
Let $\mathcal{L}$ be a lattice equipped with a metric $d$ and a potential $\Phi$. Suppose that for all sites $v \in \mathcal{L}$, the following holds:
\begin{equation}
\sum_{\mathcal{V} \ni v} \|\Phi(\mathcal{V})\||\mathcal{V}|\exp(\mu\diam(\mathcal{V})) \leq s<\infty.
\end{equation}
for some positive constant $\mu,s$. Take a finite subset $\mathcal{W} \subset \mathcal{L}$ and let $H$ be the Hamiltonian generated by $\Phi$ on $\mathcal{W}$. Let $A_X,B_Y$ be local operators supported on disjoint finite sets $X,Y \subset \mathcal{W}$, respectively. Denote the time evolution of $A$ by $\tau^{H}_t(A) := e^{-iHt}Ae^{iHt}$. Then,
\begin{equation}
\|[\tau^H_t(A_X),B_Y]\| \leq 2\|A_X\|\|B_Y\||X|\exp(2s|t|-\mu d(X,Y)).
\end{equation}
\end{theorem}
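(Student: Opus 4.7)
I would prove this by the classical Dyson-series expansion of the commutator $[\tau^H_t(A_X), B_Y]$ in $t$. The driving observation is that $\tfrac{d}{dt}\tau^H_t(A_X) = -i\tau^H_t([H, A_X])$, and $[H, A_X]$ only involves interaction terms $\Phi(\mathcal{V})$ whose support meets $X$, since all others commute with $A_X$. Iterating this produces chains of interaction supports emanating from $X$; only chains whose union ultimately reaches $Y$ can contribute to the commutator with $B_Y$. The exponential weight $e^{\mu\diam(\mathcal{V})}$ built into the hypothesis is tailor-made to convert the resulting path-length constraint into the factor $e^{-\mu d(X, Y)}$.

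\textbf{The core differential inequality.} After reducing to $\|A_X\|=\|B_Y\|=1$ by bilinearity, I would fix $B_Y$ and define
\begin{equation*}
\Psi(W, t) \;:=\; \sup_{A \in \mathcal{A}_W,\, \|A\|=1}\, \bigl\|[\tau^H_t(A), B_Y]\bigr\|.
\end{equation*}
Using $\|[\Phi(\mathcal{V}), A]\| \leq 2\|\Phi(\mathcal{V})\|$ together with the fact that $[\Phi(\mathcal{V}), A]$ is supported on $W \cup \mathcal{V}$, integration of the time derivative and taking the supremum yields the self-referential integral inequality
\begin{equation*}
\Psi(W, t) \;\leq\; \Psi(W, 0) + 2\int_0^{|t|} \sum_{\mathcal{V}:\,\mathcal{V}\cap W \neq \emptyset} \|\Phi(\mathcal{V})\|\, \Psi(W\cup\mathcal{V}, s)\, ds,
\end{equation*}
with $\Psi(W, 0) = 0$ when $W \cap Y = \emptyset$ and $\Psi(W, 0)\leq 2$ otherwise. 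Iterating this inequality $n$ times expresses $\Psi(X, t)$ as a sum over chains $(\mathcal{V}_1, \ldots, \mathcal{V}_n)$ subject to $\mathcal{V}_1 \cap X \neq \emptyset$ and $\mathcal{V}_{i+1} \cap (X \cup \mathcal{V}_1 \cup \cdots \cup \mathcal{V}_i) \neq \emptyset$, weighted by $(2|t|)^n/n!$ and $\prod_i \|\Phi(\mathcal{V}_i)\|$, with a nonvanishing end-contribution only when the terminal union intersects $Y$.

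\textbf{Distance decay and the main obstacle.} Because $X \cap Y = \emptyset$, every contributing chain satisfies $\sum_i \diam(\mathcal{V}_i) \geq d(X, Y)$, so the factorization
\begin{equation*}
\prod_i \|\Phi(\mathcal{V}_i)\| \;\leq\; e^{-\mu d(X, Y)}\, \prod_i \|\Phi(\mathcal{V}_i)\|\, e^{\mu\diam(\mathcal{V}_i)}
\end{equation*}
immediately pulls out the advertised distance-decay factor. The main technical hurdle, as I see it, is to bound the chain sum $\sum_{\text{chains of length } n} \prod_i \|\Phi(\mathcal{V}_i)\|\, e^{\mu\diam(\mathcal{V}_i)}$ by $|X|\, s^n$. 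I would proceed by induction on $n$, writing the sum over $\mathcal{V}_{i+1}$ as $\sum_{v \in W_i}\sum_{\mathcal{V}\ni v}$ where $W_i := X \cup \mathcal{V}_1 \cup \cdots \cup \mathcal{V}_i$. The inner sum is controlled by $s$ via the hypothesis, but the outer sum has cardinality $|W_i|$ that grows with $i$; this is exactly what the $|\mathcal{V}|$-factor in the hypothesis absorbs, since each new vertex in $W_i$ was donated by some earlier $\mathcal{V}_j$ whose $|\mathcal{V}_j|$ one charges against at that step. Once this combinatorial estimate is in place, the nested time integrals assemble into $(2s|t|)^n/n!$ and summation over $n$ yields $e^{2s|t|}$, producing the claimed bound $2\|A_X\|\|B_Y\||X|\exp(2s|t| - \mu d(X, Y))$.
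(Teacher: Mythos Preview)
The paper does not actually supply a proof of this theorem; it is quoted as a known result from the literature (the original Lieb--Robinson paper, in the modern form due to Nachtergaele--Sims and others). So there is no in-paper argument to compare against, and I will simply assess your sketch on its own merits.

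Your overall plan is the standard one, and the identification of the distance-decay mechanism via $\sum_i\diam(\mathcal{V}_i)\geq d(X,Y)$ is correct. The problem lies in the specific integral inequality you write down. With $\Psi(W\cup\mathcal{V},s)$ on the right-hand side, the iterated chain sum does \emph{not} collapse to $|X|\,s^n$. If one tries the induction you suggest with hypothesis $T_n(W)\leq |W|\,s^n$, the step
\[
T_{n+1}(W)\;\leq\;\sum_{\mathcal{V}\cap W\neq\emptyset} g(\mathcal{V})\,(|W|+|\mathcal{V}|)\,s^n
\;\leq\;s^n\Bigl(|W|\!\sum_{v\in W}\sum_{\mathcal{V}\ni v}g(\mathcal{V})\;+\;\sum_{v\in W}\sum_{\mathcal{V}\ni v}g(\mathcal{V})|\mathcal{V}|\Bigr)
\;\leq\;|W|(|W|+1)\,s^{n+1}
\]
does not close; one obtains instead $T_n(X)\leq |X|(|X|+1)\cdots(|X|+n-1)\,s^n$, and after summation against $(2|t|)^n/n!$ this yields $(1-2s|t|)^{-|X|}$, which is only finite for $|t|<1/(2s)$ and never produces the exponential $e^{2s|t|}$ valid for all times. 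Your ``charging'' heuristic does not repair this: a single vertex donated by $\mathcal{V}_j$ can serve as the anchor $v_i$ for arbitrarily many later steps $i>j$, so there is no one-to-one bookkeeping that lets you pay for all later anchors out of the single factor $|\mathcal{V}_j|$.

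The standard remedy is to sharpen the differential inequality itself using the Jacobi identity. Writing $\tilde H=\sum_{\mathcal{V}\cap W\neq\emptyset}\Phi(\mathcal{V})$ and using
\[
\bigl[[\tau_t(\Phi(\mathcal{V})),\tau_t(A)],B\bigr]
=\bigl[\tau_t(\Phi(\mathcal{V})),[\tau_t(A),B]\bigr]-\bigl[\tau_t(A),[\tau_t(\Phi(\mathcal{V})),B]\bigr],
\]
the first term is a norm-preserving rotation of $f(t)=[\tau_t(A),B]$, and the second gives the recursion
\[
\Psi(W,t)\;\leq\;\Psi(W,0)+2\int_0^{|t|}\sum_{\mathcal{V}\cap W\neq\emptyset}\|\Phi(\mathcal{V})\|\,\Psi(\mathcal{V},s)\,ds,
\]
with $\Psi(\mathcal{V},s)$, not $\Psi(W\cup\mathcal{V},s)$, on the right. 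Now the chains are genuine paths ($\mathcal{V}_{i+1}\cap\mathcal{V}_i\neq\emptyset$), and summing from the inside out gives at each step $\sum_{v\in\mathcal{V}_{i}}\sum_{\mathcal{V}_{i+1}\ni v}\|\Phi(\mathcal{V}_{i+1})\|e^{\mu\diam\mathcal{V}_{i+1}}\leq |\mathcal{V}_i|\cdot s$ when combined with the $|\mathcal{V}_i|$ waiting in the hypothesis at the next level; this is exactly where the $|\mathcal{V}|$-factor is consumed, and the induction closes to $|X|\,s^n$.
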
 
This theorem quantifies the speed at which information can propagate through the system \cite{bravyi2006lieb}. The effective speed is given by $2s/\mu$. Clearly, the quantum spin system defined by $\mathcal{L} = \mathbb{Z}_L^{\nu}$ equipped with the metric $\dd$ and a strictly local Hamiltonian satisfies the conditions in Theorem \ref{theoremliebrobinson}. 
Theorem \ref{theoremliebrobinson} can be used to prove the existence of dynamics associated with a potential $\Phi$ in the thermodynamic limit \cite{bratteli1981operator,nachtergaele2006propagation,nachtergaele2010existence,nachtergaele2009lieb}.

It turns out that Theorem \ref{theoremliebrobinson} also holds for more general quasi-local interactions $H$ on more general lattices \cite{liebrobinson,hastings2006spectral,nachtergaele2006lieb}.  
We state the extended theorem since we need it to generalise our results to a broader family of quantum spin systems.
\begin{theorem}\label{theoremLB2}
Let $\mathcal{L}$ be a lattice equipped with a metric $d$ and $\Phi$ a potential. Suppose there exists  a positive real function $K$ such that for all $v,w \in \mathcal{L}$ we have
\begin{equation}\label{reproducing}
\sum_{x\in\mathcal{L}}K(d(v,x))K(d(x,w)) \leq \lambda K(d(v,w))
\end{equation}
for some constant $\lambda$. Furthermore, suppose that for all $v,w \in \mathcal{L}$,
\begin{equation}
\sum_{\mathcal{V}\ni v,w}\|\Phi(\mathcal{V})\| \leq K(d(v,w)) 
\end{equation}
for $\mathcal{V}$ a finite subset of $\mathcal{L}$. Take a finite subset $\mathcal{W} \subset \mathcal{L}$ and let $H$ be the Hamiltonian generated by $\Phi$ on $\mathcal{W}$. Let $A_X,B_Y$ be local operators supported on disjoint finite sets $X,Y \subset \mathcal{W}$, respectively. Then,
\begin{equation}\label{LRgen}
\|[\tau^H_t(A_X),B_Y]\| \leq 2\|A_X\|\|B_Y\||X||Y|K(d(X,Y))\frac{\exp(2\lambda|t|)}{\lambda}.
\end{equation}
\end{theorem}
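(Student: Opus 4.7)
The plan is to adapt the classical Lieb--Robinson argument (Hastings--Koma, Nachtergaele--Sims) to the generalized setting. The three ingredients are a Duhamel-style integral inequality for $C(t):=[\tau_t^H(A_X),B_Y]$, a Dyson iteration over chains of interaction terms of $H$, and the reproducing condition \eqref{reproducing} on $K$ which collapses the iterated lattice-site sums. First, differentiating $C(t)$, using $\frac{d}{dt}\tau_t^H(A_X)=-i[H,\tau_t^H(A_X)]$ and the Jacobi identity, yields
\begin{equation}
\tfrac{d}{dt}C(t) \;=\; -i[H,C(t)] \;-\; i\bigl[[H,B_Y],\tau_t^H(A_X)\bigr].
\end{equation}
The term $-i[H,C(t)]$ generates a unitary rotation that preserves the operator norm in Duhamel's formula, and $X\cap Y=\emptyset$ kills the initial value $[A_X,B_Y]$. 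Consequently,
\begin{equation}
\|C(t)\| \;\le\; \int_0^{|t|}\sum_{\mathcal{V}\cap Y\neq\emptyset}\bigl\|[[\Phi(\mathcal{V}),B_Y],\tau_s^H(A_X)]\bigr\|\,ds,
\end{equation}
and the integrand is again a commutator of $\tau_s^H(A_X)$ with an operator $[\Phi(\mathcal{V}),B_Y]$ of norm $\le 2\|\Phi(\mathcal{V})\|\|B_Y\|$ supported on $\mathcal{V}\cup Y$, which is exactly what makes the estimate iterable.

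Second, iterating the inequality $n$ times gives a Dyson-type bound: at order $n$ the contribution is the integral $\int_{0<s_n<\cdots<s_1<|t|}ds_1\cdots ds_n=(2|t|)^n/(2^nn!)$ (the factor $2^n$ is balanced by a $2$ from each nested commutator norm bound) times a sum over chains $(\mathcal{V}_1,\ldots,\mathcal{V}_n)$ of finite subsets with $\mathcal{V}_1\cap Y\neq\emptyset$, $\mathcal{V}_i\cap\mathcal{V}_{i+1}\neq\emptyset$, and $\mathcal{V}_n\cap X\neq\emptyset$ (the last condition arises from the terminating boundary term $\|[A_X,\cdot]\|\le 2\|A_X\|\|\cdot\|$, and the non-terminating remainder vanishes as $n\to\infty$), weighted by $\prod_{i=1}^n\|\Phi(\mathcal{V}_i)\|$. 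To sum the chain weight I pick witness sites $v\in\mathcal{V}_1\cap Y$, $w\in\mathcal{V}_n\cap X$, and $x_i\in\mathcal{V}_i\cap\mathcal{V}_{i+1}$; summing over each $\mathcal{V}_i$ with the endpoints $x_{i-1},x_i$ fixed and invoking $\sum_{\mathcal{V}\ni v,w}\|\Phi(\mathcal{V})\|\le K(d(v,w))$ replaces the product by $K(d(v,x_1))K(d(x_1,x_2))\cdots K(d(x_{n-1},w))$. Applying the reproducing inequality \eqref{reproducing} inductively $n-1$ times telescopes the free sums over intermediate witnesses into $\lambda^{n-1}K(d(v,w))$, and summing the endpoints $v\in Y$, $w\in X$ gives at most $|X|\,|Y|\,K(d(X,Y))\,\lambda^{n-1}$ at order $n$.

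Putting everything together produces a geometric series bounded by
\begin{equation}
\|C(t)\| \;\le\; 2\|A_X\|\|B_Y\||X||Y|\,K(d(X,Y))\sum_{n\ge 1}\frac{(2\lambda|t|)^{n}}{n!\,\lambda} \;\le\; 2\|A_X\|\|B_Y\||X||Y|\,K(d(X,Y))\,\frac{e^{2\lambda|t|}}{\lambda},
\end{equation}
which is exactly \eqref{LRgen}. The main obstacle I anticipate is the bookkeeping in the iteration: one must handle the fact that at each step the support of the effective observable grows from $\mathcal{V}_i\cup\cdots\cup Y$ rather than just $\mathcal{V}_i$, so the chain condition $\mathcal{V}_{i+1}\cap\mathcal{V}_i\neq\emptyset$ I used above must be relaxed (interactions may attach to any earlier $\mathcal{V}_j$ or to $Y$), and one must check that the resulting broader class of chains still telescopes through \eqref{reproducing} without overcounting beyond the clean $|X|\,|Y|$ prefactor. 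Once this bookkeeping is settled, the remainder is a routine Volterra-series estimate, and the characteristic exponential $e^{2\lambda|t|}$ emerges from the geometric growth of the kernel convolutions.
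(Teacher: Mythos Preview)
The paper does not actually prove Theorem~\ref{theoremLB2}; it merely states it and cites \cite{liebrobinson,hastings2006spectral,nachtergaele2006lieb}. Your proposal follows the standard route of those references (the Nachtergaele--Sims/Hastings--Koma argument): Duhamel for $C(t)=[\tau_t^H(A_X),B_Y]$, Dyson iteration over interaction chains, and collapse of the iterated site sums via the reproducing property~\eqref{reproducing}. So in spirit you are reproducing exactly the proof the paper defers to the literature.

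Your self-identified obstacle is real, and it is worth saying how the cited proofs dispose of it, because as written your chain condition $\mathcal{V}_{i+1}\cap\mathcal{V}_i\neq\emptyset$ is too restrictive while the relaxed condition $\mathcal{V}_{i+1}\cap(\mathcal{V}_1\cup\cdots\cup\mathcal{V}_i\cup Y)\neq\emptyset$ does \emph{not} telescope cleanly through~\eqref{reproducing}. The resolution in \cite{nachtergaele2006lieb} is to set up the recursion not for the specific operator but for the quantity $C_B(Z,t):=\sup\{\|[\tau_t^H(A),B_Y]\|/\|A\|:\operatorname{supp}A\subset Z\}$, and to prove the one-step inequality
\[
C_B(Z,t)\;\le\;C_B(Z,0)\;+\;2\sum_{\mathcal{V}:\mathcal{V}\cap Z\neq\emptyset}\|\Phi(\mathcal{V})\|\int_0^{|t|}C_B(\mathcal{V},s)\,ds,
\]
with $C_B(\mathcal{V},s)$ on the right rather than $C_B(\mathcal{V}\cup Z,s)$. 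This is obtained by splitting $H$ into the part that commutes with every $A\in\mathcal{A}_Z$ and its complement, and passing to an interaction picture so that the accumulated support of the previous steps never reappears. Once you have this recursion, iterating it produces exactly the simple chains $\mathcal{V}_1\cap X\neq\emptyset$, $\mathcal{V}_{i+1}\cap\mathcal{V}_i\neq\emptyset$, $\mathcal{V}_n\cap Y\neq\emptyset$ (starting from $Z=X$), your witness-site bound and~\eqref{reproducing} apply verbatim, and the series sums to $2\|A_X\|\|B_Y\||X||Y|K(d(X,Y))e^{2\lambda|t|}/\lambda$ as you wrote. With that one lemma inserted, your outline is a complete and correct proof.
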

Functions $K$ that satisfy inequality \eqref{reproducing} are called reproducing and where introduced in \cite{hastings2010quasi}. For $\mathcal{L} = \mathbb{Z}_L^{\nu}$ equipped with the shortest distance metric $d$, $K(r)=r^{-a}$ is reproducing for sufficiently large $a$. The exponential function $K(r) = e^{-r}$ is not reproducing, but $K(r)=e^{-r}r^{-a}$ is reproducing for $a$ large enough. Hence, an exponential with a smaller decay is reproducing. In the literature \cite{hastings2009quantization,hastings2010locality,nachtergaele2006lieb,nachtergaele2006propagation}, small adaptations of Theorem \ref{theoremLB2} appear. Most notably, with slightly different conditions on the functions $K$, the upper bound can depend on $\min(|X|,|Y|)$ instead of on the product $|X||Y|$. These differences are not important for the remainder of this paper.

\begin{definition}
Suppose we have a quantum spin system on a lattice $\mathcal{L}$ equipped with a metric $d$ and let $\Phi$ be a potential on $\mathcal{L}$. We call such a system \textbf{LR-local} with decay function $K$ if $K$ satisfies the conditions in Theorem \ref{theoremLB2}, which implies that inequality \eqref{LRgen} holds for all Hamiltonians $H_{\mathcal{W}}$ generated by $\Phi$ on finite subsets $\mathcal{W} \subset \mathcal{L}$. We only consider the case where $K$ decreases at least super-polynomially in $r$ and we refer to such systems simply as LR-local.
\end{definition}

\begin{remark}
There is a close connection between the notions of quasi-locality and LR-locality. Indeed, suppose we have a lattice $\mathcal{L}$ equipped with a metric $d$ such that the volume of balls with radius $r$ only increases polynomially in $r$. This requirement was discussed in Remark \ref{remark1}. Suppose that $\Phi$ is a potential such that the system is LR-local with a super-polynomial decay function $K$. Let $H = \sum_v \sum_r h_v(r)$ be the Hamiltonian generated by $\Phi$ on some finite subset $\mathcal{W} \subset \mathcal{L}$. Then we have for a fixed $v\in\mathcal{W}$ that
\begin{equation}
\left\|\sum_{r>R}h_v(r)\right\| \leq \sum_{w:d(w,v)>R}\sum_{\mathcal{V} \ni w,v}\|\Phi(\mathcal{V})\| 
\leq \sum_{r>R} |B_v(r)|K(r).
\end{equation}
Here, the summations are restricted to sites $w \in \mathcal{W}$ and $\mathcal{V} \subset \mathcal{W}$. Since $K$ decays super-polynomially and $|B_v(r)|$ only increases polynomially in $r$, this last summation still decays super-polynomially in $r$ with decay function $\tilde{K}$. We see that the LR-locality of this system implies that the Hamiltonian itself is quasi-local. In many applications the decay functions $K,\tilde{K}$ will be very similar \cite{hastings2010quasi}. Without loss of generality we will assume that they are equal and use the notation $K$. This can be achieved by using a decay function that dominates both $K,\tilde{K}$. Clearly, strictly local Hamiltonians are always LR-local.
\end{remark}

\subsection{Entanglement Generation in Quantum Spin Systems}\label{sec:entgeneration}
We first give a very simple application of the SIE bound \eqref{SIEbound} as an introduction to the main application and as a demonstration of the importance of the logarithmic dependence of this upper bound.
We consider the lattice $\mathcal{L}=\mathbb{Z}^{\nu}$  equipped with the metric $\dd$ and a quasi-local, translation invariant Hamiltonian $H$. For this application we only need the property discussed in Remark \ref{remark1} and the quasi-local properties of $H$, the LR-locality of this system is not needed. The restriction to translation invariant interactions can be removed, especially when considering more general lattices. .

We  first define the following quantities. Let $m(v) = \dd(v,\partial \mathcal{B}_1)$ for $v \in \mathcal{B}_2$ and $m(v) =\dd (v,\partial \mathcal{B}_2)$ for $v \in \mathcal{B}_1$ and let $M(r) = \{v\in \mathcal{L} \:|\: m(v) \leq r \}$. This last set contains  exactly the sites of the lattice whose distance to the boundary between $\mathcal{B}_1$ and $\mathcal{B}_2$ is at most $r$. It clear that $M(r) \leq \sum_{v\in \partial \mathcal{B}_1}|B_r(v)|+\sum_{v\in \partial \mathcal{B}_2}|B_r(v)|$. Hence for $\mathbb{Z}^{\nu}$ equipped with the metric $\dd$, it holds that $M(r) \leq 2A(2r)^{\nu}$. More generally, we can prove the following proposition if $M(r) \leq cAr^{\mu}$ for constants $c>0$ and $\mu \geq 0$. The existence of such constants is clear for all lattices that satisfy the condition in Remark \ref{remark1}.

\begin{proposition}\label{entraterealtime}
Consider the lattice $\mathbb{Z}^{\nu}$ equipped with the metric $\dd$ and let each site support a Hilbert space of dimension $d$. Let $\Phi$ be a translation invariant potential that generates a quasi-local Hamiltonian $H_L$ on the lattice  $\mathcal{L}=\mathbb{Z}_L^{\nu}$ for all $L$ with decay function $f$ that decreases faster than $r^{-(2\nu+1+\delta)}$ for a $\delta >0$.
Take $\mathcal{L}=\mathbb{Z}_L^{\nu}$ for a fixed $L$ and consider a bipartition $\mathcal{B}_1,\mathcal{B}_2$ of the system.  Denote the size of the area of the boundary between $\mathcal{B}_1$ and $\mathcal{B}_2$ by $A$. Then the entanglement rate of $H$ relative to this bipartition satisfies an area law, 
\begin{equation}
\left|\frac{dS_{\mathcal{B}_1}(s)}{ds}\right| \leq CA
\end{equation}
with $C$ a constant that depends on the details of the lattice, the metric and the Hamiltonian, but, importantly, not on the lattice size $L$.
\end{proposition}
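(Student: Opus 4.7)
The plan is to decompose $H_L$ into its quasi-local pieces, apply Theorem \ref{SIE} separately to every piece that straddles the cut between $\mathcal{B}_1$ and $\mathcal{B}_2$, and sum the resulting bounds. Write
\[
H_L = \sum_{v\in\mathcal{L}}\sum_{r\in\mathbb{N}} h_v(r), \qquad \supp(h_v(r))\subset B_r(v), \qquad \|h_v(r)\|\leq f(r).
\]
First I use that the rate $\Gamma$ is linear in the Hamiltonian (visible from \eqref{entrate}), so $\Gamma(H_L,\psi)=\sum_{v,r}\Gamma(h_v(r),\psi)$, and that any $h_v(r)$ whose support lies entirely inside $\mathcal{B}_1$ or entirely inside $\mathcal{B}_2$ is a local unitary on one side of the cut and therefore contributes $0$. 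Only the boundary-crossing terms, those for which $B_r(v)$ intersects both $\mathcal{B}_1$ and $\mathcal{B}_2$, survive.

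For each such term I apply Theorem \ref{SIE} with $A:=B_r(v)\cap\mathcal{B}_1$, $B:=B_r(v)\cap\mathcal{B}_2$, and the remainders of $\mathcal{B}_1,\mathcal{B}_2$ playing the role of the ancillas $a,b$. This produces
\[
|\Gamma(h_v(r),\psi)| \leq c\,\|h_v(r)\|\,\log d_{\min}(v,r), \qquad d_{\min}(v,r):=\min(\dim A,\dim B).
\]
Since $\min(|A|,|B|)\leq |B_r(v)|/2$, every site carries local dimension $d$, and $|B_r(v)|\leq (2r+1)^\nu$ by Remark \ref{remark1}, this gives $\log d_{\min}(v,r)\leq c_1 r^\nu \log d$ for a constant $c_1$ depending only on $\nu$.

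To finish, I count boundary-crossing terms: any such term has centre $v\in M(r)$, and the discussion preceding the proposition gives $|M(r)|\leq c_2 A r^\nu$; combined with translation invariance, which forces $\|h_v(r)\|\leq f(r)$ independently of $v$, this yields
\[
\left|\frac{dS_{\mathcal{B}_1}(s)}{ds}\right| \;\leq\; \sum_{r=0}^{\infty}|M(r)|\cdot c f(r)\cdot c_1 r^\nu \log d \;\leq\; c\, c_1 c_2\, A\log(d)\sum_{r=0}^{\infty} f(r)\, r^{2\nu}.
\]
I expect the only real subtlety to be the convergence of $\sum_r f(r) r^{2\nu}$: the factor $r^{2\nu}$ is forced by combining the $r^\nu$ from the SIE logarithm (volume of $B_r(v)$) with the $r^\nu$ from counting lattice sites within distance $r$ of the cut, and the hypothesis $f(r)=o(r^{-(2\nu+1+\delta)})$ is precisely what makes $f(r) r^{2\nu}$ summable, delivering a constant $C$ that depends only on $\nu$, $d$, and the tail of $f$, and in particular not on $L$.
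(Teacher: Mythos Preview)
Your proposal is correct and follows essentially the same route as the paper: decompose $H_L$ into its quasi-local terms, discard those supported entirely on one side of the cut, apply Theorem~\ref{SIE} to each surviving term with the rest of the lattice playing the role of ancillas, and then combine the $r^\nu$ from the logarithm of the ball volume with the $r^\nu$ from $|M(r)|$ to obtain $\sum_r r^{2\nu} f(r)$, whose convergence is exactly the decay hypothesis on $f$. The only cosmetic differences are that the paper writes the rate via the explicit formula~\eqref{entrate} rather than invoking linearity of $\Gamma$, and bounds $d_{\min}$ by the full ball $d^{(2r)^\nu}$ rather than your slightly sharper $d^{|B_r(v)|/2}$; neither affects the argument.
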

\begin{proof}
By definition $H$ can be decomposed as
\begin{equation}
H=\sum_{v \in \mathcal{L}} \sum_{r \in \mathbb{N}} h_v(r) \text{ with } \|h_v(r)\| \leq f(r).
\end{equation}
  Let us assume the system is in a pure state $\ket{\psi}$ and denote $\rho_{\mathcal{B}_1} = \Tr_{\mathcal{B}_2}\ket{\psi}\bra{\psi}$. By definition, the rate at which the Hamiltonian $H$ creates entanglement between $\mathcal{B}_1$ and $\mathcal{B}_2$ is given by
\begin{equation}\label{sumS}
\frac{dS_{\mathcal{B}_1}(s)}{ds} = i \sum_{v\in\mathcal{L}} \sum_{r\in\mathbb{N}}\Tr\left(h_v(r)[\ket{\psi}\bra{\psi},\log \rho_{\mathcal{B}_1} \otimes \mathds{1}_{\mathcal{B}_2}]\right).
\end{equation} 
It is clear that operators $h_v(r)$ that only act within $\mathcal{B}_1$ or $\mathcal{B}_2$ do not contribute to this rate. Indeed, suppose $h$ only acts within $\mathcal{B}_2$, then $[\rho_{\mathcal{B}_1},h]=0$. It follows immediately that 
\begin{equation}
\Tr\left(h[\ket{\psi}\bra{\psi},\log \rho_{\mathcal{B}_1} \otimes \mathds{1}_{\mathcal{B}_2}]\right) = 0
\end{equation}
using the cyclicity of the trace. If $h$ only acts on $\mathcal{B}_1$, similar arguments allow us to conclude the contribution of $h$ to the sum \eqref{sumS} vanishes. Indeed, if $h$ is only supported on $\mathcal{B}_1$ we have that
\begin{equation}
\Tr\left(h[\ket{\psi}\bra{\psi},\log \rho_{\mathcal{B}_1} \otimes \mathds{1}_{\mathcal{B}_2}]\right) = 
\Tr_R\left(h[\rho_{\mathcal{B}_1},\log\rho_{\mathcal{B}_1}]\right)=0.
\end{equation}

Therefore we can restrict the summation over $v$. We are interested in interaction terms $h_{v}(r)$ whose range $r$ is larger than the distance of the site $v$ to the boundary of the bipartition. We have that
\begin{align}\label{arearate}
\left|\frac{dS_{\mathcal{B}_1}(s)}{ds}\right| &\leq\sum_{v\in\mathcal{L}} \sum_{r\geq m(v)}\left|\Tr\left(h_{v}(r)[\ket{\psi}\bra{\psi},\log \rho_{\mathcal{B}_1} \otimes \mathds{1}_{\mathcal{B}_2}]\right)\right|.\\
&\leq
\sum_{r\in \mathbb{N}} \sum_{v\in \mathcal{L} : m(v) \leq r}\left|\Tr\left(h_{v}(r)[\ket{\psi}\bra{\psi},\log \rho_{\mathcal{B}_1} \otimes \mathds{1}_{\mathcal{B}_2}]\right)\right|\\
& \leq  \sum_{r\in\mathbb{N}} |M(r)| \left( \log\left(d^{(2r)^{\nu}}\right) \|h(r)\|\right)\\
& \leq 2^{\nu+1}c\log(d) A \sum_r r^{2\nu} \|h(r)\|. \label{seriesinr}
\end{align} 
In the first step we use the triangle inequality and restrict the summation to terms that contribute a non-zero value. In the second step, we change the order of the summations. In the third step we use Theorem \ref{SIE} and the fact that the support of $h(r)$ only grows like a polynomial in $r$ as stated in Remark \ref{remark1}. We use the polynomial $P(r)=(2r)^{\nu}$, but clearly the specific choice of the polynomial will only influence the constant prefactor. In the last step, we use the assumption on the increase of $M(r)$.
Clearly the condition on the decay of $\|h(r)\|$ assures that the last summation over $r$ converges.
\end{proof}
This proposition is most interesting when the geometry of the bipartition of $\mathcal{L}$ in subsets $\mathcal{B}_1,\mathcal{B}_2$ is not too complicated. Indeed, for complicated bipartitions, the size of the area might be of the same order as the volume, $L^{\nu}$. For more regular bipartitions, like a rectangular subset, the size of the boundary is typically only of the order $L^{\nu-1}$. Moreover, for such cases $M(r)$ typically only grows as $r$.

For strictly local Hamiltonians, or interactions with rapid exponential decay, the previously obtained bounds on the entanglement rate with a polynomial dependence on the dimension of the Hilbert spaces were good enough to obtain a similar result \cite{bravyi2006lieb} and prove that the entanglement rate scales like the area of the boundary of a bipartition. In contrast, it is clear that for general quasi-local Hamiltonians, only the logarithmic dependence of Theorem \ref{SIE} is strong enough to make the summation over $r$ in expression \eqref{seriesinr} converge.

\subsection{Exact Quasi-Adiabatic Continuation}\label{sec:exactQAC}
In this subsection we discuss the formalism of quasi-adiabatic continuation\footnote{The term \textit{quasi-adiabatic} is somewhat of a misnomer since we work with exact filter functions. The original authors \cite{HastingsWen,PhysRevB.69.104431} worked with approximate Gaussian filter functions. Some authors \cite{Auto} prefer the terminology \textit{spectral flow}.}, first introduced by Hastings and Wen in \cite{HastingsWen,PhysRevB.69.104431} and further developed and used by Hastings and collaborators \cite{hastings2007quasi,hastings2009quantization,bravyi2010topological} as well as other authors \cite{osborne2007simulating,Auto}. The aim of this subsection is to present the results that where obtained in the literature and that are needed in the rest of this paper. Since these results are scattered throughout the literature and the conventions and notations of different authors vary, we present a self contained introduction to the subject of quasi-adiabatic continuation in Appendix \ref{appendixa}.

The problem we consider is as follows. We have a quantum spin system defined on a lattice. Consider a path of Hamiltonians $H(s)$ smoothly depending on a parameter $s \in [0,1]$ such that there is a uniform lower bound for the gap $\Delta$ above the ground state energy of these Hamiltonians. We call such an interpolation a quasi-adiabatic path. The rigorous definition is as follows.

\begin{definition}\label{qapath} Consider a quantum spin system defined on a lattice $\mathcal{L}$ and let $\Phi_s$ be LR-local and gapped potentials for all $s \in [0,1]$. We call $\Phi_s$ a \textbf{quasi-adiabatic path} between $\Phi_0$ and $\Phi_1$ if the following conditions are satisfied. The potentials $\Phi_s$ are differentiable with respect to $s$. More specific, we require that $\partial_s \Phi_s(\mathcal{V}) \subset \mathcal{A}_{\mathcal{V}}$ for all finite $\mathcal{V} \subset \mathcal{L}$ and that there exists a constant $C_N$ such that for all $s$, $\|\partial_s \Phi_s(\mathcal{V})\| \leq C_N \|\Phi_s(\mathcal{V})\|$.
Moreover, we demand that the LR-locality is uniform in the sense that there exists a super-polynomial decay function $K$ that dominates the decay functions of all $\Phi(s),\partial_s\Phi_s$. We denote by $\Delta>0$ a uniform lower bound on the gap of the interactions $\Phi(s)$.
\end{definition}

We immediately limit ourselves to translationally invariant systems although all calculations can be done similarly for spatially varying interactions. 
Let us note that the formalism applies to every eigenstate whose corresponding eigenvalue is separated from the rest of the spectrum by a gap, or even every subspace of eigenstates whose eigenvalues are separated from the rest of the spectrum. We restrict our discussion here to gapped unique ground states only. 
Moreover, although we only apply the formalism to Hamiltonians on finite lattices, the formalism of quasi-adiabatic continuation can be rigorously used in the thermodynamic limit \cite{Auto}.

Definition \ref{qapath} induces an equivalence relation on the  gapped, LR-local potentials. Indeed, it is clear that this relation is transitive. Let $\Phi_s$ be a path connecting $\Phi_0,\Phi_1$ and $\tilde{\Phi}_s$ a path connecting $\Phi_1,\Phi_2$. Then, 
\begin{equation}
\Psi(s) = \begin{cases}
   \Phi(2s) & \text{if } 0\leq t\leq 1/2 \\
  \tilde{\Phi}(2s-1) & \text{if } 1/2\leq t\leq 1
  \end{cases}
\end{equation}
is a quasi-adiabatic path connecting $\Phi_0$ and $\Phi_2$.

Let us now fix a finite $\mathcal{V} \subset \mathcal{L}$ and denote the Hamiltonians induced by the potentials $\Phi(s)$ simply by $H(s)$ and their unique and gapped ground states by $\ket{\psi_0(s)}$. The results we obtain are independent of the finite subset $\mathcal{V}$. The evolution of the ground states $\ket{\psi_0(s)}$ can, under general conditions \cite{avron1987adiabatic}, be expressed exactly as  $\ket{\psi_0(s)} = U(s)\ket{\psi_0(0)}$. The unitaries $U(s)$ are the solutions of a differential equation with generator $K(s)$,  
\begin{equation}
\frac{dU(s)}{ds} = iK(s)U(s).
\end{equation}
We are interested in the structure of the generator $K(s)$ of these unitaries. Hastings has shown that these generators are quasi-local Hamiltonians \cite{HastingsWen,PhysRevB.69.104431}. This last statement is highly non-trivial and very powerful.  

To construct the operator $K(s)$, we need a so-called filter function $F(t)$ which is an odd function that decays rapidly in time (faster than any polynomial) and such that its Fourier transform satisfies $\hat{F}(\omega) = -1/\omega$ for $|\omega| \geq \Delta$. That such a function exists is not trivial, but can be proven \cite{graham1981class,vaaler1985some,ingham1934note,dziubanski1998band}. We give an argument in Lemma \ref{fourier} in Appendix \ref{appendixa}.

We now use such a filter function to construct the quasi-adiabatic continuation operator. Notice that we immediately drop the dependence of $K$ on the filter function $F$ in the notation.
The generator of the quasi-adiabatic evolution is given by 
\begin{equation}
K(s) = -i\int_{\mathbb{R}} F(\Delta t) e^{iH_st}\left(\partial_sH_s\right)e^{-iH_st}dt.
\end{equation}
Using perturbation theory we can show that, indeed,
\begin{equation}
iK(s)\ket{\psi_0(s)} =  \partial_s \ket{\psi_0(s)}
\end{equation}
which justifies the definition of $K(s)$. Furthermore, the generator $K(s)$ is a quasi local Hamiltonian. 
More precisely, $K(s)$ can be written as a sum of quasi local terms that decay super polynomially in $r$,
\begin{equation}
K(s) = \sum_{v\in \mathcal{V}}\sum_{r\geq 0} k_v(r),  \quad \supp(h_v(r)) \subset B_r(v), \quad \|h_v(r)\| \leq f(r)
\end{equation}
with $\lim_{r\rightarrow \infty}f(r)P(r) =0$ for every polynomial $P(r)$. For a proof of this statement we refer to Proposition \ref{genlocal} in Appendix \ref{appendixa}.

\subsection{The Stability of the Area Law of the Entanglement Entropy}\label{sec:stabilityarealaw}
The equivalence relation induced by Definition \ref{qapath} on the set of gapped, bounded, LR-local interactions also defines an equivalence relation on the set of ground states of these Hamiltonians. We refer to the equivalence classes as gapped quantum phases.
We now prove that the entanglement entropy relative to a fixed bipartition of a quantum spin system is the same for all states in a given quantum phase, up to a term that scales like the boundary area of the bipartition. Hence, we prove that an area law for one specific ground state automatically carries over to all other ground states that are in the same quantum phase.
\begin{definition}\label{phase}
Let $\mathcal{L}$ be a lattice and $\Phi_s$ a quasi-adiabatic path. Take a finite $\mathcal{V}\subset\mathcal{L}$ and denote the Hamiltonians induced by the potentials $\Phi_s$ on $\mathcal{A}_{\mathcal{V}}$ simply by $H(s)$. Then the unique ground states $\ket{\psi(0)},\ket{\psi(1)}$ of $H(0)$ and $H(1)$ respectively are in the same \textbf{gapped quantum phase}.
\end{definition}

Clearly, the property that the ground states of $H(0),H(1) \in \mathcal{A}_{\mathcal{V}}$ are in the same phase is independent of the set $\mathcal{V}$. Hence, we the above definition can also be applied to the sets of ground states $\{\ket{\psi(0)_{\mathcal{V}}}\},\{\ket{\psi(1)_{\mathcal{V}}}\}$ for all finite $\mathcal{V}\subset\mathcal{L}$. 

To apply the formalism of quasi-adiabatic continuation, we need that the Hamiltonians $H(s)$ are LR-local. Indeed, this is a crucial requirement to prove that the generator of the quasi-adiabatic evolution is quasi-local.  

Let us now define what it means for a quantum spin system to satisfy an area law for the von Neumann entropy.

\begin{definition}\label{arealaw}
Let $\Phi$ be a gapped, translation invariant potential on $\mathbb{Z}^{\nu}$ that generates Hamiltonians $H_L$ on $\mathcal{L}=\mathbb{Z}_L^{\nu}$ for all $L$.
We say that this quantum spin system satisfies an area law if the following holds. Let $S(\mathcal{B})$ be the entanglement entropy of the unique ground state of $H_L$ relative to a bipartition $\mathcal{B}_1,\mathcal{B}_2$ of $\mathcal{L}$, then
\begin{equation}
S(\mathcal{B}) \leq CA(\mathcal{B})
\end{equation}
with $C$ a constant independent of $L$ and the bipartition and $A(\mathcal{B})$ the area of the boundary between $\mathcal{B}_1,\mathcal{B}_2$. 
\end{definition}

The non-trivial part of this definition is the statement that $C$ is independent of $L$. Indeed, for a fixed $L$ we can always take $C=\log(\dim(\mathcal{H}_{\mathcal{L}}))$. Hence, both the definition of a gapped quantum phase and the area law property are best formulated for sequences of states defined on lattices of increasing size. The formulation in terms of a potential of Definitions \ref{phase} and \ref{arealaw} allows us to look at the sequence of ground states of the Hamiltonians $H_L$ generated by this potential.

A similar definition holds for ground state subspaces with a finite degeneracy $q$. Moreover, we are not restricted to lattices $\mathcal{L}=\mathbb{Z}_L^{\nu}$ or translation invariant Hamiltonians, but we should be able to define the system on lattices of increasing size, hence some spatial homogeneity seems necessary.

If we can bound the rate at which entanglement is generated along a quasi-adiabatic path, we can prove an upper bound on the total change of entanglement along the entire path. Recall that $m(v) = \dd(v,\partial \mathcal{B}_1)$ for $v \in \mathcal{B}_2$, $m(v) = \dd(v,\partial \mathcal{B}_2)$ for $v \in \mathcal{B}_1$ and $M(r) = \{v\in \mathcal{L} \:|\: m(v) \leq r \}$.  Moreover, we have that $M(r) \leq 2A(2r)^{\nu}$ for $\mathbb{Z}^{\nu}$ equipped with $\dd$.

\begin{theorem}\label{theoremstab}
Consider the lattice $\mathbb{Z}^{\nu}$ equipped with the metric $\dd$ and let each site support a Hilbert space of dimension $d$. Let $\Phi_s$ be a quasi-adiabatic path on the quantum spin system. Consider the finite lattice $\mathcal{L}=\mathbb{Z}_L^{\nu}$ and denote the Hamiltonians induced by the potentials $\Phi(s)$ simply by $H(s)$.  Denote by $\ket{\psi(0)},\ket{\psi(1)} \in \mathcal{H}_{\mathcal{L}}$ the unique ground states of $H(0)$ and $H(1)$ respectively. Let $\mathcal{B}_1,\mathcal{B}_2$ be a fixed bipartition of the lattice. Denote the size of the area of the boundary between $\mathcal{B}_1$ and $\mathcal{B}_2$ by $A$.
Then, the entanglement entropy of $\ket{\psi(0)}$ and $\ket{\psi(1)}$ differ at most by a constant times the area of the boundary between $\mathcal{B}_1,\mathcal{B}_2$. Therefore, if $\ket{\psi(0)}$ satisfies an area law, so does $\ket{\psi(1)}$ and vice versa.
\end{theorem}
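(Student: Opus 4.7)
The plan is to reduce the total change in entanglement entropy along the quasi-adiabatic path to an integral of an instantaneous entanglement rate, and then to bound this rate by applying the SIE bound of Theorem \ref{SIE} term by term to the quasi-local generator of the evolution, in direct analogy with Proposition \ref{entraterealtime}.

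First I would invoke the quasi-adiabatic continuation machinery of subsection \ref{sec:exactQAC}: write $\ket{\psi(s)} = U(s)\ket{\psi(0)}$, where $U(s)$ solves $\frac{dU}{ds} = iK(s)U(s)$ and the generator
\begin{equation}
K(s) = \sum_{v\in\mathcal{V}}\sum_{r\geq 0} k_v(r),\qquad \supp(k_v(r))\subset B_r(v),\qquad \|k_v(r)\|\leq f(r),
\end{equation}
is a quasi-local Hamiltonian whose local norms $f(r)$ decay super-polynomially (Proposition \ref{genlocal}). By Definition \ref{qapath} the decay function can be chosen uniformly in $s\in[0,1]$, so $f$ does not depend on $s$. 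Thus the family of reduced states $\rho_{\mathcal{B}_1}(s) = \Tr_{\mathcal{B}_2}\ket{\psi(s)}\bra{\psi(s)}$ evolves under a $s$-dependent local Hamiltonian $K(s)$, and $S(\mathcal{B}_1,s)$ is differentiable in $s$.

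Next I would bound $|dS/ds|$ by repeating the argument of Proposition \ref{entraterealtime} verbatim, with $K(s)$ playing the role of $H$. Interaction terms $k_v(r)$ whose support lies entirely inside $\mathcal{B}_1$ or entirely inside $\mathcal{B}_2$ contribute zero by cyclicity of the trace, leaving only those $k_v(r)$ with $r\geq m(v)$. For each such term SIE (Theorem \ref{SIE}) yields a contribution of order $\|k_v(r)\|\log(d^{P(r)})$ with $P(r)=(2r)^\nu$. Switching the order of summation and using $|M(r)|\leq 2A(2r)^\nu$ gives
\begin{equation}
\left|\frac{dS(\mathcal{B}_1,s)}{ds}\right| \;\leq\; \sum_{r\in\mathbb{N}} |M(r)|\,\log\!\bigl(d^{(2r)^\nu}\bigr)\,f(r) \;\leq\; C'\,A \sum_{r\geq 0} r^{2\nu}\,f(r),
\end{equation}
and the super-polynomial decay of $f$ makes the last series converge to a constant independent of $s$ and of $L$. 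Integrating over $s\in[0,1]$ then gives
\begin{equation}
\bigl|S(\mathcal{B}_1,1)-S(\mathcal{B}_1,0)\bigr|\;\leq\;\int_0^1\!\left|\frac{dS(\mathcal{B}_1,s)}{ds}\right|ds\;\leq\;CA,
\end{equation}
which is the stated bound, and from which the area law stability statement is immediate.

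The main obstacle is not the entropy calculation itself, which is a direct copy of Proposition \ref{entraterealtime}, but ensuring that the estimate on $dS/ds$ is genuinely uniform in $s$ and in $L$. Uniformity in $s$ rests on choosing a single super-polynomial decay function that dominates the locality decay of $K(s)$ for every $s\in[0,1]$; this is built into the definition of a quasi-adiabatic path (Definition \ref{qapath}) together with Proposition \ref{genlocal}, which controls $f$ in terms of the LR-locality of $\Phi_s$ and of $\partial_s\Phi_s$ and the gap lower bound $\Delta$. Uniformity in $L$ is automatic because all constants entering the above estimate depend only on $\nu$, $d$, $\Delta$, the Lieb–Robinson decay $K$ from Theorem \ref{theoremLB2}, and the filter function $F$ used to define $K(s)$, but never on the lattice size.
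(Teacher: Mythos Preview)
Your proposal is correct and follows essentially the same argument as the paper: decompose the quasi-adiabatic generator $K(s)$ into quasi-local terms, apply SIE termwise exactly as in Proposition \ref{entraterealtime}, and integrate the resulting area-law bound on $|dS/ds|$ over $s\in[0,1]$. Your additional remarks on uniformity in $s$ and $L$ are well taken and in fact make explicit a point the paper leaves implicit.
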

\begin{proof}
We are interested in the entanglement entropy of the ground states $\ket{\psi(s)}$ of $H(s)$ and more precisely in the rate of change of this quantity as $s$ changes. We can bound the entanglement rate along $s$ the same way we did in subsection \ref{sec:entgeneration},
\begin{align}
\left|\frac{dS_{\mathcal{B}_1}(s)}{ds}\right| &= \left|\Tr\left(K(s)[\ket{\psi(s)}\bra{\psi(s)},\log \rho_{\mathcal{B}_1} \otimes \mathds{1}_{\mathcal{B}_2}]\right)\right| \label{rates1}\\
& \leq \sum_{r\in \mathbb{N}}\sum_{v\in\mathcal{L}:m(v)\leq r}  \left|\Tr\left(k_{v,r}(s)[\ket{\psi(s))}\bra{\psi(s)},\log \rho_{\mathcal{B}_1} \otimes \mathds{1}_{\mathcal{B}_2}]\right)\right|\\
&\leq \sum_{r\in \mathbb{N}}M(r)\log\left(d^{(2r)^{\nu}}\right)\|k_r(s)\|\\
&\leq cA2^{\nu+1}\log d\sum_{r\in \mathbb{N}}r^{2\nu}\|k_r(s)\| \label{rates2}.
\end{align} 

Since $\|k_r(s)\|$ decays superpolynomially, it is clear that this last sum is bounded by a constant. Hence we find that the rate of change is bounded by a constant $C$ times the area $A$ of the boundary  between $\mathcal{B}_1,\mathcal{B}_2$, 
\begin{equation}\label{arearate3}
\left|\frac{dS_{\mathcal{B}_1}(s)}{ds}\right|  \leq CA_.
\end{equation}
Now consider two Hamiltonians $H(0),H(1)$ which are in the same quantum phase. By definition \ref{phase}, there exists a quasi-adiabatic path connecting them. We can bound the entanglement rate along this path and upon integration of \eqref{arearate3} we find that
\begin{equation}
\Delta S_{\mathcal{B}_1} = S_{\mathcal{B}_1}(1) - S_{\mathcal{B}_1}(0) \leq CA
\end{equation}
for a constant $C$ independent of the system size $L$ or boundary area $A$. Hence, we have shown that all ground states within the same gapped quantum phase have the same area law behaviour. Either they all satisfy the area law for the entanglement entropy or they all violate it.
\end{proof}
The proof can be generalized to quantum spin systems defined on different lattices that satisfy the condition in Remark \ref{remark1}, which implies that $M(r)$ is bounded by a polynomial in $r$.

\subsection{Degenerate Ground States}\label{degenerategs}
The stability of the area law readily generalizes  to the case of a finitely degenerate ground state subspace. The formalism of quasi-adiabatic continuation still applies to these systems. Let us take two Hamiltonians $H(0),H(1)$ that are in the same phase. Necessarily, the ground state degeneracy of $H(0)$ and $H(1)$ is the same. Let us assume that there is a basis of the ground state subspace of $H(0)$ such that all basis vectors satisfy an area law. Under quasi adiabatic evolution, this basis is mapped to a basis of the ground state subspace of $H(1)$. By Theorem \ref{theoremstab}, these basis states all satisfy the area law.

Moreover, given a basis of the ground state subspace of a Hamiltonian such that all basis vectors satisfy the area law, we can bound the entanglement of a general state in the ground state subspace. Indeed, a finite superposition of states that satisfies the area law, still satisfies the area law itself \cite{linden2006entanglement}. Given a two orthogonal ground states $\ket{\psi_1},\ket{\psi_2}$, we find that
\begin{equation}\label{linden}
S(\alpha\ket{\psi_1}+\beta\ket{\psi_2})\leq 2\left(|\alpha|^2S(\ket{\psi_1})+|\beta|^2S(\ket{\psi_2})+\mathsf{h}(|\alpha|^2,|\beta^2|)\right)
\end{equation}
For a general basis or for larger superpositions, the entanglement can still be bounded and the area laws continue to hold. However, the expression for the increase of the prefactors is more complicated \cite{linden2006entanglement} than equation \eqref{linden}.

For a degeneracy that grows with the system size, the existence of linearly independent ground states of $H_0$ that satisfy the area law still implies the existence of an equal number of linearly  independent ground states of $H_1$ with the same property. However, one can draw no conclusions about the entanglement of a general ground state in the huge ground state subspace of such a Hamiltonian.

\subsection{Fermionic Lattice Systems}\label{sec:fermions}
We can extend Theorem \ref{theoremstab} to systems consisting of fermions living on a lattice and with the fermionic Hamiltonian $H_f$ local in the sense of fermionic modes. An example of such a fermionic Hamiltonian is given by the famous Fermi-Hubbard model,
\begin{equation}
H_{FH} = -t \sum_{\braket{ij}\sigma}c^{\dagger}_{i\sigma}c_{j\sigma} + U\sum_i n_{i\uparrow}n_{i\downarrow}-\mu\sum_{i}n_i.
\end{equation}
The modes correspond to the lattice points where the fermions live and possibly extra labels such as the spin of the fermions. For simplicity we only consider the lattice labels and ignore these extra degrees of freedom. Fix an ordering of the fermions and apply the Jordan-Wigner transformation \cite{jordanwigner}. This transformation turns the fermionic Hamiltonian into a spin Hamiltonian of the following form,
\begin{equation}
H_{JW}= \sum_{v\in \mathcal{L}} h_v\otimes Z_{G_v}.
\end{equation}
Here, $h_v$ is a local interaction term centred around site $v$ and $Z_{G_v}$ is a non local string of pauli $Z$ operators working on a certain region $G_v$. This region depends on the chosen ordering.

We first note the following simple fact.
\begin{proposition}\label{invarianceLU}
The maximal entanglement rate of a Hamiltonian $H$ relative to a bipartition $\mathcal{B}_1,\mathcal{B}_2$ does not change under unitary transformations of the form $U=U_{\mathcal{B}_1} \otimes U_{\mathcal{B}_2}$,
\begin{equation}
\Gamma(H) = \Gamma(UHU^{\dagger}).
\end{equation}
\end{proposition}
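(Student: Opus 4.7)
The plan is to show that conjugating $H$ by a bipartition-respecting unitary $U = U_{\mathcal{B}_1} \otimes U_{\mathcal{B}_2}$ amounts to a relabeling of the initial state in the definition of $\Gamma$, and that this relabeling is a bijection on pure states, so the supremum defining $\Gamma(H)$ is left invariant.

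The first step is to use the identity $e^{-iUHU^\dagger t} = U e^{-iHt} U^\dagger$ together with the fact that $U$ commutes with the trivial action on any ancillas $a,b$ (so that $U$ extends to $\mathds{1}_a \otimes U_{\mathcal{B}_1} \otimes U_{\mathcal{B}_2} \otimes \mathds{1}_b$ on $a\mathcal{B}_1\mathcal{B}_2 b$). Given any initial pure state $\ket{\psi}$ on the full Hilbert space including ancillas, I would write the reduced state on the $a\mathcal{B}_1$-side after evolution under $UHU^\dagger$ and use the cyclicity of the partial trace to pull the $U_{\mathcal{B}_2}$ factor through $\Tr_{\mathcal{B}_2 b}$. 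This yields
\begin{equation}
\rho_{a\mathcal{B}_1}(t) \;=\; (\mathds{1}_a \otimes U_{\mathcal{B}_1})\,\tilde\rho_{a\mathcal{B}_1}(t)\,(\mathds{1}_a \otimes U_{\mathcal{B}_1})^\dagger,
\end{equation}
where $\tilde\rho_{a\mathcal{B}_1}(t)$ is the corresponding reduced state arising when the Hamiltonian is $H$ and the initial state is $U^\dagger\ket{\psi}$.

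The second step is to invoke the unitary invariance of the von Neumann entropy, which gives $S(\rho_{a\mathcal{B}_1}(t)) = S(\tilde\rho_{a\mathcal{B}_1}(t))$ for every $t$. Differentiating at $t=0$ yields the pointwise identity $\Gamma(UHU^\dagger, \psi) = \Gamma(H, U^\dagger\psi)$ on the level of initial states.

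Finally, since the map $\ket{\psi} \mapsto U^\dagger \ket{\psi}$ is a bijection on pure states of $a\mathcal{B}_1\mathcal{B}_2 b$ (and preserves purity since $U$ is unitary), taking the supremum over all initial pure states on both sides of the above identity gives $\Gamma(UHU^\dagger) = \Gamma(H)$. There is no real obstacle here; the only place to be slightly careful is in verifying that $U$ indeed acts as the identity on the ancillas $a,b$ so that its action on the doubled system remains bipartition-local and the partial-trace manipulation goes through unambiguously.
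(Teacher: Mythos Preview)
Your proof is correct and follows essentially the same route as the paper: you establish the pointwise identity $\Gamma(UHU^\dagger,\psi)=\Gamma(H,U^\dagger\psi)$ and then take the supremum over pure states, which is exactly the paper's argument up to the trivial relabeling $\psi\mapsto U\psi$. The only cosmetic difference is that you verify the identity by showing $S(\rho_{a\mathcal{B}_1}(t))$ is unchanged for all $t$ and then differentiate, whereas the paper invokes the explicit rate formula~\eqref{entrate} directly; both amount to the same observation that a factorised unitary cancels through.
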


\begin{proof}
It is straightforward to check the following relation for every state $\ket{\psi}$,
\begin{equation}
\Gamma(H,\psi) = \Gamma(UHU^{\dagger},U\psi).
\end{equation}
Indeed, due to the factorised form of $U$ all unitaries in equation \eqref{entrate} cancel. This implies that the entanglement rate $\Gamma$, which is a maximum over all possible states $\ket{\psi}$ is equal for both Hamiltonians. 
\end{proof}
Several possible ways to quantify the entanglement of a fermionic system have been studied in the literature \cite{banuls2007entanglement}. Here we define the entanglement rate of $H_f$ as that of $H_{JW}$. It now suffices to bound the entanglement rate of a term $h_v\otimes Z_{G_v}$ similarly as we bounded $h_v$.\\
\\
We write $Z_{G_v} = Z^{\mathcal{B}_1}_{G_v} \otimes Z^{\mathcal{B}_2}_{G_v}$ with both operators supported strictly on one side of the bipartition. Since $Z^{\mathcal{B}_1}_{G_v}$ has a spectrum containing an equal number of $\pm 1$, there exists a unitary operator $U_{\mathcal{B}_1}$ such that
\begin{equation}
U_{\mathcal{B}_1} Z^{\mathcal{B}_1}_{G_v} U_{\mathcal{B}_1}^{\dagger} = Z_{v^*} \otimes \mathds{1}_{G_v \setminus \{v^*\}}.
\end{equation}
Here, $v^* \in G_v$ is a vertex site neighbouring the support of $h_v$.
A similar unitary $U_{\mathcal{B}_1}$ can be found for $Z^{\mathcal{B}_1}_{G_i}$. Both these unitaries map the entire string of $Z$ operators to a $Z$ on a single site. We now use Proposition \ref{invarianceLU} to obtain that
\begin{equation}
\begin{aligned}
\Gamma(h_i\otimes Z_{G_i}) &= \Gamma\left(U_{\mathcal{B}_1}\otimes U_{\mathcal{B}_2} (h_i\otimes Z_{G_i}) U_{\mathcal{B}_1}^{\dagger}\otimes U_{\mathcal{B}_2}^{\dagger}\right)\\
&= \Gamma\left(Z_{i_L^*} \otimes h_i \otimes Z_{i_R^*}\right) \leq c\|h_i\| \log(D+2).
\end{aligned}
\end{equation}
\begin{corollary}
The entanglement rate of a fermionic Hamiltonian $H_f$ that is local in the sense of modes, obeys an area law.
\end{corollary}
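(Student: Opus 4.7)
The plan is to reduce the fermionic case to the spin-system analysis of Proposition~\ref{entraterealtime} by applying the Jordan-Wigner transformation and then using Proposition~\ref{invarianceLU} to strip away the non-local Pauli-$Z$ strings introduced by the fermion anticommutation relations. First I would fix an ordering of the lattice sites and write $H_{JW} = \sum_v h_v \otimes Z_{G_v}$, where each $h_v$ is strictly local with support size $D_v$ controlled by the mode-locality of $H_f$. By linearity of the rate derivative in the Hamiltonian and subadditivity of the maximum over states, $\Gamma(H_{JW}) \leq \sum_v \Gamma(h_v \otimes Z_{G_v})$, so the problem reduces to a per-term bound.

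For each $v$ I would invoke the construction already used in the paragraph preceding the corollary: a factorised unitary $U_{\mathcal{B}_1} \otimes U_{\mathcal{B}_2}$ exists that collapses both halves of $Z_{G_v}$ onto single sites $v_L^* \in \mathcal{B}_1$, $v_R^* \in \mathcal{B}_2$ which can be chosen adjacent to $\supp(h_v)$. Proposition~\ref{invarianceLU} then gives $\Gamma(h_v \otimes Z_{G_v}) = \Gamma(Z_{v_L^*} \otimes h_v \otimes Z_{v_R^*})$, and Theorem~\ref{SIE} bounds this by $c\|h_v\|\log(d^{D_v+2})$, i.e.\ a constant multiple of $\|h_v\|$ depending only on $D_v$ and the local dimension $d$.

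I would then replay the summation step of Proposition~\ref{entraterealtime} on the transformed Hamiltonian: the cyclicity-of-trace argument used there shows that $h_v \otimes Z_{G_v}$ contributes zero to the entanglement rate whenever its effective support $\supp(h_v) \cup \{v_L^*, v_R^*\}$ lies entirely on one side of $(\mathcal{B}_1, \mathcal{B}_2)$. Strict locality of $h_v$ then leaves only the $O(A)$ terms whose support lies within a fixed range of the boundary, yielding $\Gamma(H_f) \leq CA$ with $C$ independent of the lattice size.

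The main obstacle is that the collapse site $v_R^*$ onto which $Z^{\mathcal{B}_2}_{G_v}$ is compressed is a priori only required to lie in $G_v \cap \mathcal{B}_2$, so it need not be close to $\supp(h_v)$ when $\supp(h_v)$ is deep inside $\mathcal{B}_1$. One therefore has to pick a Jordan-Wigner ordering compatible with the bipartition so that either $Z_{G_v}$ does not cross the bipartition when $h_v$ is far from the boundary, or the collapse sites can still be placed near $\supp(h_v)$. This is the step at which fermion statistics could \emph{a priori} spoil the area law, and the combination of Proposition~\ref{invarianceLU} with a bipartition-adapted ordering is what makes the per-term bound consistent with an area-scaling sum.
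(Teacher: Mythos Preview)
Your proposal is correct and follows the same route as the paper. The paper's own proof is a one-line reference to ``the previous discussion'' (the collapse of the Jordan--Wigner string via Proposition~\ref{invarianceLU}, giving the per-term bound $\Gamma(h_v\otimes Z_{G_v})\leq c\|h_v\|\log(D+2)$) together with Proposition~\ref{entraterealtime} for the summation over terms.

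You are in fact more explicit than the paper on the one genuine subtlety. The paper says ``fix an ordering'' and then bounds each $\Gamma(h_v\otimes Z_{G_v})$, but does not spell out why the \emph{number} of contributing terms is $O(A)$ rather than $O(|\mathcal{L}|)$: after Jordan--Wigner, a term $h_v$ with $\supp(h_v)$ deep inside $\mathcal{B}_1$ may still carry a string reaching into $\mathcal{B}_2$, and such a term is not killed by the cyclicity-of-trace argument. Your observation that one should take a bipartition-adapted ordering (all sites of $\mathcal{B}_1$ preceding all sites of $\mathcal{B}_2$) is exactly what makes the argument close: with that choice the string $G_v$ of any term whose mode-support lies entirely in $\mathcal{B}_1$ (respectively $\mathcal{B}_2$) stays in $\mathcal{B}_1$ (respectively $\mathcal{B}_2$), so only the $O(A)$ genuinely boundary-crossing fermionic terms survive, and for those the collapse sites $v_L^*,v_R^*$ can indeed be placed adjacent to $\supp(h_v)$. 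The paper leaves this implicit; your last paragraph makes it explicit and is the correct resolution.
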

\begin{proof}
The result follows from the previous discussion and the results on spin lattices and local spin Hamiltonians given in Proposition \ref{entraterealtime}.
\end{proof}
\subsection{The Area Law under Adiabatic Growing}\label{sec:AdGrow}
In the previous subsections we have given a rigorous proof of the stability of the area law in a gapped quantum phase. One of the main motivations of the authors to obtain the stability of the area law was to find a new method to prove the area law itself. This subsection is of a more speculative character and we do not claim that all our arguments can be made rigorous. 

First we give some intuition as to why the stability of the area law can be used to infer properties about the validity of the area law. We give some conditions under which this intuition can be turned into a rigorous proof. Second, we discuss some related work by other authors and some directions to improve on these results.

Consider a two dimensional square lattice $\mathcal{L} \subset \mathbb{Z}^2$ of arbitrary, but finite, size $L \times L$. In this subsection we do not assume periodic boundary conditions. Each vertex $v$ has a local Hilbert space of dimension $d$. We consider a uniformly bounded, local potential $\Phi$ on $\mathbb{Z}^2$ on this lattice such that the Hamiltonians $H_L$ have a gap and a unique ground state. We are interested in the entanglement entropy of the reduced density matrix of the ground state of a square region $\mathcal{V}$ of size $\ell \times \ell$, with $1\ll \ell \ll L$.
Consider the ground state of the Hamiltonian $H_{\mathcal{V}}$ on this $\ell\times \ell$ subset $\mathcal{V}$ and think of it as consisting of the ground state $\ket{\psi_{A}}$ on the $\ell\times \ell$ lattice and a product state of all $\ket{0}$ on all other sites. Now consider the same model, but on a slightly bigger subset $\mathcal{W}:= \mathcal{V}\cup \{v\}$. Here $v \in \mathcal{L} \setminus \mathcal{V}$ is just a single lattice site neighbouring $\mathcal{V}$. We denote the ground state of $H_{\mathcal{W}}$ by $\ket{\psi_{\mathcal{W}}}$.

Intuitively, it should be possible to go from $\ket{\psi_{\mathcal{V}}}\otimes\ket{0}_w$ to $\ket{\psi_{\mathcal{W}}}$ by acting with a unitary that acts only on the state $\ket{0}$ of the extra added site $v$ and the sites close to it, within a ball of radius $\xi$, the correlation length. This formalises the idea that both ground states should be in the same gapped quantum phase once the system is big enough. We keep repeating this procedure until we end up with the ground state of the model on a lattice of size $L \times L$. This procedure is illustrated in figure \ref{figurearea}.

We can now look at the entanglement of this state, relative to a bipartition of the system into the subset $\mathcal{V}$ and the rest. Since we started with a product state between both systems, we started without entanglement. Every step involved a unitary which created an amount of entanglement $\sim \xi^2\log d$. After going from the $\ell\times\ell$ subsystem to an $(\ell+\xi)\times(\ell+\xi)$ subsystem, we applied $\sim 4\ell\xi$ of these unitaries. Hence the state now has no more entanglement than a constant times $\ell$, it has an area law. We can now grow the system until we obtain the ground state on the entire lattice. The unitaries we now apply do not act on the subset $A$, hence they create no additional entanglement. See again figure \ref{figurearea} for an illustration of this argument. We conclude that the ground state of our model obeys the area law,
$$
S(\rho_A) \leq C \xi^3\log d \ell.
$$

\begin{figure}
  \centering
  \includegraphics[width=6in]{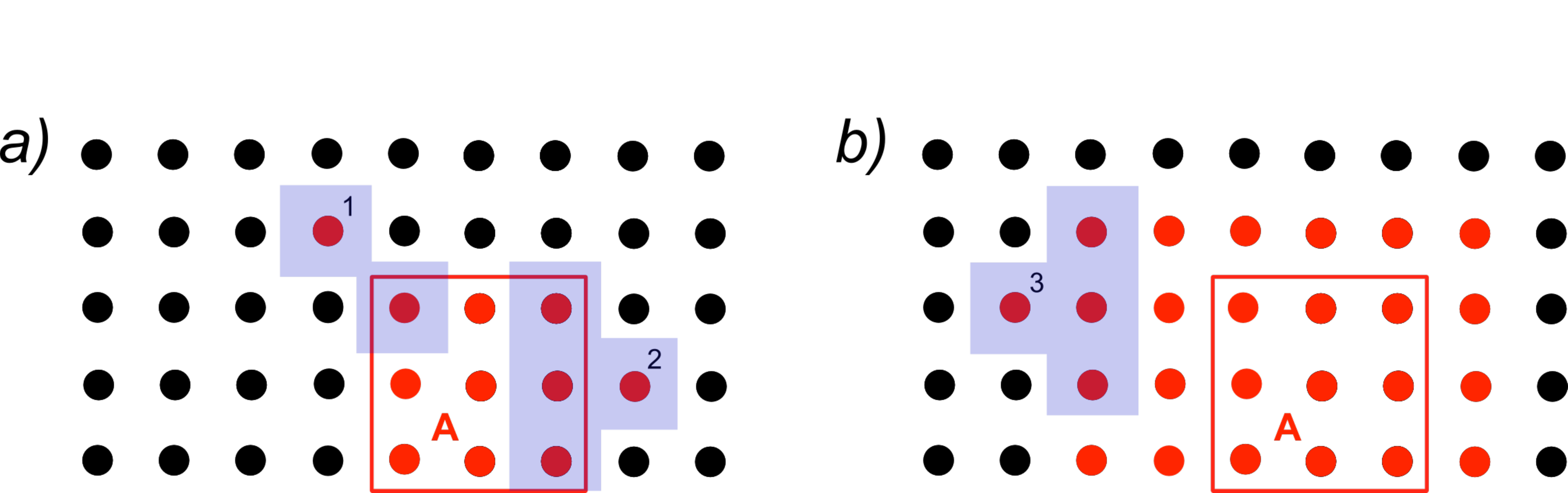}
  \caption[Adiabatic growing]
   {We are interested in the entropy of region $A$. The red dots are in the ground state, the black dots are ancillas in a product state. In figure $a$ we act with two unitaries, blue coloured, to create the ground state on region $A$ and qubits 1,2. After acting with a number of these unitaries proportional to the boundary we are in the situation of figure $b$. Adding more qubits does not change the entropy of region $A$ since the unitary needed to create the ground state on the extra qubit 3 only acts within the complement of $A$.}
  \label{figurearea} 
\end{figure}

There are three obstacles that need to be overcome to turn this argument into a proof. First, the unitaries that are used to grow the ground state can never be expected to act strictly on a number of sites $\sim \xi^2$. In general they are generated by a quasi-local Hamiltonian which is more or less supported around these sites. This situation is of course exactly the kind of problem that Theorem  \ref{SIE} was designed to solve.

There are two more serious issues that we cannot solve without extra assumptions. First, we ignored the fact that typically Hamiltonians have edge modes. A model that has a unique ground state on a closed lattice, like a sphere, or in the thermodynamic limit, can have a degenerate ground state subspace on a lattice with open boundary conditions. It will have a, possibly exponentially large, degenerate ground state subspace, with exponentially small splitting, separated from the rest of the spectrum by a gap $\Delta$. This problem can be dealt with if we can make the ground state unique by introducing some local boundary Hamiltonian. It is known that this is possible for several interesting models like the toric code or more general string net models \cite{levin2013protected,kitaev2012models,2014arXiv1402.4081L}, but there are certainly counterexamples like the fractional quantum Hall effect \cite{kitaev2006anyons,wen1995topological}.

The last problem concerns the existence of a quasi-local unitary that maps the ground state of a model on $\ell \times \ell$ sites to the ground state on a slightly bigger lattice. This assumption expresses the fact that if $\ell$ is large enough and the model has a well defined thermodynamic limit, adding a single particle shouldn't change anything concerning the phase of the ground state. Hence, there should be a gapped adiabatic path between both states, which ensures the existence of such a unitary. Unfortunately, counterexamples to this condition exist, for instance Haah's cubic code does not satisfy this requirement, although it obeys the area law \cite{haah2011local}.

If both conditions hold, a proof for the area law follows immediately from the following corollary.
\begin{corollary} \label{corgrow}
Consider the lattice $\mathbb{Z}^{\nu}$ with the metric $\dd$ and let every site support a Hilbert space of dimension $d$. Let $\{\ket{\psi_{\mathcal{V}}}\:|\: \mathcal{V} \subset \mathbb{Z}^{\nu}\}$ be a collection of states, with $\ket{\psi_{\mathcal{V}}} \in \mathcal{H}_{\mathcal{V}}$ for every finite $\mathcal{V} \subset \mathbb{Z}^{\nu}$. Take such a finite set $\mathcal{V}$ and let $v \in \mathbb{Z}^{\nu} \setminus \mathcal{V}$ be a site not in $\mathcal{V}$. Suppose that for all such $\mathcal{V},v$ the state $\ket{\psi_{\mathcal{V}\cup \{v\}}}$ can be obtained from $\ket{\psi_{\mathcal{V}}}\otimes\ket{0}$ by evolving with a quasi-local Hamiltonian centred around $v$ for finite time $T$. Moreover, we require an uniform upper bounds $T$ on the time and the existence of a super-polynomial decay function $K$ that can be used for all Hamiltonians.  Then, this collection of states satisfies an area law in the sense that there is $C$ independent of $\mathcal{V}$ such that for every bipartition $\mathcal{B}_1,\mathcal{B}_2$ of $\mathcal{V}$,
$$
S(\Tr_{\mathcal{B}_1}(\ket{\psi_{\mathcal{V}}}\bra{\psi_{\mathcal{V}}})) \leq CA
$$
with $A$ the size of the boundary between $\mathcal{B}_1$ and $\mathcal{B}_2$.
\end{corollary}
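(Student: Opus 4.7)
The plan is to build up the target state $\ket{\psi_{\mathcal{V}}}$ one site at a time from a product state, mimicking the informal growing argument illustrated in Figure \ref{figurearea}, and bound the total entanglement produced across the bipartition using the SIE bound together with a quasi-locality estimate directly parallel to Proposition \ref{entraterealtime} and Theorem \ref{theoremstab}. Fix the bipartition $\mathcal{B}_1, \mathcal{B}_2$ and pick any ordering $v_1, \ldots, v_n$ of the sites of $\mathcal{V}$, setting $\mathcal{V}_k = \{v_1, \ldots, v_k\}$. The hypothesis furnishes for each $k$ a quasi-local Hamiltonian $H^{(k)}$ centred at $v_k$ that takes $\ket{\psi_{\mathcal{V}_{k-1}}} \otimes \ket{0}_{v_k}$ to $\ket{\psi_{\mathcal{V}_k}}$ in time at most $T$, with a decomposition $H^{(k)} = \sum_{r \geq 0} h^{(k)}(r)$ such that $\supp h^{(k)}(r) \subset B_r(v_k)$ and $\|h^{(k)}(r)\| \leq K(r)$ for the common super-polynomial decay function $K$. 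Tensoring with $\ket{0}_{v_k}$ leaves the entropy across the bipartition unchanged, so denoting the entropy of $\ket{\psi_{\mathcal{V}_k}}$ across the induced bipartition by $S_k$ one has
\begin{equation*}
S_k - S_{k-1} \leq T \cdot \sup_{t \in [0,T]} \left|\frac{dS}{dt}\right|,
\end{equation*}
where the derivative is along the evolution generated by $H^{(k)}$.

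The heart of the proof is to bound $|dS/dt|$ at step $k$ by a quantity that decays with the distance $m(v_k)$ from $v_k$ to the opposite side of the bipartition. Linearity of the rate in the Hamiltonian together with the triangle inequality reduces the estimate to a sum over the pieces $h^{(k)}(r)$. Terms supported entirely on one side of the bipartition make no contribution, exactly as in the proof of Proposition \ref{entraterealtime}, so only those with $r \geq m(v_k)$ can contribute. For each such term Theorem \ref{SIE}, applied with the smaller subsystem dimension bounded by $d^{|B_r(v_k)|} \leq d^{(2r)^{\nu}}$, gives a rate at most $c (2r)^{\nu} \log(d)\, K(r)$. Summing over $r$ yields
\begin{equation*}
\left|\frac{dS}{dt}\right| \leq c \log(d) \sum_{r \geq m(v_k)} (2r)^{\nu} K(r) \;=:\; c \log(d)\, g(m(v_k)),
\end{equation*}
with $g$ decaying super-polynomially since $K$ does.

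Telescoping from $S_0 = 0$ gives
\begin{equation*}
S\!\left(\Tr_{\mathcal{B}_1} \ket{\psi_{\mathcal{V}}}\bra{\psi_{\mathcal{V}}}\right) \leq c T \log(d) \sum_{v \in \mathcal{V}} g(m(v)),
\end{equation*}
and swapping the order of summation exactly as in the proof of Theorem \ref{theoremstab} produces
\begin{equation*}
\sum_{v \in \mathcal{V}} g(m(v)) = \sum_r (2r)^{\nu} K(r)\, |\{v \in \mathcal{V} : m(v) \leq r\}| \leq 2 \cdot 4^{\nu} A \sum_r r^{2\nu} K(r),
\end{equation*}
using the bound $|M(r)| \leq 2A(2r)^{\nu}$ recalled earlier in the text. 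Since $K$ is super-polynomial, $\sum_r r^{2\nu} K(r)$ converges, and the area-law bound $S \leq CA$ follows with $C$ independent of $\mathcal{V}$.

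The main subtlety I expect to argue carefully is that at intermediate steps the relevant bipartition is the one induced on $\mathcal{V}_k$ rather than the final one on $\mathcal{V}$, so that what one controls at step $k$ is the entanglement rate with respect to $\mathcal{B}_1 \cap \mathcal{V}_k$ versus $\mathcal{B}_2 \cap \mathcal{V}_k$. Fortunately the distance from $v_k$ to the opposite side computed within $\mathcal{V}_k$ is always at least the final distance $m(v_k)$, and $g$ is monotone decreasing, so the final-set estimate is a valid upper bound at every intermediate step. The only other delicate point is verifying that the polynomial prefactor $(2r)^{\nu}$ arising from the subsystem dimension in SIE is still dominated by $K(r)$; this is precisely the super-polynomial decay hypothesis and is the same convergence that powers Proposition \ref{entraterealtime} and Theorem \ref{theoremstab}.
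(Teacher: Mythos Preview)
Your proof is correct and follows essentially the same approach as the paper: grow the state one site at a time, bound the entanglement increment at each step by applying SIE to the quasi-local pieces $h^{(k)}(r)$ that straddle the cut, and sum using the $|M(r)|\leq 2A(2r)^{\nu}$ estimate exactly as in Theorem~\ref{theoremstab}. The only cosmetic difference is that the paper starts the growing from $\ket{\psi_{\mathcal{B}_1}}\otimes\ket{0}^{|\mathcal{B}_2|}$ rather than from a single site, so it only sums over the added sites in $\mathcal{B}_2$; your full-lattice sum is bounded by the same quantity, and your explicit handling of the intermediate-bipartition subtlety is in fact more careful than the paper's terse sketch.
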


\begin{proof}
Take a finite $\mathcal{V}$ and a bipartition $\mathcal{B}_1,\mathcal{B}_2$. Consider the state $\ket{\psi_{\mathcal{B}_1}}$. We then go to the state $\ket{\psi_{\mathcal{V}}}$ by adding more and more ancilla sites and evolving with quasi-local Hamiltonians. We can sum all the contributions to the entanglement when we evolve $\ket{\psi_{\mathcal{B}_1}} \otimes \ket{0}^{|\mathcal{B}_2|}$ to $\ket{\psi_{\mathcal{V}}}$ with these Hamiltonians. This gives us expressions very similar to the one used in the proof of Theorem \ref{theoremstab}. These expressions can be bounded analogously as inequalities \eqref{rates1}-\eqref{rates2} and the claim follows. Moreover, it is clear that there exists a constant $C$ that only depends on the uniform quantities $T,K$.
\end{proof}

\begin{remark}
Corollary \ref{corgrow} can be used in the following situation. 
Consider again the lattice $\mathbb{Z}^{\nu}$ and suppose we have a potential $\Phi$ such that the Hamiltonians $H_L$ on $[0,L]^{\nu}$ have edge modes, such that the degeneracy of the ground state subspace potentially depends on $L$. If we can obtain a ground state of $H_L$ from a ground state on a smaller lattice by evolving with a quasi-local Hamiltonian centred around the boundary, we can still prove that this collection of ground states satisfies the area law in the sense of Corollary \ref{corgrow}.

One way to obtain such a collection is by using boundary terms. Fix an $R>0$, we call an operator $B$ a boundary term if it acts only on sites $v\in\mathcal{L}$ whose distance to the boundary between $\mathcal{V}$ and $\mathbb{Z}^{\nu} \setminus \mathcal{V}$ is smaller than $R$. Now suppose that for all finite subsets $\mathcal{V} \subset \mathbb{Z}^{\nu}$,
\begin{itemize}
\item[(a)] There exists local boundary terms $B$ that can be added to $H_{\mathcal{V}}$ such that  the Hamiltonian $\tilde{H}_{\mathcal{V}} = H_{\mathcal{V}}+B$ is gapped and has a unique ground state.
\item[(b)] One can go from the ground state of $\tilde{H}_{\mathcal{V}}$ to that of $\tilde{H}_{\mathcal{V} \cup v}$ by evolving with a quasi-local Hamiltonian centred around $v$. Here $v \in \mathbb{Z}^{\nu} \setminus \mathcal{L}$ is a site neighbouring $\mathcal{V}$.
\end{itemize}
Then the collection of unique ground states of $\tilde{H}_{[0,L]^{\nu}}$ for all $L$ satisfies the area law by Corollary \ref{corgrow}.
\end{remark}

Similarly as Theorem \ref{theoremstab}, we can generalize this corollary to different lattices. By now, it is clear which conditions we need to obtain the result.

\begin{remark}
We would like to comment on a subtle point. Since the procedure we described above consists of the application of several quasi-local unitaries, one could naively think that the state we start with and the state we end up with are in the same phase. However, this is not necessarily true, since the length of the adiabatic path we need scales with the system size, in contrast to Definition \ref{phase}. Indeed it is well known that non trivial topologically ordered models such as the toric code and other string net models can be obtained by a very similar procedure as the one we described \cite{dennis2002topological,konig2009exact}. We only assume that the ground states on lattices of size $N$ or $N+1$ are in the same phase and this does not necessarily imply  that ground states on $N$ and $N^2$ sites are in the same phase.
\end{remark}
Independently, very similar ideas were reported and further elaborated on by other authors. For completeness and to illustrate the general idea of turning the stability of the area law in a proof of the area law itself, we summarize the work of these authors.

In \cite{cho2014entanglement} an area law was proven for systems obeying the following two conditions. First, there exists a sequence of Hamiltonians 
$$
\left\{H_1,H_2,\ldots, H_N\right\}
$$
acting on $N$ qubits, such that the next Hamiltonian is constructed from its predecessor by adding a term only at the boundary close to the added point. Furthermore all Hamiltonians have a gap at least $\Delta$. This requirement takes care of the problem concerning edge modes that was mentioned before.

Second, it is assumed that the ground states $\ket{\psi_0^{k}},\ket{\psi_0^{k+1}}$ of two consecutive Hamiltonians have a finite, non zero overlap. Using this condition the author can prove the existence of a gapped Hamiltonian path between consecutive Hamiltonians. Hence the formalism of quasi-adiabatic continuation assures the existence of a quasi-local unitary that maps one ground state to its successor. This takes care of the second problem we mentioned. Important, it gives a clear condition under which such a path can be rigorously proven.

Under these conditions, the following result can be proven with the SIE theorem, since the entropy created by such quasi-local unitaries can be bounded. 
\begin{theorem}\cite{cho2014entanglement}
Consider a $D$-dimensional spin system satisfying the above conditions. The entanglement entropy of a ball of radius $R_0$ is bounded by
\begin{equation}
S(\rho_{\text{ball}}) \leq c_{D-1}R_0^{D-1}+c_{D-2}R_0^{D-2}+\ldots + c_1R_0+c_0.
\end{equation}
The constants $c_i$ are system dependent constants.
\end{theorem}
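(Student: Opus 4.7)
The natural approach is to iterate the quasi-adiabatic machinery of subsection \ref{sec:exactQAC} along the sequence $H_1, \ldots, H_N$, bounding the entanglement created at each growing step by the SIE estimate used in the proofs of Proposition \ref{entraterealtime} and Theorem \ref{theoremstab}. First I would convert each consecutive pair $H_k, H_{k+1}$ into a smooth interpolation $H_k(s) = (1-s) H_k + s H_{k+1}$ that is gapped uniformly in $k$ and $s$. The assumption of a non-vanishing overlap $|\langle\psi_0^k | \psi_0^{k+1}\rangle| \geq \gamma > 0$ together with the uniform endpoint gap $\Delta$ is precisely what prevents a level crossing inside $(0,1)$: a perturbative argument controlled by $\gamma$ and $\Delta$ yields a uniform lower bound $\Delta' > 0$ along every segment. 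Since $H_{k+1} - H_k$ is by hypothesis a boundary term supported near the newly added site $v_k$, the generator $K_k(s)$ produced by the filter-function formula of subsection \ref{sec:exactQAC} is a sum of quasi-local terms centred at $v_k$ with super-polynomially decaying norm $\|k_{v_k}(r)\| \leq f(r)$.

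The second step repeats estimates \eqref{rates1}--\eqref{rates2} for each segment. Writing the ball bipartition as $\mathcal{B}, \mathcal{L} \setminus \mathcal{B}$ and starting from a product state (which has zero entropy across the cut), the final entropy is bounded by the sum over $k$ of
\begin{equation}
\int_0^1 \left|\frac{dS_{\mathcal{B}}}{ds}\right|_k ds \leq C \log(d) \sum_{r \geq \dd(v_k,\partial\mathcal{B})} r^D f(r),
\end{equation}
where Theorem \ref{SIE} is applied to each quasi-local piece of $K_k$, and only pieces whose support crosses $\partial\mathcal{B}$ can produce entropy across the cut, forcing $r \geq \dd(v_k, \partial \mathcal{B})$. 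Denote the right-hand side above by $g(\dd(v_k,\partial\mathcal{B}))$; for super-polynomial $f$ the function $g$ is itself super-polynomially decaying.

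The third step is a purely geometric summation over $v_k$. In $\mathbb{Z}^D$ the number of lattice sites at distance $\ell$ from the surface of a ball of radius $R_0$ is controlled by the surface area of a concentric sphere of radius $R_0 - \ell$ (and an analogous contribution from outside the ball), which expands by the binomial theorem as $\sum_{j=0}^{D-1} a_j R_0^{j} \ell^{D-1-j}$. Multiplying by $g(\ell)$ and summing over $\ell \geq 0$ converts each moment $\sum_\ell \ell^{D-1-j} g(\ell)$ into a finite constant, yielding
\begin{equation}
S(\rho_{\mathcal{B}}) \leq c_{D-1} R_0^{D-1} + c_{D-2} R_0^{D-2} + \ldots + c_1 R_0 + c_0,
\end{equation}
with coefficients determined by $\Delta, \gamma, f, d$ and the lattice geometry.

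The main obstacle, which does not follow from the earlier results in the paper, is the first step: turning a finite-overlap assumption at the endpoints into a \emph{uniform} gap $\Delta'$ along the linear interpolation. Finite overlap at the endpoints is a priori compatible with the gap closing in between, so some additional input --- most naturally a resolvent or Hellmann--Feynman bound controlling how far the ground-state eigenvalue can be displaced by $H_{k+1}-H_k$ given $\gamma$ --- is required to rule that out. Once this uniform gap is in hand, the remainder of the argument is a direct adaptation of the proof of Theorem \ref{theoremstab} together with elementary polynomial counting.
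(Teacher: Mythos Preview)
Your proposal is faithful to what the paper does. Note, however, that the paper does \emph{not} actually prove this theorem: it is quoted from \cite{cho2014entanglement} and the surrounding text only sketches the mechanism---overlap condition $\Rightarrow$ gapped path between consecutive Hamiltonians $\Rightarrow$ quasi-local unitary via quasi-adiabatic continuation $\Rightarrow$ bound each step by SIE. Your three steps track exactly this outline, with the quasi-local estimates \eqref{rates1}--\eqref{rates2} playing the role the paper assigns to the SIE theorem, and the geometric shell summation producing the polynomial in $R_0$.

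You have also correctly isolated the one step that the present paper does not supply: converting the finite-overlap hypothesis into a uniform gap along the interpolation. The paper simply attributes this to \cite{cho2014entanglement} (``Using this condition the author can prove the existence of a gapped Hamiltonian path between consecutive Hamiltonians''). So your identification of this as the genuine obstacle, requiring an argument external to the paper, is accurate rather than a gap in your reasoning.
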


Recently, related ideas were also reported in \cite{swingle2014renormalization}. Inspired by renormalization group ideas, the authors proposed the following definition which is conjectured to capture physically relevant gapped quantum phases.
\begin{definition}\cite{swingle2014renormalization}
A $D$ dimensional $s$ source RG fixed point is a system whose ground state on $(2L)^D$ sites can be constructed from $s$ copies of the ground state on $L^D$ sites plus some unentangled degrees of freedom by acting with a quasi-local unitary on these spins.
\end{definition}

This definition can be modified to include the use of models on $L^D$ sites, different than the original model, in the construction of the model on $(2L)^D$ sites. This is for instance the case in Haah's cubic code \cite{haah2014bifurcation}. The entropy generated by the  quasi-local unitaries can be bounded by the SIE theorem. If $s$ is not too big, $s< 2^{D-1}$, repeating this procedure gives an area law. The calculation of the scaling of the entanglement entropy of such systems is similar to the one for branching MERA \cite{evenbly2013scaling}. In this calculation, only strictly local unitaries are considered. Hence, the main difference here is the presence of quasi-local unitaries. This additional issue is solved by the SIE theorem.
\begin{remark}
A very similar definition was given in \cite{zeng2014stochastic}. The authors called the systems they studied \textit{gapped quantum liquids}. They corresponds intuitively to the $s=1$ source RG fixed points. In \cite{swingle2014renormalization} these systems were called \textit{topological quantum liquids}. For all these systems, the growing procedure and the SIE theorem imply the validity of the area law.
\end{remark}
\section{Conclusion}
In this paper we addressed two main subjects. First, we discussed the entanglement that can be created by a local Hamiltonian in a quantum spin system. We gave a comprehensive overview of the motivation and previous work. The original contribution is this paper is the solution of the dynamical part of this question. We proved a sharp upper bound on the maximal instantaneous rate at which a local Hamiltonian can create entanglement in a spin system even in the presence of arbitrary ancillas. 

Second, the upper bound on the entanglement rate allowed us to prove the stability of the area law for the entanglement entropy in gapped quantum phases. An area law for a ground state of a local gapped Hamiltonian automatically carries over to all systems to which it is connected via a gapped path of Hamiltonians. The formalism of quasi-adiabatic continuation provides the existence of quasi-local Hamiltonians which governs the evolution of a ground state to the ones connected with it by such a gapped path of Hamiltonians. The entanglement created by this evolution can be controlled and shown to be proportional to the area of boundary of the bipartition and not to the volume of the constituents of the bipartition. This result carries over to systems with a finitely degenerate ground state subspace and fermionic lattice systems. We also discussed how under certain assumptions a similar argument can be used to prove the area law itself. 

Several open questions remain. First, it would be interesting to see if a similar bound can be obtained for the mutual information. Given a local Lindblad generator, can we prove a non trivial upper bound on the speed at which the mutual information can change? 

Another open problem concerns the area law itself. As we discussed, several authors have already shown that under certain restriction our results can be used to prove the area law. A natural question is how far these results can be generalised. Moreover, due to the stability result it suffices to show the area law for a single system in every phase. For a very large class of physical systems, we conjecture that there is a commuting, frustration free Hamiltonian in the same phase, this system corresponds to the renormalization fixed point. Such a system trivially fulfils the area law. Alternatively, one could show that certain phases have at least one representative ground state that can be written as a PEPS \cite{verstraete2008matrix}. Such a state satisfies the area law by construction.

Third, a natural generalization of this work concerns the Renyi entropies. It would be interesting to understand the different behaviour of the entanglement rate as measured by these entropies or by the von Neumann entropy.

\section*{Acknowledgements}
We would like to acknowledge I. Cirac, B. De Vylder, J. Haegeman, E. Lieb, S. Michalakis, T. Osborne and  B. Swingle for extremely inspiring and useful discussions. This work is supported by an Odysseus grant from the FWO, the FWF grants FoQuS and Vicom, and the ERC grant QUERG. 

\bibliographystyle{abbrv}
\bibliography{bibarealaw}

\begin{thebibliography}{10}

\bibitem{ambainis2000quantum}
A.~Ambainis.
\newblock Quantum lower bounds by quantum arguments.
\newblock In {\em Proceedings of the thirty-second annual ACM symposium on
  Theory of computing}, pages 636--643. ACM, 2000.

\bibitem{audenaert2007sharp}
K.~M. Audenaert.
\newblock A sharp continuity estimate for the von {N}eumann entropy.
\newblock {\em Journal of Physics A: Mathematical and Theoretical},
  40(28):8127, 2007.

\bibitem{Koenraad}
K.~M. Audenaert.
\newblock Quantum skew divergence.
\newblock {\em preprint arXiv:1304.5935}, 2013.

\bibitem{AudKitt}
K.~M. Audenaert and F.~Kittaneh.
\newblock Problems and conjectures in matrix and operator inequalities.
\newblock {\em preprint arXiv:1201.5232}, 2012.

\bibitem{avron1987adiabatic}
J.~Avron, R.~Seiler, and L.~Yaffe.
\newblock Adiabatic theorems and applications to the quantum {H}all effect.
\newblock {\em Communications in Mathematical Physics}, 110(1):33--49, 1987.

\bibitem{banuls2007entanglement}
M.-C. Ba\~nuls, J.~I. Cirac, and M.~M. Wolf.
\newblock Entanglement in fermionic systems.
\newblock {\em Phys. Rev. A}, 76:022311, Aug 2007.

\bibitem{Auto}
S.~Bachmann, S.~Michalakis, B.~Nachtergaele, and R.~Sims.
\newblock Automorphic equivalence within gapped phases of quantum lattice
  systems.
\newblock {\em Communications in Mathematical Physics}, 309(3):835--871, 2012.

\bibitem{Bennett}
C.~H. Bennett, A.~W. Harrow, D.~W. Leung, and J.~A. Smolin.
\newblock On the capacities of bipartite {H}amiltonians and unitary gates.
\newblock {\em Information Theory, IEEE Transactions on}, 49(8):1895--1911,
  2003.

\bibitem{bratteli1981operator}
O.~Bratteli and D.~W. Robinson.
\newblock {\em Operator Algebras and Quantum Statistical Mechanics Vol. 2:
  Equilibrium States, Models in Quantum Statistical Mechanics}.
\newblock Springer-Verlag, 1981.

\bibitem{Bravyi}
S.~Bravyi.
\newblock Upper bounds on entangling rates of bipartite {H}amiltonians.
\newblock {\em Phys. Rev. A}, 76:052319, Nov 2007.

\bibitem{bravyi2010topological}
S.~Bravyi, M.~B. Hastings, and S.~Michalakis.
\newblock Topological quantum order: stability under local perturbations.
\newblock {\em Journal of Mathematical Physics}, 51:093512, 2010.

\bibitem{bravyi2006lieb}
S.~Bravyi, M.~B. Hastings, and F.~Verstraete.
\newblock Lieb-{R}obinson bounds and the generation of correlations and
  topological quantum order.
\newblock {\em Phys. Rev. Lett.}, 97:050401, Jul 2006.

\bibitem{chen2011complete}
X.~Chen, Z.-C. Gu, and X.-G. Wen.
\newblock Complete classification of one-dimensional gapped quantum phases in
  interacting spin systems.
\newblock {\em Phys. Rev. B}, 84:235128, Dec 2011.

\bibitem{childs2002asymptotic}
A.~M. Childs, D.~W. Leung, F.~Verstraete, and G.~Vidal.
\newblock Asymptotic entanglement capacity of the {I}sing and anisotropic
  {H}eisenberg interactions.
\newblock {\em Quantum Information \& Computation}, 01(3):97--105, 2003.

\bibitem{childs2004reversible}
A.~M. Childs, D.~W. Leung, and G.~Vidal.
\newblock Reversible simulation of bipartite product {H}amiltonians.
\newblock {\em Information Theory, IEEE Transactions on}, 50(6):1189--1197,
  2004.

\bibitem{cho2014entanglement}
J.~Cho.
\newblock Entanglement area law in thermodynamically gapped spin systems.
\newblock {\em preprint arXiv:1404.7616}, 2014.

\bibitem{cubitt2005entanglement}
T.~S. Cubitt, F.~Verstraete, and J.~I. Cirac.
\newblock Entanglement flow in multipartite systems.
\newblock {\em Phys. Rev. A}, 71:052308, May 2005.

\bibitem{dennis2002topological}
E.~Dennis, A.~Kitaev, A.~Landahl, and J.~Preskill.
\newblock Topological quantum memory.
\newblock {\em Journal of Mathematical Physics}, 43(9):4452--4505, 2002.

\bibitem{locking}
D.~P. DiVincenzo, M.~Horodecki, D.~W. Leung, J.~A. Smolin, and B.~M. Terhal.
\newblock Locking classical correlations in quantum states.
\newblock {\em Phys. Rev. Lett.}, 92:067902, Feb 2004.

\bibitem{Vidal}
W.~D\"ur, G.~Vidal, J.~I. Cirac, N.~Linden, and S.~Popescu.
\newblock Entanglement capabilities of nonlocal {H}amiltonians.
\newblock {\em Phys. Rev. Lett.}, 87:137901, Sep 2001.

\bibitem{dziubanski1998band}
J.~Dziubanski and E.~Hern{\'a}ndez.
\newblock Band-limited wavelets with subexponential decay.
\newblock {\em Canadian Mathematical Bulletin}, 41(4):398--403, 1998.

\bibitem{eisert2010colloquium}
J.~Eisert, M.~Cramer, and M.~B. Plenio.
\newblock Colloquium: Area laws for the entanglement entropy.
\newblock {\em Rev. Mod. Phys.}, 82:277--306, Feb 2010.

\bibitem{evenbly2013scaling}
G.~Evenbly and G.~Vidal.
\newblock Scaling of entanglement entropy in the (branching) multiscale
  entanglement renormalization ansatz.
\newblock {\em Phys. Rev. B}, 89:235113, Jun 2014.

\bibitem{fannes1973continuity}
M.~Fannes.
\newblock A continuity property of the entropy density for spin lattice
  systems.
\newblock {\em Communications in Mathematical Physics}, 31(4):291--294, 1973.

\bibitem{PhysRevLett.72.1148}
S.~K. Foong and S.~Kanno.
\newblock Proof of {P}age's conjecture on the average entropy of a subsystem.
\newblock {\em Phys. Rev. Lett.}, 72:1148--1151, Feb 1994.

\bibitem{graham1981class}
S.~Graham and J.~D. Vaaler.
\newblock A class of extremal functions for the fourier transform.
\newblock {\em Transactions of the American Mathematical Society},
  265(1):283--302, 1981.

\bibitem{haah2011local}
J.~Haah.
\newblock Local stabilizer codes in three dimensions without string logical
  operators.
\newblock {\em Phys. Rev. A}, 83:042330, Apr 2011.

\bibitem{haah2014bifurcation}
J.~Haah.
\newblock Bifurcation in entanglement renormalization group flow of a gapped
  spin model.
\newblock {\em Phys. Rev. B}, 89:075119, Feb 2014.

\bibitem{hastings2007quasi}
M.~Hastings.
\newblock Quasi-adiabatic continuation in gapped spin and fermion systems:
  {G}oldstone's theorem and flux periodicity.
\newblock {\em Journal of Statistical Mechanics: Theory and Experiment},
  2007(05):P05010, 2007.

\bibitem{hastings2010locality}
M.~Hastings.
\newblock Locality in quantum systems.
\newblock {\em Quantum Theory from Small to Large Scales}, 95:171--212, 2010.

\bibitem{hastings2010quasi}
M.~Hastings.
\newblock Quasi-adiabatic continuation for disordered systems: Applications to
  correlations, {L}ieb-{S}chultz-{M}attis, and {H}all conductance.
\newblock {\em preprint arXiv:1001.5280}, 2010.

\bibitem{hastings2009quantization}
M.~Hastings and S.~Michalakis.
\newblock Quantization of {H}all conductance for interacting electrons on a
  torus.
\newblock {\em Communications in Mathematical Physics}, pages 1--39, 2014.

\bibitem{PhysRevB.69.104431}
M.~B. Hastings.
\newblock Lieb-{S}chultz-{M}attis in higher dimensions.
\newblock {\em Phys. Rev. B}, 69:104431, Mar 2004.

\bibitem{hastings2007area}
M.~B. Hastings.
\newblock An area law for one-dimensional quantum systems.
\newblock {\em Journal of Statistical Mechanics: Theory and Experiment},
  2007(08):P08024, 2007.

\bibitem{hastings2006spectral}
M.~B. Hastings and T.~Koma.
\newblock Spectral gap and exponential decay of correlations.
\newblock {\em Communications in mathematical physics}, 265(3):781--804, 2006.

\bibitem{HastingsWen}
M.~B. Hastings and X.-G. Wen.
\newblock Quasiadiabatic continuation of quantum states: The stability of
  topological ground-state degeneracy and emergent gauge invariance.
\newblock {\em Phys. Rev. B}, 72:045141, Jul 2005.

\bibitem{hayden2006aspects}
P.~Hayden, D.~W. Leung, and A.~Winter.
\newblock Aspects of generic entanglement.
\newblock {\em Communications in Mathematical Physics}, 265(1):95--117, 2006.

\bibitem{Hutter}
A.~Hutter and S.~Wehner.
\newblock Almost all quantum states have low entropy rates for any coupling to
  the environment.
\newblock {\em Phys. Rev. Lett.}, 108:070501, Feb 2012.

\bibitem{ingham1934note}
A.~Ingham.
\newblock A note on {F}ourier transforms.
\newblock {\em Journal of the London Mathematical Society}, 1(1):29--32, 1934.

\bibitem{kitaev2006anyons}
A.~Kitaev.
\newblock Anyons in an exactly solved model and beyond.
\newblock {\em Annals of Physics}, 321(1):2--111, 2006.

\bibitem{kitaev2012models}
A.~Kitaev and L.~Kong.
\newblock Models for gapped boundaries and domain walls.
\newblock {\em Communications in Mathematical Physics}, 313(2):351--373, 2012.

\bibitem{Kittpos}
F.~Kittaneh.
\newblock Inequalities for commutators of positive operators.
\newblock {\em Journal of Functional Analysis}, 250(1):132--143, 2007.

\bibitem{Kittstrong}
F.~Kittaneh.
\newblock Norm inequalities for commutators of self-adjoint operators.
\newblock {\em Integral Equations and Operator Theory}, 62(1):129--135, 2008.

\bibitem{konig2009exact}
R.~K\"onig, B.~W. Reichardt, and G.~Vidal.
\newblock Exact entanglement renormalization for string-net models.
\newblock {\em Phys. Rev. B}, 79:195123, May 2009.

\bibitem{landau2013polynomial}
Z.~Landau, U.~Vazirani, and T.~Vidick.
\newblock An efficient algorithm for finding the ground state of 1d gapped
  local {H}amiltonians.
\newblock In {\em Proceedings of the 5th Conference on Innovations in
  Theoretical Computer Science}, ITCS '14, pages 301--302, New York, NY, USA,
  2014. ACM.

\bibitem{levin2013protected}
M.~Levin.
\newblock Protected edge modes without symmetry.
\newblock {\em Phys. Rev. X}, 3:021009, May 2013.

\bibitem{liebrobinson}
E.~H. Lieb and D.~W. Robinson.
\newblock The finite group velocity of quantum spin systems.
\newblock {\em Communications in mathematical physics}, 28(3):251--257, 1972.

\bibitem{LiebVer}
E.~H. Lieb and A.~Vershynina.
\newblock Upper bounds on mixing rates.
\newblock {\em Quantum Information and Computation, volume 13, no. 11-12, pp.
  0986-0994}, 2013.

\bibitem{2014arXiv1402.4081L}
C.-H. Lin and M.~Levin.
\newblock Generalizations and limitations of string-net models.
\newblock {\em Phys. Rev. B}, 89:195130, May 2014.

\bibitem{linden2006entanglement}
N.~Linden, S.~Popescu, and J.~A. Smolin.
\newblock Entanglement of superpositions.
\newblock {\em Phys. Rev. Lett.}, 97:100502, Sep 2006.

\bibitem{nachtergaele2006propagation}
B.~Nachtergaele, Y.~Ogata, and R.~Sims.
\newblock Propagation of correlations in quantum lattice systems.
\newblock {\em Journal of statistical physics}, 124(1):1--13, 2006.

\bibitem{nachtergaele2009lieb}
B.~Nachtergaele, H.~Raz, B.~Schlein, and R.~Sims.
\newblock Lieb-{R}obinson bounds for harmonic and anharmonic lattice systems.
\newblock {\em Communications in Mathematical Physics}, 286(3):1073--1098,
  2009.

\bibitem{nachtergaele2010existence}
B.~Nachtergaele, B.~Schlein, R.~Sims, S.~Starr, and V.~Zagrebnov.
\newblock On the existence of the dynamics for anharmonic quantum oscillator
  systems.
\newblock {\em Reviews in Mathematical Physics}, 22(02):207--231, 2010.

\bibitem{nachtergaele2006lieb}
B.~Nachtergaele and R.~Sims.
\newblock {L}ieb-{R}obinson bounds and the exponential clustering theorem.
\newblock {\em Communications in mathematical physics}, 265(1):119--130, 2006.

\bibitem{osborne2007simulating}
T.~J. Osborne.
\newblock Simulating adiabatic evolution of gapped spin systems.
\newblock {\em Phys. Rev. A}, 75:032321, Mar 2007.

\bibitem{jordanwigner}
J.~P. and W.~E.
\newblock {\"U}ber das {P}aulische {E}quivalenzverbot.
\newblock {\em Z. Phys.}, 47:631--651, 1928.

\bibitem{page1993average}
D.~N. Page.
\newblock Average entropy of a subsystem.
\newblock {\em Phys. Rev. Lett.}, 71:1291--1294, Aug 1993.

\bibitem{schlosshauer2005decoherence}
M.~Schlosshauer.
\newblock Decoherence, the measurement problem, and interpretations of quantum
  mechanics.
\newblock {\em Rev. Mod. Phys.}, 76:1267--1305, Feb 2005.

\bibitem{schuch2011classifying}
N.~Schuch, D.~P\'erez-Garc\'ia, and I.~Cirac.
\newblock Classifying quantum phases using matrix product states and projected
  entangled pair states.
\newblock {\em Phys. Rev. B}, 84:165139, Oct 2011.

\bibitem{shor1995scheme}
P.~W. Shor.
\newblock Scheme for reducing decoherence in quantum computer memory.
\newblock {\em Phys. Rev. A}, 52:R2493--R2496, Oct 1995.

\bibitem{swingle2014renormalization}
B.~Swingle and J.~McGreevy.
\newblock Renormalization group constructions of topological quantum liquids
  and beyond.
\newblock {\em preprint arXiv:1407.8203}, 2014.

\bibitem{vaaler1985some}
J.~D. Vaaler.
\newblock Some extremal functions in fourier analysis.
\newblock {\em Bulletin of the American Mathematical Society}, 12(2):183--216,
  1985.

\bibitem{Karel}
K.~Van~Acoleyen, M.~Mari\"en, and F.~Verstraete.
\newblock Entanglement rates and area laws.
\newblock {\em Phys. Rev. Lett.}, 111:170501, Oct 2013.

\bibitem{verstraete2006matrix}
F.~Verstraete and J.~I. Cirac.
\newblock Matrix product states represent ground states faithfully.
\newblock {\em Phys. Rev. B}, 73:094423, Mar 2006.

\bibitem{verstraete2008matrix}
F.~Verstraete, V.~Murg, and J.~I. Cirac.
\newblock Matrix product states, projected entangled pair states, and
  variational renormalization group methods for quantum spin systems.
\newblock {\em Advances in Physics}, 57(2):143--224, 2008.

\bibitem{wang2002entanglement}
X.~Wang and B.~C. Sanders.
\newblock Entanglement capability of a self-inverse {H}amiltonian evolution.
\newblock {\em Phys. Rev. A}, 68:014301, Jul 2003.

\bibitem{wen1995topological}
X.-G. Wen.
\newblock Topological orders and edge excitations in fractional quantum {H}all
  states.
\newblock {\em Advances in Physics}, 44(5):405--473, 1995.

\bibitem{zeng2014stochastic}
B.~Zeng and X.-G. Wen.
\newblock Stochastic local transformations, emergence of unitarity, long-range
  entanglement, gapped quantum liquids, and topological order.
\newblock {\em preprint arXiv:1406.5090}, 2014.

\end{thebibliography}

\section*{Appendix A}\label{appendixa}
For convenience of the reader who is not familiar with the subject of quasi-adiabatic continuation, we give a self contained introduction in this appendix. The original papers where the formalism was presented and further developed are among others \cite{HastingsWen,PhysRevB.69.104431,hastings2007quasi,hastings2009quantization,bravyi2010topological}.
Good overviews are also given in \cite{hastings2010locality,osborne2007simulating,Auto}, on which the material in this appendix is based. The aim of this Appendix is to give a short introduction, that presents all the necessary definitions and results, which are scattered throughout the literature.

We again consider the situation discussed before and in Definition \ref{qapath}. For notational convenience we restrict ourselves to strictly local, translation invariant Hamiltonians $H(s)$.  We can modify Definition \ref{qapath} to obtain the following definition that applies to these restricted interactions. We emphasise that the restriction in Definition \ref{qapath2} is only used for notational convenience and that all relevant results are also valid for the quasi-adiabatic paths defined in Definition \ref{qapath}.

\begin{definition}\label{qapath2}
Consider a quantum spin system defined on a lattice $\mathcal{L}$ and let $\Phi_s$ be local, bounded and gapped potentials for all $s \in [0,1]$. We call $\Phi_s$ a \textbf{quasi-adiabatic path} between $\Phi_0$ and $\Phi_1$ if the following conditions are satisfied. The potentials $\Phi_s$ are differentiable with respect to $s$. More specific, we require that $\partial_s \Phi_s(\mathcal{V}) \subset \mathcal{A}_{\mathcal{V}}$ for all finite $\mathcal{V} \subset \mathcal{L}$.
Moreover, we require the existence of constant $R,C_N$,
 \begin{align}
&\Phi_s(\mathcal{V}),\partial_s\Phi_s(\mathcal{V})=0 \text{ if } \text{diam}(\mathcal{V})>R\\
&\sup_s\sup_{\mathcal{V}}\{\|\Phi_s(\mathcal{V})\|,\|\partial_s\Phi(\mathcal{V})\|\}  \leq C_N.
\end{align}
We denote by $\Delta>0$ a uniform lower bound on the gap of the interactions $\Phi(s)$.
\end{definition}

Let us now fix a finite $\mathcal{V} \subset \mathcal{L}$ and denote the Hamiltonians induced by the potentials $\Phi(s)$ simply by $H(s)$ and their unique and gapped ground states by $\ket{\psi_0(s)}$. The results we obtain are independent of the finite subset $\mathcal{V}$. We mentioned in subsection \ref{sec:exactQAC} of the main text that evolution of the ground state $\ket{\psi_0(s)}$ can be expressed  as  $\ket{\psi_0(s)} = U(s)\ket{\psi_0(0)}$ and the unitaries $U(s)$ are the solutions of a differential equation with generator $K(s)$,  
\begin{equation}
\frac{dU(s)}{ds} = iK(s)U(s).
\end{equation}
Our main goal is to show that the generators $K(s)$ are quasi-local Hamiltonians.  We first show the existence of the filter functions that we used in section \ref{sec:exactQAC}. We start with the following result from Fourier analysis.

\begin{lemma}\label{fourier} Let $\Delta > 0$.
There exists an odd function $F$ such that $F(t)$ decays super-polynomially in time and such that $\hat{F}(\omega) = -1/\omega$ for $|\omega| \geq \Delta$. Here $\hat{F}$ is the Fourier transform of the function $F$.
\end{lemma}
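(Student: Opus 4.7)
The plan is to construct $F$ by first specifying $\hat F$ and then inverting the Fourier transform. Concretely, I would pick a smooth even cut-off function $\chi\colon\mathbb{R}\to[0,1]$ with $\chi\equiv 0$ on $[-\Delta/2,\Delta/2]$ and $\chi\equiv 1$ on $\{|\omega|\geq\Delta\}$ (a standard bump-function construction), and set
$$
\hat F(\omega):=-\frac{\chi(\omega)}{\omega}\quad(\omega\neq 0),\qquad \hat F(0):=0.
$$
By construction $\hat F$ agrees with $-1/\omega$ on $\{|\omega|\geq\Delta\}$; it is odd because $\chi$ is even and $1/\omega$ is odd; and it is $C^\infty(\mathbb{R})$ since it vanishes identically on a neighbourhood of the origin and is a ratio of smooth functions elsewhere.

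The next step is to record the key estimates on the derivatives of $\hat F$. On the compact set $[-\Delta,\Delta]$ every derivative $\hat F^{(k)}$ is bounded, since $\hat F$ is smooth there. On $\{|\omega|\geq\Delta\}$ we have $\hat F^{(k)}(\omega)=(-1)^{k+1}k!/\omega^{k+1}$, which is $O(|\omega|^{-(k+1)})$. Hence for every $k\geq 1$, $\hat F^{(k)}\in L^1(\mathbb{R})\cap L^{\infty}(\mathbb{R})$, and $\hat F$ itself tends to $0$ at $\pm\infty$. In particular $\hat F\in L^2(\mathbb{R})$.

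I would then define $F$ to be the inverse Fourier transform of $\hat F$, well defined as an $L^2$ function by Plancherel. Oddness of $F$ is immediate from oddness of $\hat F$ via the substitution $\omega\mapsto -\omega$ in the Fourier integral. For the decay I would use repeated integration by parts: for $t\neq 0$ and any $k\geq 1$,
$$
F(t)=\frac{1}{2\pi}\int_{\mathbb{R}}\hat F(\omega)\,e^{-i\omega t}\,d\omega
=\frac{1}{2\pi(it)^{k}}\int_{\mathbb{R}}\hat F^{(k)}(\omega)\,e^{-i\omega t}\,d\omega,
$$
where the boundary terms vanish because $\hat F^{(j)}(\omega)\to 0$ as $|\omega|\to\infty$ for every $j\geq 0$. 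This yields $|F(t)|\leq \|\hat F^{(k)}\|_{1}/(2\pi|t|^{k})$ for every $k\geq 1$, which is exactly super-polynomial decay.

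I do not expect a real obstacle here: the construction is an elementary exercise in Fourier analysis once one has a smooth cut-off function in hand. The only point where some care is required is justifying the integrations by parts, i.e.\ checking that every derivative of $\hat F$ vanishes at infinity. If one instead wanted stronger decay (say sub-exponential rather than super-polynomial) one would need Gevrey-class or Beurling–type cut-offs, which is the content of the references cited in the excerpt; but for the statement as formulated, an ordinary $C^\infty$ bump function is enough.
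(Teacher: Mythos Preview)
Your argument is correct and is a genuinely different construction from the paper's. You work directly in the frequency domain: pick a smooth even cut-off $\chi$, set $\hat F(\omega)=-\chi(\omega)/\omega$, and deduce super-polynomial decay of $F$ from integration by parts, using only that every derivative of $\hat F$ lies in $L^1$ and vanishes at infinity. The paper instead works in the time domain: it first invokes a nontrivial Fourier-analytic fact (Ingham et al.) that there exist functions $g$ with $\hat g$ compactly supported in $[-\Delta,\Delta]$, $\hat g(0)=1$, and $g$ decaying faster than $\exp(-|t|\epsilon(|t|))$; it then sets $f=\delta-g$ and defines $F$ as the convolution of $f$ with $\tfrac{i}{2}\,\mathrm{sign}$, so that $\hat F(\omega)=-\hat f(\omega)/\omega$ and $|F(t)|\le\int_{|t|}^\infty|g|$. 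Your route is more elementary for the lemma as stated, since an ordinary $C^\infty$ bump function suffices and no band-limited/rapid-decay input is needed. The paper's route buys strictly more: it yields sub-exponential decay of $F$ (inherited from $g$), which sharpens the quasi-locality estimates for the quasi-adiabatic generator later on, whereas your $C^\infty$ cut-off only gives $|F(t)|\le C_k|t|^{-k}$. You already note this trade-off in your final remark; one small point worth making explicit is that since $\hat F\notin L^1$, the first displayed identity should be read as the $L^2$ inverse transform (or as the principal-value limit of $\int_{-R}^R$), after which a single integration by parts lands you in $L^1$ and the rest is routine.
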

\begin{proof}
We follow the argument given in \cite{hastings2010locality,hastings2010quasi}. An explicit example of such a function was given in \cite{Auto}. From now on, we assume that $\Delta = 1$. We start with a function $g$ such that its Fourier transform $\hat{g}$ has compact support $[-1,1]$, $\hat{g}(0)=1$ and $g$ itself vanishes rapidly. It is a well-known result in Fourier theory that such functions exist, several different arguments are used in the literature \cite{ingham1934note,dziubanski1998band,vaaler1985some,graham1981class}. In \cite{ingham1934note}, functions are constructed such that $g$ decays faster than $\exp(-|t|\epsilon(|t|))$ for large $t$. Here, $\epsilon$ can be any monotonically decreasing positive function with 
\begin{equation}
\int_1^{\infty} \frac{\epsilon(y)}{y}dy \leq \infty.
\end{equation}
Take such a $g$ even. We can now define $f(t) = \delta(t)-g(t)$, which is also an even function. Moreover, $\hat{f}(0) = 0$ and $\hat{f}(\omega) = 1$ for $|\omega| \geq 1$. Finally, we can build the desired function by a convolution with the sign function,
\begin{equation}
F(t) = \frac{i}{2} \int_{\mathbb{R}} duf(u)\text{sign}(t-u).
\end{equation}
Since  $|F(t)|\leq \left|\int_{|t|}^{\infty}f(u)du\right|$, the super-polynomial decay of $f$ implies a similar large $t$ behaviour of $F$. Furthermore, we have that
\begin{equation}
\hat{F}(\omega) = \frac{i}{2}\int_{\mathbb{R}}dt\exp(i\omega t)\int_{\mathbb{R}}du f(u)\text{sign}(t-u).
\end{equation}
This last expression can be integrated by parts in $t$ to yield
\begin{align}
\hat{F}(\omega) &= \text{boundary terms}-\frac{1}{\omega}\frac{1}{2}\int_{\mathbb{R}}d\left(\int_{\mathbb{R}}duf(u)\text{sign}(t-u)\right)e^{i\omega t}\\
&= -\frac{1}{\omega}\int_{\mathbb{R}}duf(u)\delta(t-u)\exp(i\omega t)\\
&= -\frac{1}{\omega}\hat{f}(\omega)
\end{align}
as the boundary terms cancel.
\end{proof}
We now use such functions to define the quasi-adiabatic continuation operator. Notice that we immediately drop the dependence of $K$ on $F$.
\begin{definition} The generator of the quasi-adiabatic evolution is defined as
\begin{equation}
K(s) = -i\int_{\mathbb{R}} F(\Delta t) e^{iH_st}\left(\partial_sH_s\right)e^{-iH_st}dt.
\end{equation}
\end{definition}
To show that this is a good definition we proceed with the following calculation. We have that
\begin{align}
iK(s)\ket{\psi_0(s)} &= \int_{\mathbb{R}} F(\Delta t) e^{iH_st}\partial_sH_se^{-iH_st}\ket{\psi_0(s)}dt\\
&= (\mathds{1}-P_0(s)) \int_{\mathbb{R}} F(\Delta t) e^{iH_st}\partial_sH_se^{-iH_st}\ket{\psi_0(s)}dt\\
&= \sum_{i \neq 0} \ket{\psi_i(s)}\braket{\psi_i(s),\partial_s H_s \psi_0(s)} \int_{\mathbb{R}}F(\Delta t) e^{i(E_i(s)-E_0(s))t}dt\\
&= \sum_{i\neq 0} \frac{1}{E_0(s)-E_i(s)} \ket{\psi_i(s)}\braket{\psi_i(s),\partial_s H_s \psi_0(s)}\\
&= \partial_s \ket{\psi_0(s)}.
\end{align}

We now study the generator $K(s)$. We decompose the local Hamiltonians $H(s)$ as 
\begin{equation}
H(s) = \sum_{v\in \mathcal{V}} h_{\mathbf{j}}(s).
\end{equation}
We now show that $K(s)$ is a quasi-local operator. To lighten the notation we write the quasi-adiabatic evolution of every operator $X$ as
\begin{equation}
\mathcal{F}_s(X) := -i\int_{\mathbb{R}} F(\Delta t) e^{iH(s)t}Xe^{-iH(s)}dt.
\end{equation}
We also need a local approximation of the operator $\mathcal{F}_s(X)$, only supported on a subset $\Lambda \cup \text{supp}(X)$,
\begin{equation}\label{approxquasiadiab}
\mathcal{F}^{\Lambda}_s(X) := -i\int_{\mathbb{R}} F(\Delta t) e^{iH_{\Lambda}(s)t}Xe^{-iH_{\Lambda}(s)}dt.
\end{equation}
It is now clear that the ground state $\ket{\psi_0(s)}$ evolves according to the unitary dynamics generated by the Hamiltonian
\begin{equation}
K(s) = \sum_{v\in \mathcal{V}} \mathcal{F}_s(\partial_s h_{v}) := \sum_{v\in \mathcal{V}} k_{v}(s),
\end{equation}
with $k_{v}(s) := \mathcal{F}_s(\partial_s h_{v})$. We now take an arbitrary origin and for convenience we drop the index referring to the origin both for the interaction $h(s)$ and for $k(s) := \mathcal{F}_s(\partial_s h)$. Given that we consider a translation invariant system, we can just focus on these terms in the subsequent arguments.\\
\\
Our goal is to show that $k(s)$ is a quasi-local operator. By definition this means  we can decompose
\begin{equation}
k(s) = \sum_{r=0}^{\infty} k_r(s)
\end{equation}
such that $k_r(s)$ has growing support but its norm decays superpolynomially in $r$. To obtain such a decomposition we first define the sets
\begin{equation}
\Lambda_r = \{v\left.\right| d(\mathbf{0},v) \leq r\} \bigcup_{s \in [0,1]}\text{supp}(h(s)).
\end{equation}
We will show that $k(s)$ decomposes in local terms $k_r(s)$ such that 
$\text{supp}(k_r(s)) \subset \Lambda_r $.
The main idea is to write $k(s)$ as a telescoping sum of strictly local terms using the approximate evolution defined in \eqref{approxquasiadiab}. We first define
\begin{equation}
k_0(s) = \mathcal{F}^{\Lambda_0}_s(\partial_s h(s))
\end{equation}
and 
\begin{equation}\label{decayingterms}
k_r(s) = \mathcal{F}^{\Lambda_r}(\partial_s h(s)) - \mathcal{F}^{\Lambda_{r-1}}(\partial_s h(s)),\quad r>0.
\end{equation}
With this definition it is clear that $\text{supp}(k_r(s)) = \Lambda_r$ and $k(s) = \sum_{r=0}^{\infty}k_r(s)$. Hence we only need to show that the norm of these terms decays sufficiently rapid in $r$.\\
\\
To show the decay of the operators $k_r(s)$, we use the Lieb-Robinson bounds \cite{liebrobinson,hastings2006spectral,nachtergaele2006lieb} of the original, physical Hamiltonians $H(s)$, see Theorem \ref{theoremliebrobinson}. The calculations generalise to quasi-local interactions $h$ which satisfy a Lieb-Robinson bound, see Theorem \ref{theoremLB2}.

We use the Lieb-Robinson bound to show a statement that is very similar is spirit. For large $r$ the evolution of a local operator $A$ by $H_{\Lambda_r}$ or $H_{\Lambda_{r-1}}$ is almost the same. This is exactly what we need to bound the integrand of the terms \eqref{decayingterms}. The argument is similar to existing ones in the literature \cite{osborne2007simulating,bratteli1981operator}. Indeed let $A$ be an operator acting on the ball centred at the origin with radius $a$. Then, we have that,
\begin{align}
\|\tau_t^{\Lambda_r}(A) - \tau_t^{\Lambda_{r-1}}(A)\| &= 
\left\| \int_0^t ds \frac{d}{dt'}\left(\tau_{t'}^{\Lambda_{r-1}}\left(\tau_{t-t'}^{\Lambda_r}(A)\right)\right)\right\|\\
&= \left\|\int_0^t dt'\tau_{t'}^{\Lambda_{r-1}}\left(\left[H_{\Lambda_r}-H_{\Lambda_{r-1}}, \tau^{\Lambda_r}_{t-t'}(A)\right]\right)\right\|\\
&\leq \int_0^{|t|} dt'\left\| \left[H_{\Lambda_r}-H_{\Lambda_{r-1}}, \tau^{\Lambda_r}_{t'}(A)\right] \right\|\\
&\leq 2\|A\|\|H_{\Lambda_r}-H_{\Lambda_{r-1}}\||\text{supp}(A)| \int_0^{|t|} dt'e^{2s|t'|-\mu(r-a)}\\
&\leq \|A\|P(r)|\text{supp}(A)|\frac{1}{s} e^{-\mu(r-a)+2s|t|}\\
& := C_{LB}P(r)\|A\||\supp(A)| e^{-\mu(r-a)+2s|t|}
\end{align}
Here $C_{LB}$ is a constant and $P(r)$ is a polynomial which is depend on the number of lattice points that are contained in the set $\Lambda_{r} \setminus \Lambda_{r-1}$. 

We can now use the last estimate to show the quasi-locality of the operator $k(s)$. 
\begin{proposition}\label{genlocal}
The generator $K(s)$ of the quasi adiabatic evolution can be written as a sum of quasi-local terms.
\end{proposition}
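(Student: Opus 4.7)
The plan is to show that $\|k_r(s)\|$ defined in \eqref{decayingterms} decays super-polynomially in $r$, which immediately yields the claimed quasi-local decomposition. Starting from
$$\|k_r(s)\| \leq \int_{\mathbb{R}} |F(\Delta t)| \, \left\|\tau_t^{\Lambda_r}(\partial_s h) - \tau_t^{\Lambda_{r-1}}(\partial_s h)\right\|\, dt,$$
I would split the integral into a short-time region $|t| \leq t^*(r)$ and a long-time region $|t| > t^*(r)$, with cutoff $t^*(r) = \gamma r$ for a small positive constant $\gamma$ to be fixed.

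For the short-time region I would invoke the Lieb-Robinson comparison bound derived just above, which gives an integrand of size at most $\|F\|_\infty \, C_{LB}\, P(r)\, \|\partial_s h\|\, |\supp(\partial_s h)|\, e^{-\mu(r-a) + 2s|t|}$. Integrating over $|t| \leq \gamma r$ produces a factor proportional to $P(r)\, e^{-(\mu - 2s\gamma) r}$, which is exponentially small in $r$ as soon as $\gamma < \mu/(2s)$. For the long-time region I would use the trivial operator-norm bound $\|\tau_t^{\Lambda_r}(\partial_s h) - \tau_t^{\Lambda_{r-1}}(\partial_s h)\| \leq 2\|\partial_s h\| \leq 2 C_N$, reducing the bound to $2 C_N \int_{|t|>\gamma r} |F(\Delta t)|\,dt$, which by the super-polynomial decay of $F$ from Lemma \ref{fourier} (together with $t^* \propto r$) decays faster than any polynomial in $r$.

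Summing the two contributions gives super-polynomial decay of $\|k_r(s)\|$, and since $h$ was an arbitrary local Hamiltonian term centred at the origin, translation invariance yields the same estimate for every $k_v(s)$. This establishes $K(s) = \sum_v \sum_{r\geq 0} k_{v,r}(s)$ with the required decay, i.e.\ the quasi-locality of $K(s)$. The main subtlety is the balance between the two regions: the Lieb-Robinson factor $e^{2s|t|}$ forces the cutoff $t^*$ to grow only linearly (and with small slope) in $r$, while the filter tail must still be small at this cutoff. This tension is precisely why $K(s)$ comes out quasi-local rather than strictly exponentially local: the spectral constraint $\hat{F}(\omega) = -1/\omega$ for $|\omega| \geq \Delta$ in Lemma \ref{fourier} forbids $F$ from decaying exponentially, so only super-polynomial decay of $k_r(s)$ is attainable through this split.
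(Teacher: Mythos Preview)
Your proof is correct and follows essentially the same route as the paper: the same telescoping decomposition $k_r(s)$, the same linear-in-$r$ time cutoff, the Lieb-Robinson comparison bound for short times, and the trivial $2\|\partial_s h\|$ bound combined with the super-polynomial tail of $F$ for long times. Your concluding remark on why only super-polynomial (not exponential) decay is attainable is a nice addition not spelled out in the paper.
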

\begin{proof}
After the previous discussion it suffices to show that the norm of the operators $k_r(s)$ defined in \eqref{decayingterms} decays quickly in $r$. We have that
\begin{align}
\|k_r(s)\| & = \left\|\int_{\mathbb{R}}dt F(\Delta t)\left(\tau_t^{\Lambda_r}(\partial_s h(s))-\tau_t^{\Lambda_{r-1}}(\partial_s h(s))\right) \right\|\\
&\leq 2\int_0^{\infty}dt|F(\Delta t)|\left\|\tau_t^{\Lambda_r}(\partial_s h(s))-\tau_t^{\Lambda_{r-1}}(\partial_s h(s)) \right\|\\
&\leq 2C_NC_SC_{LB}P(r)\int_0^{cr}dt|F(\Delta t)| e^{-\mu(r-C_S/2)+2s|t|} + 4\int_{cr}^{\infty}dt|F(\Delta t)| \|\partial_s h(s)\|\\
&\leq \tilde{C}P(r)\|F\|_{\sup}\frac{1}{s} e^{- r(\mu-2sc)} + 4C_N\int_{cr}^{\infty}dt|F(\Delta t)|.
\end{align}
This last expression clearly vanishes superpolynomially in $r$. Indeed, for the right choice of the constant $c$, say $c \leq \mu/(2s)$ the first term decays exponentially. For a superpolynomially decaying filter function $F$, the last term also decays superpolynomially, although this decay typically only starts for quiet large values of $r$. Notice that this result implies a Lieb-Robinson bound for the interaction $K(s)$  similar to Theorem \ref{theoremLB2}.
\end{proof}
This result remains valid if we start with quasi-local Hamiltonians $H(s)$ as long as the decay function $f$ of $H(s)$ is super-polynomially, $\tilde{f}(R)=\sum_{r>R}f(r)$ decays super-polynomially and the volume of balls of radius $r$ only grows polynomially in $r$. Then, the decay function of the generator $K(s)$ still decreases super-polynomially. Hence, Proposition \ref{genlocal} remains valid for the situation described in Definition \ref{qapath}.
\end{document}